\documentclass{article}
\usepackage[utf8]{inputenc}
\usepackage[margin=1.2in,footskip=0.25in]{geometry}
\usepackage{amsmath, amsthm, amscd, amsfonts, amssymb, graphicx, color}
\usepackage[bookmarksnumbered, colorlinks, plainpages]{hyperref}
\usepackage[table]{xcolor}
\usepackage{tikz}
\usepackage{float}
\usepackage{verbatim}
\usepackage{overpic}
\usepackage{subcaption}
\usepackage{mathtools,amsmath}
\usepackage[title]{appendix}
\usepackage{commath}
\usepackage{graphicx}
\usepackage{enumerate}
\usepackage{amssymb}
\usepackage{amsthm}
\usepackage{dsfont}
\usepackage{subcaption}
\usepackage{mdframed}
\usepackage{dirtytalk}
\usepackage{algorithm,algpseudocode}
\usepackage{footnotehyper}
\makesavenoteenv{algorithm}
\usepackage[normalem]{ulem}
\usepackage{tabularx} 
\usepackage[most]{tcolorbox}
\usepackage{authblk}
\usepackage{makecell}

\usepackage[backend=biber,style=numeric,sorting=none,maxbibnames=3,maxcitenames=2]{biblatex}
\usepackage{tikz}
\usetikzlibrary{backgrounds}

\tcbset{
  mygraybox/.style={
    colback=gray!10,       
    colframe=gray!60,      
    arc=3mm,               
    boxrule=0.8pt,         
    left=6pt, right=6pt,   
    top=6pt, bottom=6pt,   
    width=\linewidth,      
    sharp corners=uphill,  
  }
}
\usepackage[nameinlink,capitalize,noabbrev]{cleveref}
\hypersetup{
  pdfpagemode=UseNone,
  colorlinks=true,
  citecolor=blue,
  linkcolor=blue,
  urlcolor=blue
}
\usepackage{thmtools}
\usepackage{thm-restate}

\declaretheorem[name=Theorem]{thm}
\declaretheorem[name=Lemma,sibling=thm]{lemma}

\DeclarePairedDelimiter\ket{\lvert}{\rangle}

\DeclarePairedDelimiter\ceil{\lceil}{\rceil}

\usepackage{multirow}

\usepackage{siunitx}

\crefname{section}{Section}{Sections}
\crefname{subsection}{Subsection}{subsections}
\crefname{theorem}{Theorem}{Theorems}
\crefname{corollary}{Corollary}{Corollaries}
\crefname{lemma}{Lemma}{Lemmas}
\crefname{appendix}{Appendix}{Appendices}
\crefname{definition}{Definition}{Definitions}
\crefname{equation}{Eq.}{Eqs.}
\crefname{algorithm}{Algorithm}{Algorithms}
\crefname{table}{Table}{Tables}

\title{Compiling the 2D Fermi-Hubbard ground-state energy estimation algorithm for active volume quantum architectures}

\author{Harriet Apel\thanks{Corresponding author: hapel@psiquantum.com; these authors contributed equally.}}
\author{Athena Caesura}
\author{Carys Harvey}
\author{Sam Heavey}
\author{Angus Kan}
\author{Jessica Lemieux}
\author{Ryan Levy}
\author{Sam Pallister}
\author{Joseph Peetz}
\author{William Pol}
\author{Sukin Sim\thanks{Corresponding author: ssim@psiquantum.com; these authors contributed equally.}}
\author{William A. Simon}
\author{Mark Steudtner}
\author{Gideon Uchehara}
\affil{PsiQuantum, 700 Hansen Way, Palo Alto, CA 94304, USA}
\date{\today}

\begin{document}

\maketitle

\begin{abstract}
As quantum computing enters the early fault-tolerant era, circuit compilation choices will increasingly depend on details of the underlying architecture rather than solely optimizing for generic proxies such as non-Clifford count.
We present an active-volume-aware compilation of the ground-state energy estimation algorithm for the two-dimensional Fermi--Hubbard model using quantum phase estimation and Trotterized time evolution.
The proposed compilation reduces the active volume across $L\times L$ square lattices with $L=4$ to $20$, achieving up to a $3.9\times$ reduction over prior work optimized for non-Clifford cost.
As a by-product of these compilation improvements, the resulting circuits also achieve state-of-the-art Toffoli counts, with a $\sim 2\times$ reduction for the $L=20$ case. 
Lastly, the active volume architecture and recent execution scheduling advances provide a means of translating these reduction trends into runtime.
This demonstrates the increasing importance of architecture-aware compilation for practical early fault-tolerant quantum computing.
\end{abstract}

\setcounter{tocdepth}{2} 
\tableofcontents

\section{Introduction}

Quantum computing has the potential to transform applications ranging from simulation of many-body systems and chemistry to cryptography and optimization.
While many of these applications ultimately require large scale fault-tolerant quantum computers, recent advances in quantum error correction and quantum processors are starting to bring the first generation of fault-tolerant quantum computers within reach~\cite{psi-arch-paper,google-below-threshold, paetznick2024demonstrationlogicalqubitsrepeated, Bulvstein2024}.
Together with continued algorithmic progress, these developments motivate the study of \emph{early generation fault-tolerant quantum computers}: devices in which both logical memory and computational time are severely constrained, making every logical resource a scarce commodity.
This regime lies between noisy intermediate-scale quantum (NISQ) devices and large scale fault-tolerance quantum computers.
Developing practical applications for first fault-tolerant quantum computers (FTQCs) requires reducing the quantum resource requirements of algorithms by balancing asymptotic scaling, constant factors, and architecture-aware design.

Simulation of fermionic lattice models is one of the flagship applications of quantum computing.
It has motivated experimental demonstrations on current quantum hardware~\cite{alam2025fermionicdynamicstrappedionquantum,alam2025programmabledigitalquantumsimulation,arute2020observationseparateddynamicscharge,Stanisic_2022,Chowdhury_2026, hartnett2026fastaccuratehighresolutionsimulation}, alongside extensive algorithmic development for fault-tolerant quantum computers~\cite{campbell2022early,kan2024resource-optimized,babbush2018encoding,kivlichan2020improved}.
Lattice Hamiltonians are particularly well-suited to the early fault-tolerant regime as despite being specified by relatively few local interactions, they capture rich many-body physics which can be systematically extended towards realistic models of materials and rapidly become intractable for classical simulation. 
Among these, the two-dimensional Fermi-Hubbard model~\cite{Hubbard1963} has emerged as a canonical benchmark for quantum resource estimation, routinely used to evaluate the capabilities of proposed fault-tolerant quantum architectures on a practically relevant application~\cite{webster2026pinnaclearchitecturereducingcost, ismail2026fast, chung2026partiallyfaulttolerantquantumcomputation, bourdoncle2026, khan2026architectingearlyfaulttolerant, hartnett2026fastaccuratehighresolutionsimulation,PRXQuantum.3.010329}.

Fault-tolerant algorithm development has traditionally focused on minimizing the number of non-Clifford operations, using $T$ gates as a proxy for execution cost.
However, as early fault-tolerant architectures evolve, this proxy is becoming increasingly disconnected from the true cost of computation,
motivating architecture-aware resource metrics~\cite{litinski2022activevolume, litinski2025blocklet, FLASQ, tqec, bourdoncle2026}.
One such metric is \emph{active volume}, modeling the execution cost on fault-tolerant architectures that support limited non-local logical connectivity, such as photonic platforms and other architectures with flexible long-range interactions~\cite{litinski2022activevolume}.
Active volume compilation has already yielded substantial reductions in the resource requirements of cryptographic and quantum chemistry applications~\cite{litinski2022activevolume,litinski2023compute256bitellipticcurve,ECC_activevolume,caesura2025faster}. This illustrates architecture-aware compilation translating the increased flexibility of non-local connections into practical reductions in application cost.
However, no active-volume-aware compilation of an FTQC lattice Hamiltonian algorithm has yet been reported.
In this work, we address this gap by compiling the ground-state energy estimation algorithm for the two-dimensional Fermi-Hubbard model with active volume as the primary optimization objective.

We apply these ideas to the problem of estimating the ground-state energy of the two-dimensional Fermi--Hubbard model using quantum phase estimation with Trotterized time evolution focusing on the $20\times20$ lattice size.
We first demonstrate a $12.1\times$ reduction in cost enabled by the active volume architecture compared with conventional circuit volume estimates, highlighting the impact of architectural design on practical resource requirements.
The principal contribution of this work is then to demonstrate a further $3.9\times$ reduction through architecture-aware compilation.
While the compilation is optimized for active volume, the substantial contribution of Toffoli gates to this metric means that the resulting circuits also establish the lowest reported Toffoli counts for this problem.
Finally, we combine these optimized circuits with the recently proposed active volume scheduler \cite{heavey2026scheduler} to demonstrate the reduction in runtime.
For the $20\times20$ Fermi-Hubbard ground-state energy algorithm, we show how increasing the workspace qubit count can significantly reduce the runtime up until the reaction limit.
Together, these results highlight the importance of co-design across the quantum computing stack, showing how algorithms, architectures, compilation, and execution strategies combine to make lattice Hamiltonian simulation increasingly practical on early FTQCs. 

The remainder of this paper is organized as follows.
\cref{sect:background} reviews the Fermi--Hubbard ground-state energy estimation algorithm and introduces the active volume architecture and resource metric.
We then present our quantum resource estimates in \cref{sect:results}, emphasizing the impact of the compilation techniques before describing them in detail. \Cref{sect:compilation} subsequently presents the compilation methods underlying these improvements.
Finally, \cref{sect:conclusions} discusses future directions and concludes.
We also highlight \cref{appen:av-subroutine} which includes a worked example of how to derive the active volume of a quantum subroutine, we hope serves as a practical tutorial for future active volume resource estimation.

\section{Background}\label{sect:background}

\subsection{Quantum simulation of the Fermi-Hubbard model}

\paragraph{The Fermi-Hubbard model and applications} The single band Fermi-Hubbard model on a 2D lattice of size $L_x \times L_y$ is given by 
\begin{equation}
    H_\text{FH} = -t\sum_{\langle i,j\rangle \sigma} (c^\dagger_{i\sigma} c_{j\sigma} + c^\dagger_{j\sigma}c_{i\sigma}) + u \sum_i (n_{i\uparrow}-\mathds{1}/2)(n_{i\downarrow}-\mathds{1}/2)
    \label{eq:fhm}
\end{equation} 
where $c_{i\sigma}$ ($c^\dagger_{i\sigma}$) denotes a fermionic annihilation (creation) operator of spin $\sigma$ at site $i$, $n_{i \sigma}$ is the fermionic number operator, $u$ quantifies the interaction strength, $t$ denotes the hopping amplitude and $\langle i,j\rangle$ denotes nearest neighbors. 
For our simulation we take $t=1$, $u=8$. We denote the first term in \cref{eq:fhm} as the kinetic or hopping term, and the second term as the interaction term. 
In \cref{eq:fhm} the bare interaction term $u\sum_i n_{i\uparrow}n_{i\downarrow}$ is shifted by $-\frac{u}{2}\sum_i(n_{i\uparrow}+n_{i\downarrow}) +\frac{u}{4} \mathds{1}$, which commutes with the Hamiltonian and shifts the energy spectrum by a known constant $(-uN/4+u/4)$ where $N$ is the total number of electrons.
This shift, studied in \cite{campbell2022early}, facilitates a cheaper quantum implementation of the Hamiltonian.
We assume periodic boundary conditions (PBC) in both directions.
The model's ground state remains difficult to 
numerically simulate with classical methods 
due to the competition between a stripe ordered ground state and superconducting correlations \cite{Qin2020,sorella2023systematically, roth2025superconductivity}, with a lattice size of $20\times 20$ proposed as an important scale to resolve finite-size effects \cite{Agrawal2024}.
This model can serve as a stepping stone towards more generalized lattice models with richer physics such as 
the cuprate, pnictide, or extended models 
\cite{kan2024resource-optimized, baysmidt2025faulttolerant}.

\paragraph{Extensive error} The ground-state energy of the Fermi–Hubbard model is extensive in the number of lattice sites, whereas the energy per site is finite in the thermodynamic limit. This motivates choosing an extensive error budget for the total energy of $0.0051tL_xL_y$ so that the error in the energy density remains fixed with system size~\cite{kan2024resource-optimized,campbell2022early,kivlichan2020improved}.
This error is consistent with the error observed in classical simulations for an exactly solvable system size at what is considered the toughest regime ($u/t=8$ and electron filling per site $n=0.875$) \cite{leblanc2015solutions}. 

\paragraph{High-level algorithm} 
We employ quantum phase estimation (QPE) to estimate the ground-state energy through the eigenphase of the time evolution operator generated by the lattice Hamiltonian.
To achieve state-of-the-art scaling with respect to accuracy, the operator should be constructed by linear combination of unitaries (LCU) techniques \cite{lcu2012}, such as qubitization \cite{Low2019hamiltonian, OptimalQSPHS}.
However, the extensive error budget makes the required accuracy progressively looser with increasing lattice size.
For the larger lattices considered in this work, beginning beyond the exactly tractable regime (e.g. square lattice with $L\gtrsim 6$), the asymptotic advantage in simulation scaling is therefore not the dominant consideration.
Therefore, consistent with prior work, we adopt second-order Suzuki--Trotter for the time evolution, as previous analyses have found it to be the most competitive approach in the regime considered here~\cite{campbell2022early,kan2024resource-optimized}.
Although the gate cost per Trotter step scales extensively as $O(L^2)$, the extensive error budget requires fewer Trotter steps with lattice size, almost exactly compensating for this scaling as noted in~\cite{campbell2022early}.
As a result, the overall gate cost remains approximately constant across the lattice sizes considered here.
This contrasts with LCU-based approaches, where the effective one-norm scaling results in an increasing cost with system size -- see \cite[Fig. 2]{kan2024resource-optimized} for an example of this in Toffoli resource counts.

\paragraph{Plaquette Suzuki-Trotter scheme}
We approximate the time-evolution operator $U(\tau) = \mathrm{e}^{\mathrm{i}H\tau}$ using a second-order Suzuki–Trotter decomposition~\cite{suzuki1990fractal, suzuki1991general}:
\begin{align}\label{eqn:trotter-intro}
    U(\tau) \approx \left(\tilde{U}_{\tau/r}\right)^r := \left(\prod_{k=1}^P \mathrm{e}^{\mathrm{i} H_k \tau / 2r}\prod_{k=P}^1 \mathrm{e}^{\mathrm{i} H_k \tau / 2r}\right)^r
\end{align}
where $r$ denotes the number of Trotter steps and the generating Hamiltonian is decomposed into a sum of terms $H = \sum_{k=1}^P H_k$. 

Reference \cite{campbell2022early} identifies a splitting of the Hamiltonian that is particularly efficient from both an implementation standpoint and a commutation bound: the onsite interaction term $H_I$ and two disjoint portions of the hopping term $H_h = H_h^p + H_h^g$; see \cref{fig:fh_plaq} for an illustration.
\begin{equation}
    H_\textup{FH} = H_I + H_h^p + H_h^g
\end{equation}
where
    \begin{align}
    H_I &= u\sum_{i} (n_{i\uparrow}-1/2)(n_{i\downarrow}-1/2) \label{eqn:int}\\
    H_h^p & =-t \sum_{\langle i j\rangle\in \text{pink}, \sigma }( c_{i\sigma}^\dagger c_{j\sigma} + c_{j\sigma}^\dagger c_{i\sigma}) \label{eqn:pink}\\
    H_h^g & = -t \sum_{\langle i j\rangle \in \text{gold}, \sigma}( c_{i\sigma}^\dagger c_{j\sigma} + c_{j\sigma}^\dagger c_{i\sigma}).\label{eqn:gold}
\end{align}

\begin{figure}[h!]
    \centering
    \includegraphics[width=0.35\linewidth]{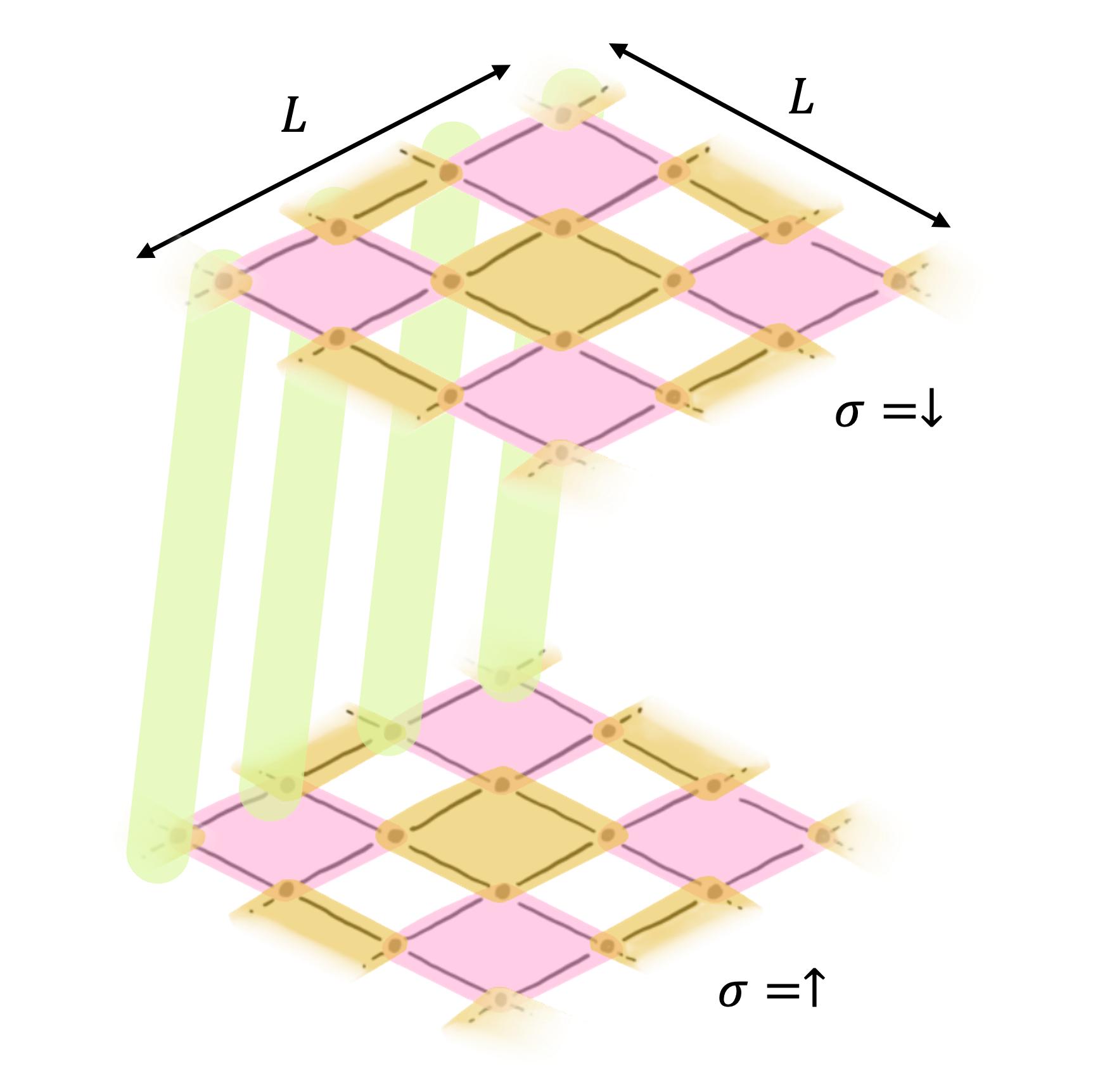}
    \caption{ The Fermi-Hubbard lattice Hamiltonian. The model Hamiltonian considered is given in \cref{eq:fhm} where $i$ indexes lattice sites on a $L \times L$ square lattice with two spin sectors $\sigma\in\{\uparrow, \downarrow \}$. $\langle i, j\rangle$ denotes nearest neighbors on the lattice drawn as edges. For the product formula the Hamiltonian is split into three components. The interaction term \cref{eqn:int} is shown in green and is between fermions on the same site across the two spin sectors. The kinetic term is split into two disjoint sets of plaquettes shown in pink \cref{eqn:pink} and gold \cref{eqn:gold} where the shaded plaquette indicates a grouping of four edges. 
}
    \label{fig:fh_plaq}
\end{figure}

\subsection{Compiling to an active volume architecture}\label{sect:avprimer}

Determining the cost of a quantum computation ultimately requires specifying a target architecture.
However, as leading platforms continue to evolve, it is important to develop compilation strategies guided by the relevant resource metrics at our disposal.
A useful resource metric proxy should reflect the dominant costs across leading architectures while remaining tractable to evaluate for large-scale algorithms.
This introduces an inherent trade-off between architectural specificity and accessibility in resource estimation.

Most early work in fault-tolerant quantum computing assumes surface-code-based encodings \cite{Litinski_2019, Fowler2018LowOQ, KITAEV20032, Bravyi:1998sy, PhysRevA.86.032324}. 
Within this framework, non-Clifford $T$ gates require the distillation of magic states which dominate the cost of many algorithms. 
To simplify resource estimation, prior literature has often focused on metrics such as $T$-count and logical qubit count, typically neglecting the cost of Clifford operations. 
One volumetric cost based on this is the \emph{circuit volume} $=$ $T$ count $\times$ logical memory qubits.
We note that circuit volume is a representative metric for Pauli-based computation \cite{Litinski_2019}. For architectures that improve upon the Pauli-based computation framework, e.g. parallelizing Cliffords, non-Cliffords, and/or magic state factories, circuit volume becomes a less representative cost proxy.

An alternative metric that includes both Clifford and non-Clifford cost is the computational or spacetime volume, which can be decomposed into two components:
\begin{enumerate}
    \item[] \emph{Active volume} — the computational volume associated with implementing logical operations.
    \item[] \emph{Idle volume} — the computational volume incurred when qubits remain idle while other operations are performed.
\end{enumerate}

In this work, we will use the active volume of the circuit as the resource metric assuming an active volume architecture and make compilation choices based on this.
In some (but not all) cases we will see that this aligns with other proxies such as 
$T$ count but motivates a more Clifford-aware perspective.
In the next sections we will motivate why this is an appropriate metric for certain early generation fault tolerant architectures and recap how it is calculated. 

\paragraph{Active volume architectures} 
A common quantum architecture consists of static physical qubits arranged in a 2D array with physical operations restricted to nearest neighbors \cite{Litinski_2019}. 
Encoding these physical qubits into surface code patches results in idling logical qubits requiring continuous stabilizer measurements and therefore incurs a cost comparable in scaling to performing logical operations. 
Moreover, implementing non-local logical operations requires large ancillary patches to bridge non-neighboring qubits. 
As a result, most of the computational volume in a generic circuit is from idle volume, making it a natural target for resource reduction.
The active volume architecture proposed in \cite{litinski2022activevolume} introduces non-local connectivity to largely eliminate this overhead.
It instead distinguishes ``data'' qubits used to store the computational state from ``workspace'' qubits that provide the ancillary surface-code patches needed to perform lattice-surgery operations.
By reusing this workspace as the computation proceeds, the volume associated with idle storage and routing is significantly reduced. 

The active volume architecture does not require all-to-all connectivity between $N$ logical qubits; rather, each qubit connects to only $O(\log N)$ others~\cite{litinski2022activevolume}.
In the surface-code picture, this corresponds to allowing transversal two-qubit operations only between specific pairs of surface-code patches. More broadly, active volume is compatible with a wide range of fault-tolerant schemes, including LDPC codes and concatenated codes.
Although initially motivated by silicon photonics \cite{litinski2022activevolume}, these limited long-range connections are key to efficiently route qubits and facilitate parallel execution of logical operations. 
As such, active volume provides a natural metric for platforms with flexible or non-local interactions, including photonic architectures (both fusion-based and GKP-encoded) and neutral-atom systems \cite{Webber_2022}.
It is also relevant for emerging superconducting approaches incorporating long-range couplers \cite{Yoder2025}.
In contrast, for predominantly local architectures, active volume remains a useful metric, particularly in capturing Clifford costs, but may lead to different optimization trade-offs compared to locality-dominated spacetime metrics (e.g. differing conclusions for Hamming weight phasing in \cref{subsec:hwp} compared to \cite{FLASQ}).

\paragraph{Active volume as a metric}

Each logical operation has an associated
active volume which is measured in units of logical blocks.
The total active volume of the circuit can be calculated by summing up the active
volumes of all constituent operations.
The number of logical blocks required to implement a gate will vary from operation to operation.
In this work, unless stated otherwise, the number of logical blocks associated with basic operators is taken from \cite[Table 1]{litinski2022activevolume}.
For example, a $\mathrm{CNOT}$ will require 4 logical blocks whereas a $\mathrm{Toffoli}$ requires 47 logical blocks (assuming conservative assumptions regarding the cost of CCZ distillation).
Though meaningful, for many applications this gap is not large enough to treat Clifford operations as ``free,'' demonstrating the need for Clifford-aware compilation.
This benefit is subroutine dependent, for example arithmetic circuits are especially active volume efficient, with adders exhibiting a significantly lower per-Toffoli active volume cost than data loading \cite{litinski2022activevolume}.
We will see examples of this in \cref{appen:av-subroutine}.
Therefore, even $\mathrm{T}$-count-preserving compilation choices can impact the active volume due to differing Clifford overheads.
 
Whereas the computational volume scales with the number of qubits, active volume does not since in practice a large proportion of the former is idle.
Key quantum algorithms have seen significant reductions in resources simply by leveraging active volume compilation: $44\times$ improvement in spacetime volume in RSA factorization \cite{litinski2022activevolume}; $300-700\times$ in elliptic curve
cryptography \cite{litinski2023compute256bitellipticcurve}, $2-20\times $ in binary curve
cryptography \cite{ECC_activevolume}, as well as $25\times$ in chemistry problems \cite{caesura2025faster}.
Materials science is one of the most promising applications of early fault tolerance; this work provides the first resource estimates for lattice Hamiltonians for an active volume architecture.

\section{Results}\label{sect:results}

We first present the impact of the architecture-aware compilation techniques proposed in this work.
We compare conventional circuit-volume estimates with active-volume estimates before quantifying the reductions achieved through architecture-aware compilation.
Throughout this section, we consider the total cost of all QPE queries required to estimate the ground-state energy to an extensive error of $0.0051tL_xL_y$.
Unlike a number of previous resource estimates that neglect the controlled implementation of the time-evolution circuits, our estimates explicitly include the control structure required for QPE.
To improve the reproducibility of our resource estimates, all circuits are implemented in the open-access PsiQuantum software package~\cite{psiquantum_qdk}, which compiles the circuits, reports gate counts, and computes active volume using an implementation of the lookup tables of Ref.~\cite{litinski2022activevolume}, extended as described in~\cref{sect:compilation}.

As a baseline, we consider the state-of-the-art circuit from Ref.~\cite{kan2024resource-optimized}\footnote{We restrict the number of QPE queries to $6\times2^l$ for integer $l$. Although Ref.~\cite{kan2024resource-optimized} reports query counts not of this form, the associated directionally controlled implementation is not specified. In addition, the numerical Holevo-variance results that we use to determine the required number of queries are reported only for the power-of-two schedules considered in Ref.~\cite{higgins2007entanglement}. Since we apply the same restriction to both the baseline and improved circuits it does not bias their comparison.}.
This baseline reflects compilation choices optimized to minimize T-count, providing a direct comparison against active-volume-aware compilation decisions made in this work.
While such analytical expressions are tractable for gate counts, deriving corresponding active volume expressions quickly becomes unwieldy, motivating software-assisted resource estimation used throughout this work. ~\cref{appen:qre_data} provides additional data for reproducibility, including a detailed specification of the baseline circuit and verification of our implementation against analytically computed Toffoli counts.

\begin{figure}[H]
\centering
\includegraphics[width=0.85\textwidth]{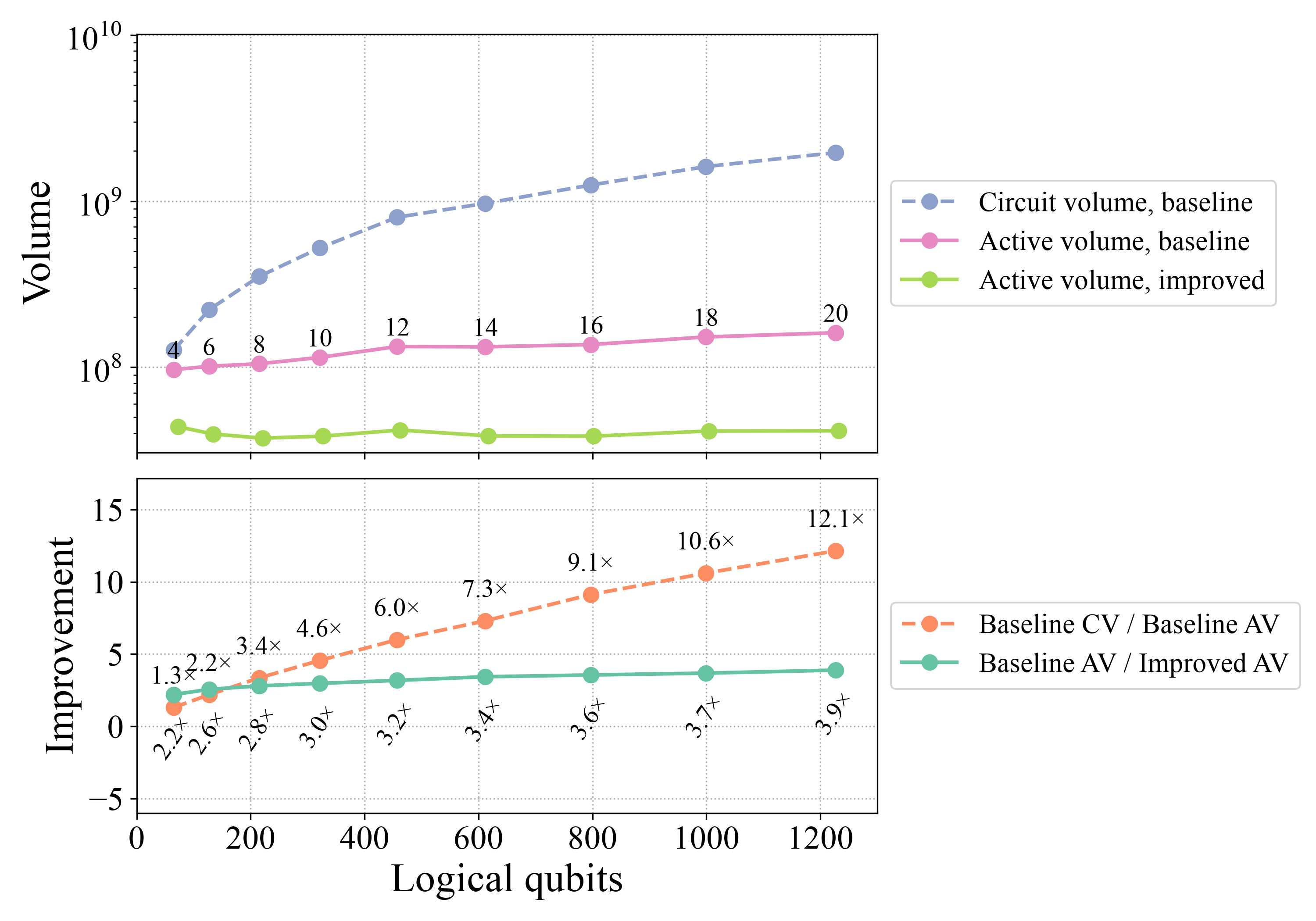}
\caption{
Comparison of computational volume of 2D square Fermi-Hubbard ground-state energy estimation for the baseline and improved implementations against the logical qubit count. The logical qubit count includes both the system qubits and the auxiliary qubits required to implement a single batch of Hamming-weight phasing. The numbers above the baseline active volume data points indicate the lattice size.
\textbf{Top}: Absolute circuit volume (as defined in Ref.~\cite{litinski2022activevolume}) and active volume resource estimates for the total cost of QPE algorithm achieving a $0.0051t L^2$ error.
\textbf{Bottom}: Relative speedups obtained by first replacing circuit volume with active volume as the resource metric (orange), and then by applying the compilation techniques proposed in this work (green). Circuit volume (CV) exceeds the baseline active volume (AV) by up to $12.1\times$ for the $20 \times 20$ lattice. The proposed compilation provides a further reduction in active volume of up to $3.9\times$. 
}
\label{fig:comparing_volumes}
\end{figure}

We first compare conventional circuit volume and active volume estimates in~\cref{fig:comparing_volumes}, to assess how the choice of resource metric affects the estimated cost of the algorithm.
As the logical qubit requirements of the algorithm increase, circuit volume increasingly overestimates the execution cost on an active volume architecture, reaching a factor of $12.1\times$ for the $20\times20$ lattice.
Consistent with previous active volume studies in quantum chemistry and cryptography, these results confirm circuit volume is a poor proxy for the execution cost on architectures with some non-local connectivity.
We therefore compare the baseline and improved circuits in terms of active volume, as shown by the remaining curves in~\cref{fig:comparing_volumes}.
We observe up to a $3.9\times$ reduction for the $20 \times 20$ lattice, demonstrating the further benefit of our compilation choices. 
This improvement is achieved through reductions in both Clifford and non-Clifford resource requirements, highlighting the benefits of optimizing directly for spacetime cost rather than a single gate-count metric.

\begin{figure}[H]
\centering
\includegraphics[width=0.95\textwidth]{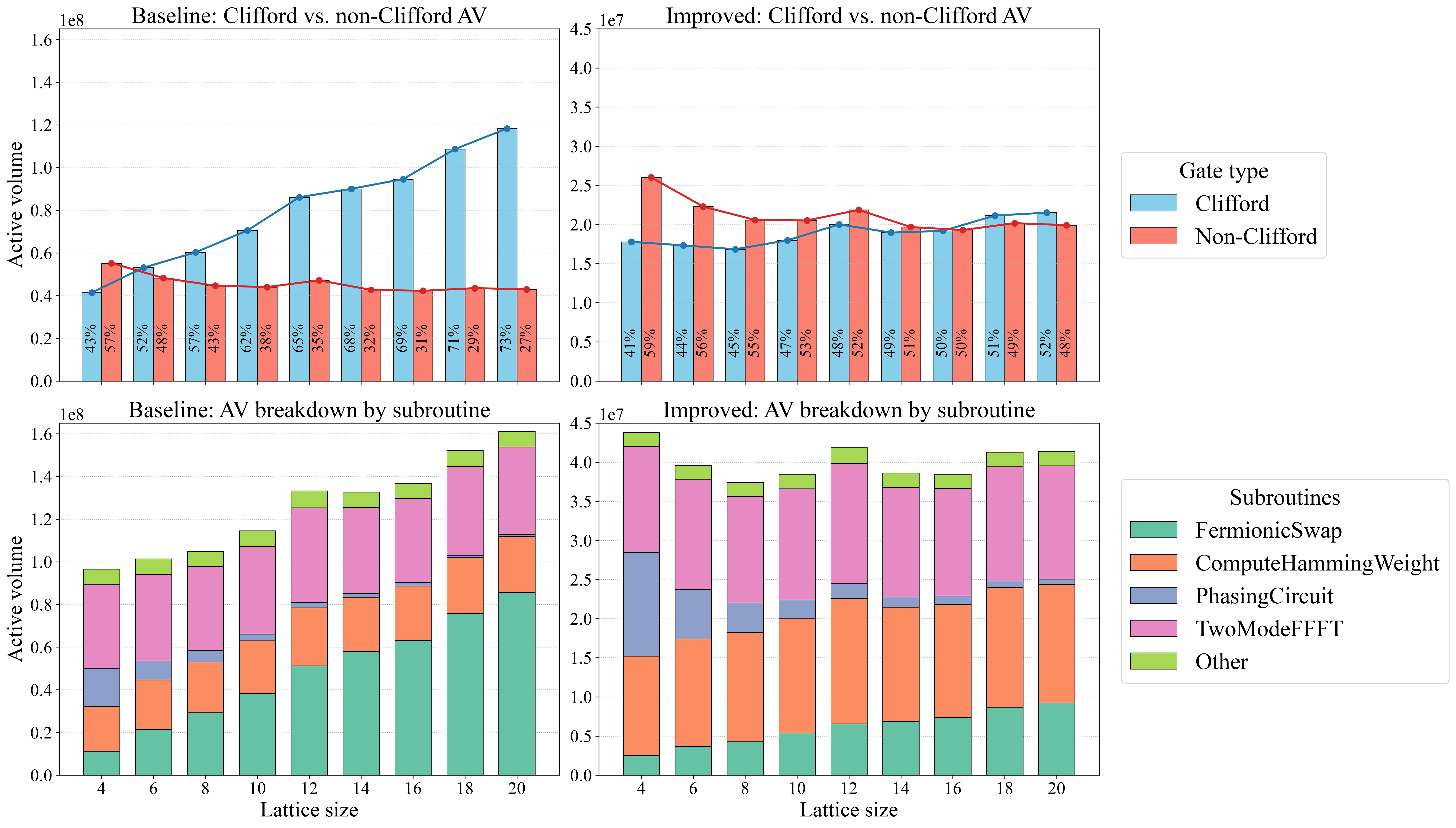}
\caption{
Active volume (AV) breakdown. 
\textbf{Top row:} Breakdown of the active volume into Clifford and non-Clifford contributions. The proposed compilation techniques reduce both components, resulting in a substantially more balanced distribution of execution cost.
Note that plots on the left column are active volume breakdowns of the baseline circuits while the plots of the right column are those of the improved circuits.
\textbf{Bottom row:} Active volume breakdown of the baseline and improved circuits by major subroutine. The dominant contributions are grouped according to fermionic-mode routing (``FermionicSwap''), implementing towers of rotations (``ComputeHammingWeight'' and ``PhasingCircuit''), and hopping-term basis changes (``TwoModeFFFT''). Remaining operations are grouped into ``Other.'' The reductions are distributed across the subroutines, with the largest improvement occurring in the fermionic swap subroutine. The breakdown is obtained using data generated by~\cite{psiquantum_qdk} for callgraphs (see \cite{caesura2025faster} for an example).
}
\label{fig:av_breakdowns}
\end{figure}

To understand the origin of the reduction, the top row of~\cref{fig:av_breakdowns} separates the total active volume into its Clifford and non-Clifford contributions.
Unlike gate-count metrics, active volume places Clifford and non-Clifford operations on a common footing by accounting for their relative execution cost (e.g., a Toffoli gate occupies 47 logical blocks compared with 4 for a CNOT).
This enables a direct comparison of their contributions to the quantum resources.
The proposed compilation techniques reduce both components.
The largest absolute reduction is obtained in the Clifford contribution, reflecting that the baseline circuit was primarily optimized to minimize T-count and therefore left greater opportunity to reduce the Clifford overhead.
Nevertheless, the non-Clifford active volume is also reduced through the combined effect of reducing the number of QPE queries, improved Trotter ordering, and more efficient error budgeting.
Ref. \cite{campbell2022early} observed that under a fixed extensive error target the Toffoli count becomes approximately independent of lattice size.
As the system grows, the increasing gate count per Trotter step is nearly offset by the reduction in the number of Trotter steps permitted by the larger error budget.
The baseline active volume does not exhibit this favorable scaling because the increasing Clifford overhead dominates the active volume.
After compilation improvement, however, this behavior is largely recovered, with the total active volume remaining nearly constant over the lattice sizes considered at around $4\times 10^7$ logical blocks.
For the $20\times 20$ lattice size the contributions to active volume from Clifford and non-Clifford operations approaches an approximately 1:1 split, indicating that further reductions in execution cost will require continued optimization of both the Clifford and non-Clifford components of the algorithm.

The bottom row of \cref{fig:av_breakdowns} attributes the active volume to the major subroutines.
Those reductions are distributed throughout the algorithm rather than arising from a single optimization.
The largest reduction is observed in the fermionic swap network, consistent with it being the dominant source of active volume in the baseline implementation.
However, appreciable reductions are also obtained in other subroutines, demonstrating that the improvements described in \cref{sect:compilation} contribute throughout the circuit.
For the baseline implementation, the active volume is increasingly dominated by the fermionic swap network, which ultimately determines the scaling of the execution cost.
In contrast, the optimized circuit exhibits a much more balanced distribution of active volume across the subroutines, with no single component overwhelmingly dominating the overall cost.

Lastly, we incorporate a common space-time trade-off in which operations are executed in batches so that ancillary qubits can be reused, reducing the peak logical-qubit requirement at the cost of additional circuit depth.
Although the majority of logical qubits are set by the lattice size, our realization of parallel equiangular rotations via Hamming weight phasing introduces an ancilla overhead that can be traded against depth.
We follow~\cite{kivlichan2020improved,campbell2022early,kan2024resource-optimized} in exploiting this by partitioning the parallel rotations into batches and processing them sequentially.
The number of batches is then a tunable compilation parameter.
Four batches reduces the logical qubit requirement for the $L=20$ square lattice by nearly $25\%$ while increasing the active volume by only around $3\%$, making batching an attractive option for qubit-constrained FTQC.
While active volume provides the primary optimization objective in this work, determining the best point on a space-time trade-off requires additional architectural information, as the relative value of qubits and execution time depends on the target machine and its logical-qubit capacity.
Determining the preferred batching strategy requires moving one stage further down the compilation stack to logical cycle scheduling~\cite{heavey2026scheduler}.
We return to this trade-off in detail later (\cref{subsec:batched_hwp}); first, we use the same scheduling framework to assess comparative runtimes of the compiled circuits.

\subsection{Trends in runtime estimates}

Although active volume is a useful resource proxy for optimizing quantum algorithms, wall-clock runtime is a more directly interpretable and widely used performance metric.
However, converting logical resources into runtimes requires explicit assumptions about the underlying hardware.
We therefore report runtime estimates for an illustrative photonic implementation of the active volume architecture~\cite{litinski2022activevolume}.
These estimates are intended to \emph{compare} compilation methods under a common hardware model, rather than to predict the performance of a particular device. 
See~\cref{sec:runtime_data} for more details on the example hardware model considered in this work that is based on Ref.~\cite{litinski2022activevolume}.

To quantify the runtime improvement obtained from the compilation methods introduced in this work, \cref{fig:runtimes} compares our results with the runtime results from the baseline circuit.
For each lattice size, we set the machine's total logical qubit capacity to $1.5$ times the circuit's peak data-qubit requirement. This choice prevents the number of workspace qubits from overwhelming the number of memory qubits, thereby avoiding the reaction-limited regime.\footnote{This refers to the regime in which sequential measurement-dependent basis updates cannot be completed within the time available for a logical cycle, so that reaction depth, rather than available workspace, limits the achievable runtime~\cite{heavey2026scheduler}.}
Taking the $L=8$ lattice as example, the improved circuit budget is 220 logical circuit qubits, compared with 214 for the baseline circuit.
With our qubit allocation, this corresponds to 110 and 107 workspace qubits, respectively, representing a $2.8\%$ increase for the improved circuit.
Despite this small difference, the improved circuit is $73\times$ faster, with a runtime of \(4.57\,\mathrm{s}\) rather than \(332\,\mathrm{s}\).
The qubit counts are closely matched and thus the vast majority of the speedup is due to compilation methods outlined in this paper.

\begin{figure}[H]
    \centering
    \begin{subfigure}[b]{0.47\textwidth}
        \centering
    \includegraphics[width=0.95\linewidth,
    ]{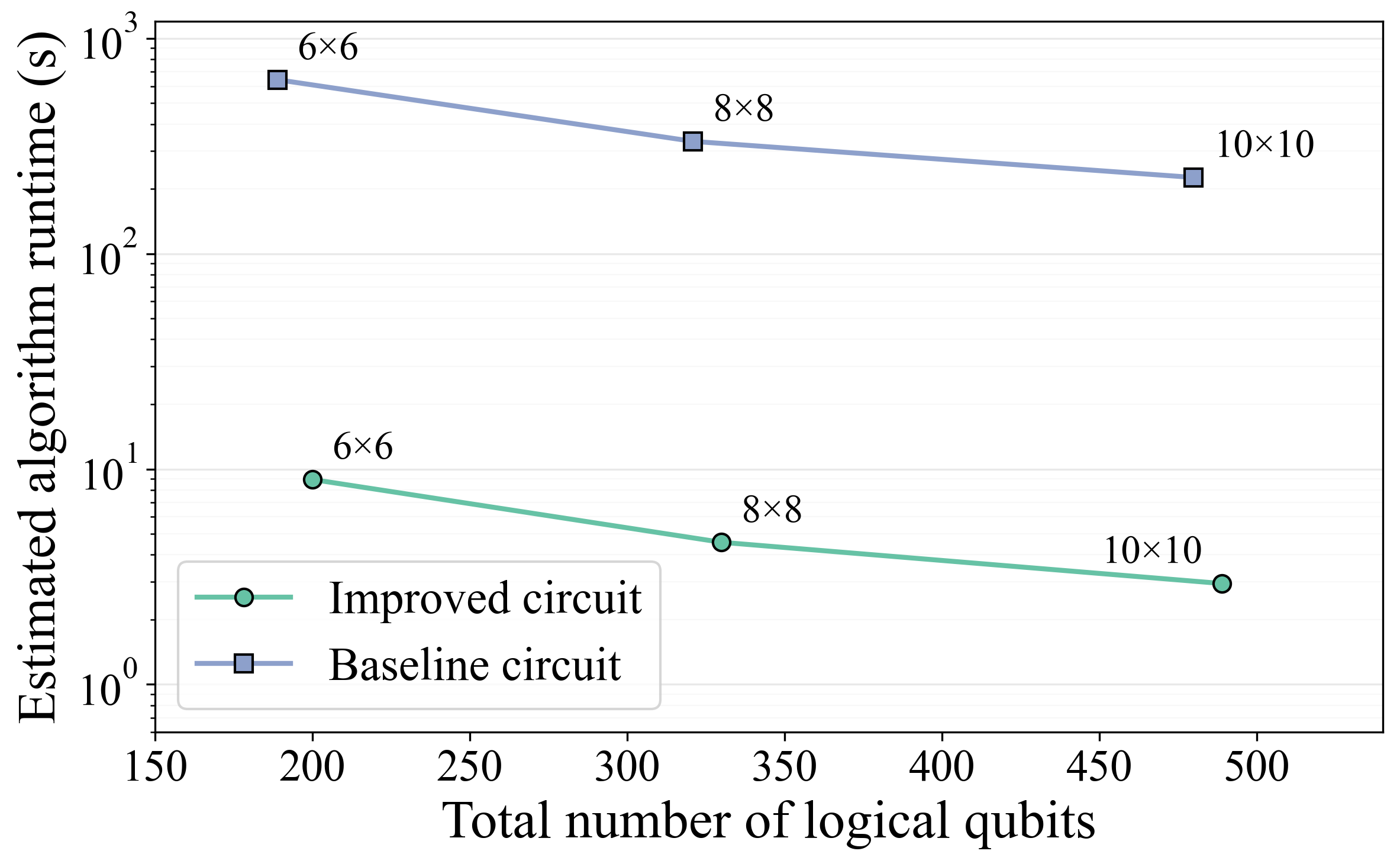}
    \label{fig:runtimes}
    \end{subfigure}
    \begin{subfigure}[b]{0.47\textwidth}
        \centering
    \includegraphics[width=0.95\linewidth,
    ]{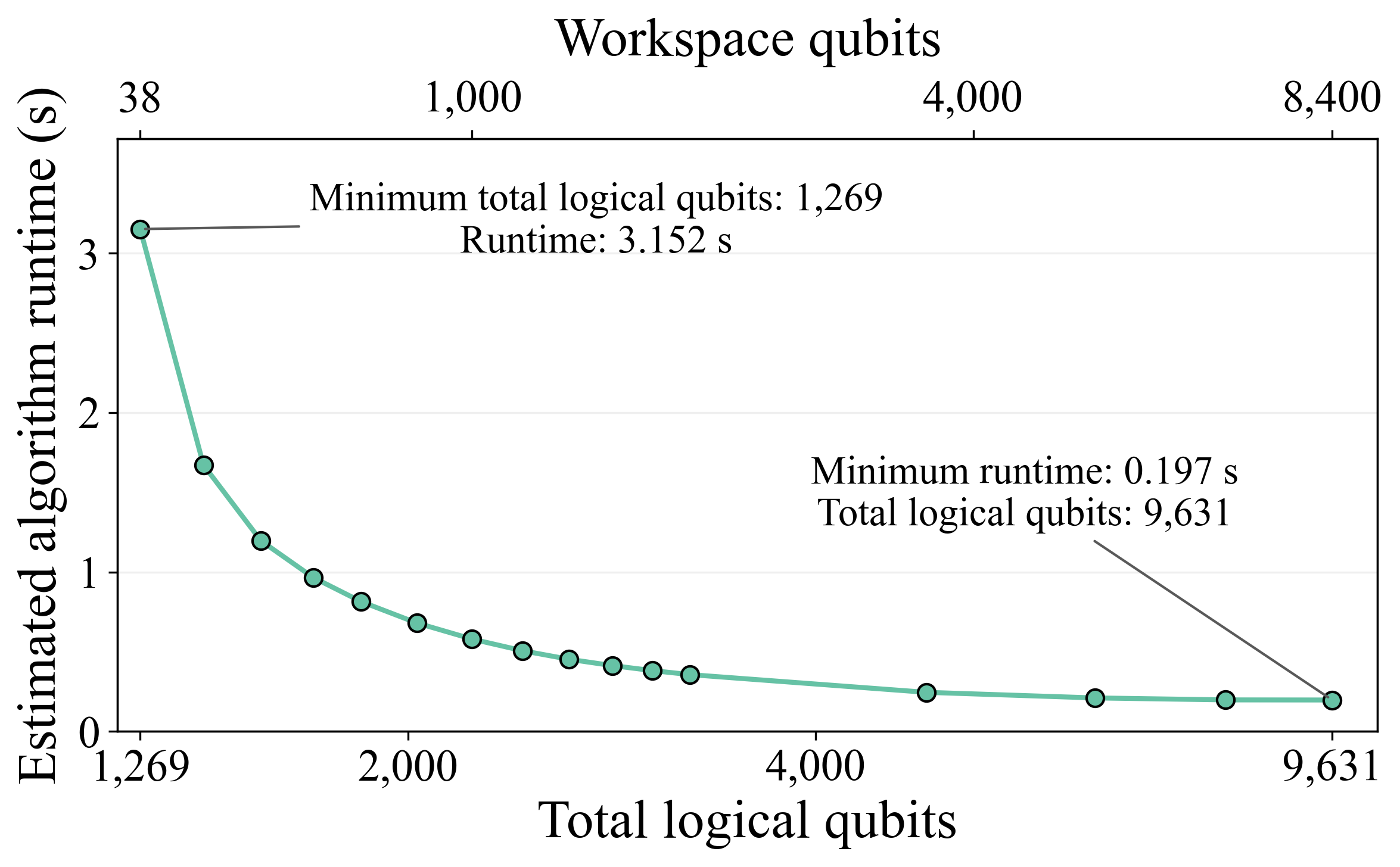}
    \label{fig:runtimes_workspace_scan}
    \end{subfigure}
    \caption{Runtime analyses using the block scheduler from Ref.~\cite{heavey2026scheduler}. \textbf{(a)} Comparison of estimated runtimes for the baseline and improved Fermi--Hubbard simulations for square lattice sizes $L=6$ to $10$, each using single batched Hamming weight phasing. 
    \textbf{(b)} Estimated runtimes for the improved $20\times20$ Fermi--Hubbard circuit as a function of logical qubit capacity. Increasing the available qubits enables greater parallelization until the reaction-limited regime is reached, revealing an approximately order-of-magnitude space-time trade-off between qubit count and runtime.
    }
    \label{fig:runtime_trend_analysis}
\end{figure}

To demonstrate how changing the qubit count affects the runtime, we also plot the runtime versus the qubit count for the case of simulating a $20 \times 20$ lattice in~\cref{fig:runtimes_workspace_scan}.
Here we can observe the spacetime trade-off facilitated by the scheduler, which allows one to roughly trade an order of magnitude in qubits for an order of magnitude in runtime.
While the absolute runtime values may depend significantly on the hardware parameters, the runtime trends observed for given numbers of workspace qubits are expected to hold.

\subsection{Reduction in Toffoli count}

Although active volume is the primary resource target for our work, Toffoli count remains the conventional metric used in many fault-tolerant resource estimates.
We therefore compare our circuits against previous Toffoli resource estimates from Refs.~\cite{campbell2022early} and~\cite{kan2024resource-optimized} in~\cref{fig:toff_comparison}.
Although active volume accounts for both Clifford and non-Clifford resources, Toffoli gates still contribute a substantial fraction of the total cost.
Consequently, compilation techniques that reduce the Toffoli count without introducing substantial Clifford overhead also reduce the active volume.
In this work including sine-windowed QPE, Suzuki--Trotter re-ordering, and re-optimized error budgeting reduce the Toffoli count, either by decreasing the number of QPE queries or by lowering the Toffoli gates per query.
For the $20 \times 20$ lattice, our circuit reduces the 
Toffoli count by $3.37 \times$ relative to \cite{campbell2022early} and $1.96 \times$ relative to \cite{kan2024resource-optimized}, establishing the lowest reported Toffoli count for this benchmark to date.\footnote{Note that Ref.~\cite{campbell2022early} reports a lower logical qubit count by avoiding phase catalysis in Hamming weight phasing, corresponding to a different point on the space-time trade-off.}

\begin{figure}[H]
\centering
\includegraphics[width=0.7\textwidth]{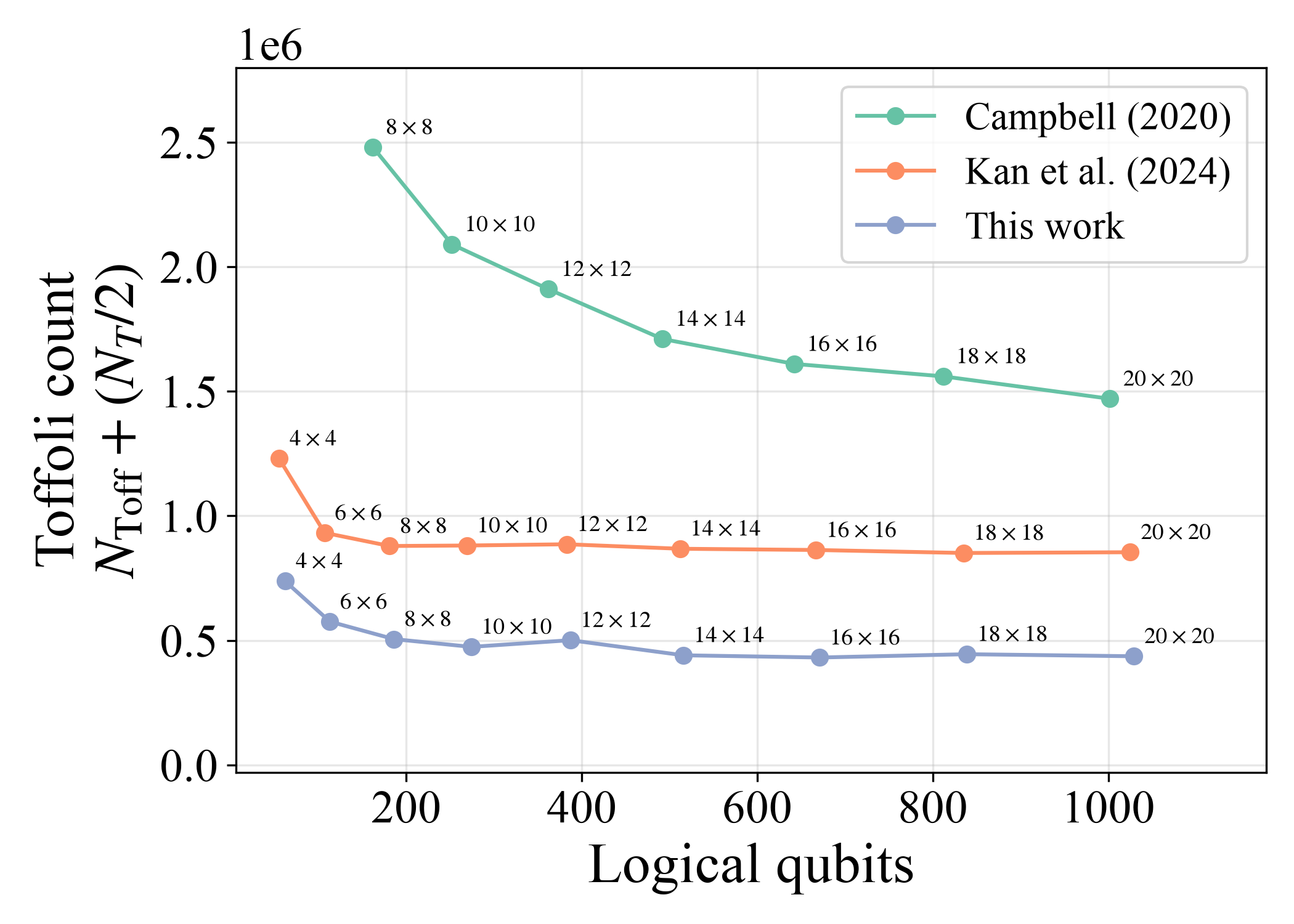}
\caption{Comparison of the total Toffoli counts between this work and those reported by Campbell~\cite[Table II]{campbell2022early} and Kan et al.~\cite[Supplementary Table 5]{kan2024resource-optimized}. All T states are assumed to be catalyzed using CCZ states for consistency.
All three works assume
$u = 8$ and $t=1$ Hamiltonian parameters, and these works assume Hamming weight phasing is applied to batches of $L^2/2$ equiangular rotations. We additionally note that Ref.~\cite{campbell2022early} rounded to two significant figures for the Toffoli counts.
}
\label{fig:toff_comparison}
\end{figure}

Moreover this comparison is intentionally conservative.
We explicitly compile the controlled QPE implementation and therefore restrict the number of queries to be of the form $2^{k-1}$ for $k$ phase qubits. 
Previous work instead reports costs for arbitrary query counts obtained by scaling a single query.
Relaxing this restriction would further reduce our Toffoli counts but would require a more automated compilation of the controlled QPE circuit.

\section{Compilation details}\label{sect:compilation}

In this section, we outline the compilation choices and strategies used to reduce the active volume of the ground-state energy estimation algorithm.
Some of these improvements act at the algorithmic level by reducing the number of Trotterized time-evolution queries required by QPE, while others act at the circuit level by lowering the cost of each query.
Despite being motivated by active volume reduction, nearly all of these techniques are general improvements that would also benefit other FTQC architectures.
Only the gate-level compilation using operations optimized through oriented ZX diagrams relies specifically on the active volume architecture.

As our focus is a comparative study of QPE costs for ground-state energy estimation, we account only for queries to the time-evolution operator and neglect initial-state preparation, window-state preparation, and the inverse QFT, for which we do not introduce new compilation improvements.
The last two costs are negligible compared to the query cost \cite{babbush2018encoding, najafi_optimum_2023}.
Initial state preparation (for lattice systems) may be more significant and is an active area of research \cite{fomichev2023initial, berry2024rapid, smith2024constantdepth, job2024cost, murta2024from, guo2025preparation}.
It is important to ensure that the overhead from the approximate eigenstate preparation remains low, as obtaining reliable, sufficient-quality initial states can become prohibitively expensive at the large lattice sizes we focus on, as in the case of e.g. tensor network states~\cite{haghshenas2022variational, lee2023evaluating, fomichev2023initial, job2024cost,  liu2025accurate, wu2025alternating}. Instead, we may employ a cheaper but coarser method such as mean-field states \cite{wecker2015solving, scholle2023comprehensive, roth2025superconductivity, langmann2026update, matsuyama2026bcs} with reasonable overlap.
Applying a low-resolution QPE as a filter \cite{fomichev2023initial} can then refine the eigenstate approximation, further motivating our focus on the query cost.

\subsection{Circuit overview}

\cref{fig:big_summary} provides a high-level overview of the circuit construction used for ground-state energy estimation, while the remainder of this section describes its compilation in detail.
We first specify the QPE implementation~(\cref{fig:big_summary} \textcolor{blue}{(I)}), including the use of the sine window and directionally controlled unitaries, in~\cref{subsubsec:sine_window}.
We then discuss the ordering of the Suzuki--Trotter terms depicted in~\cref{fig:big_summary} \textcolor{blue}{(II)} and the opportunities for merging consecutive evolutions in~\cref{subsection:trotter_reordering}.
We next introduce an efficient implementation of the controlled and directionally controlled Hamming weight phasing routine~\cref{fig:big_summary} \textcolor{blue}{(V)} tailored to these circuits, including the necessary phase fix-ups, in~\cref{subsec:hwp}.
The subsequent subsections describe three circuit-level optimizations.
We first reduce the cost of the fermionic SWAP network in~\cref{subsec:fswap}, corresponding to~\cref{fig:big_summary} \textcolor{blue}{(IIIc)}.
We then develop active-volume-efficient implementations of the gates appearing in~\cref{fig:big_summary} \textcolor{blue}{(IV)} in~\cref{subsec:preoptimized_circuits}, before analyzing the space-time trade-off offered by a batched implementation of Hamming weight phasing in~\cref{subsec:batched_hwp}.
Finally, we characterize the errors introduced by the compiled circuit and determine a favorable 
division in the error budgets in~\cref{subsubsec:error_budgeting_for_circuit}.

\clearpage
\begin{figure}[H]
\centering
\begin{tikzpicture}
\begin{scope}[on background layer]
         \node [inner sep=0pt] (img1) at (0, 0) {\includegraphics[trim=5 5 5 5,clip,width=0.9\linewidth]{figures/big_summary_fig.jpeg}};
\end{scope}
\end{tikzpicture}
\vspace{-5mm}
\caption{Circuit for the ground-state energy estimation algorithm. 
The square control marker in the QPE represents the control structure of a directionally controlled unitary \cite{wecker2015solving, kivlichan2020improved}.}
 \label{fig:big_summary}
\end{figure}

\subsection{Adopting the sine window QPE variant}\label{subsubsec:sine_window}

Quantum phase estimation employs a finite-sized phase register (thus a finite number of queries) which leads to a phase error, $\epsilon_{\text{QPE}}\tau$ given time evolution unitary $\mathrm{e}^{\mathrm{i} H \tau}$.
\cref{fig:big_summary} \textcolor{blue}{(I)} illustrates a QPE with $k$ phase qubits. 
One way to quantify this error is through the variance of the phase distribution prior to measurement.
The Holevo variance provides a natural measure of phase uncertainty because it accounts for the periodicity of the phase \cite{berry2000optimal}.
We compare two QPEs that share the same asymptotic scaling in the Holevo variance but differ in constant factors.

\begin{figure}[H]
    \includegraphics[width=0.8\textwidth]{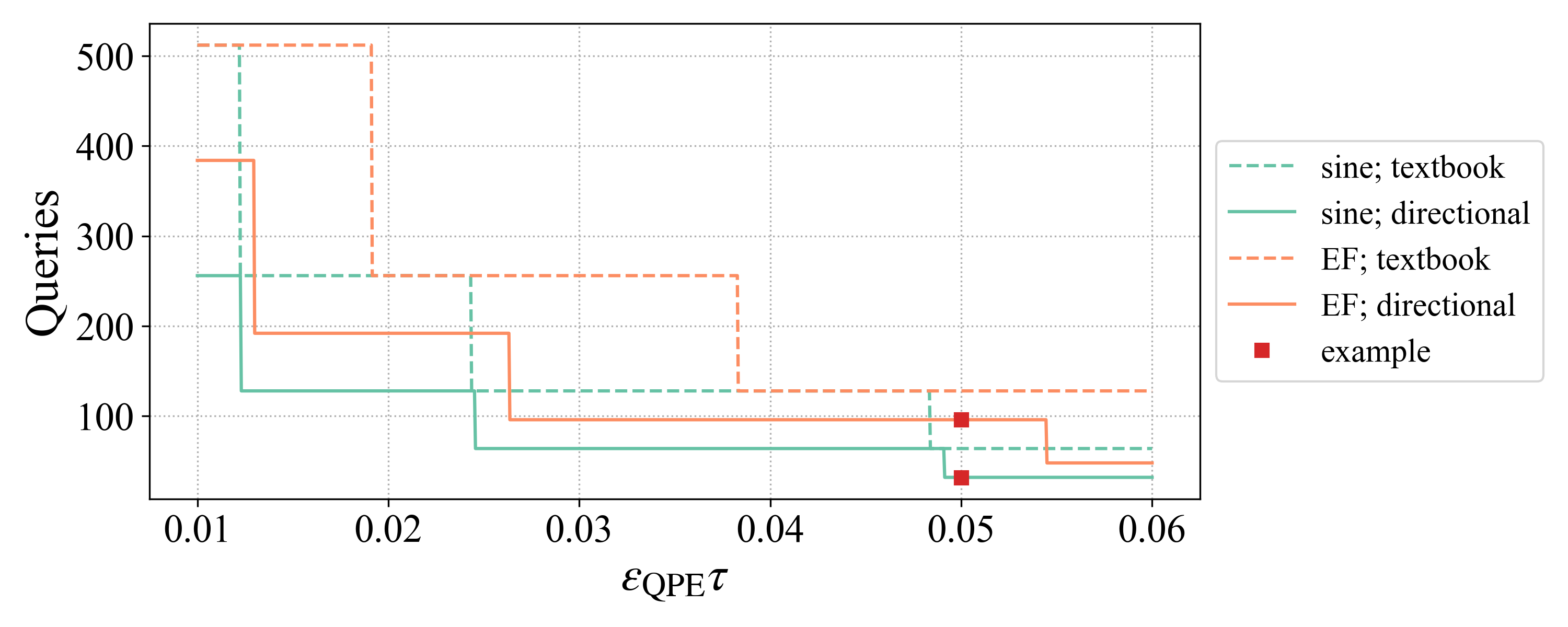}
    \centering
\caption{
Comparing number of queries between sine-windowed QPE \cite{babbush2018encoding, najafi_optimum_2023} versus the entanglement-free variant as a function of the allocated error budget for phase error. For each flavor of QPE, we consider applying the conventional control structure (referred to as ``textbook'' in the figure legend) versus applying directionally controlled time evolution operators \cite{kivlichan2020improved, babbush2018encoding}. The red squares mark an example point in which the product of the QPE energy error and evolution time equals $0.05$. At these parameters, the sine-windowed QPE requires 32 queries whereas the entanglement-free QPE requires 96 queries.
}
\label{fig:qpe_queries}
\end{figure}

Prior works on the quantum simulation of Fermi-Hubbard \cite{kivlichan2020improved,campbell2022early,kan2024resource-optimized} have considered a variant of iterative QPE that achieves Heisenberg scaling using an adaptive scheme \cite{higgins2007entanglement}. 
We will refer to this method as the ``entanglement-free'' (EF) QPE. 
In this QPE, the experiment or circuit to estimate each bit is repeated $M$ times: for each of the $k$ phase qubits, the same power of $U$ is applied $M$ times.
In addition to this repeating structure, after measuring each phase qubit, the estimate of the eigenphase is updated by $\delta\theta$ determined by a Bayesian algorithm that considers all previous measurement results.
In the original work $M=6$ was empirically determined to be sufficient to achieve Heisenberg scaling.
From this fit, they estimated the scaling to be off by a factor of around $1.56$ compared to the analytical bound on the uncertainty (standard deviation), i.e. 
\begin{align}\label{eq:uncertainty_ef_qpe}
    \Delta\theta_{\text{est, EF}} \approx 1.56 \pi / N
\end{align}
\noindent for $N \gg 1$ total queries.
One of the challenges with the method is that each additional phase qubit incurs a higher query cost to the unitary due to the multiplicative factor of $M$ in query counts compared to a standard coherent QPE. 

An alternative way to achieve Heisenberg scaling in QPE is by employing the sine window \cite{luis_optimum_1996, babbush2018encoding, najafi_optimum_2023}. 
The sine window state is prepared in the phase register prior to applying the controlled unitaries, as illustrated in~\cref{fig:big_summary} \textcolor{blue}{(I)}.
Sine-windowed QPE achieves the minimum Holevo variance $V_H$,
given $N$ queries \cite{najafi_optimum_2023}.
Holevo variance is approximately equal to the standard notion of variance in the regime of narrow distributions.
Thus we take the squared error in the QPE obtained phase, or $\Delta\theta_{\text{est}}^2$, to be equal to the Holevo variance.
In this work, we choose to employ a QFT-based QPE with a sine window, which does not significantly impact the qubit count.
To set a fair comparison, we can re-express the uncertainty using the sine window as:
\begin{align}\label{eq:uncertainty_sine_qpe}
    \Delta \theta_{\text{est, sine}} = \sqrt{V_H} = \tan \bigg( \frac{\pi}{N+1} \bigg),
\end{align}
for $N$ queries.
The phase error tolerance is related to the energy estimate by $\Delta \theta =\epsilon_\text{QPE} \tau$ where recall $\tau$ is the time used in the evolution. 
We set the target uncertainties of each QPE,~\cref{eq:uncertainty_ef_qpe} and~\cref{eq:uncertainty_sine_qpe}, to this error
and solve for the queries:
\begin{align}
    N_{\text{EF}} &= \frac{1.56 \pi}{\epsilon_\text{QPE} \tau} \label{eq:queries_berry}\\
    N_{\text{sine}} &= \frac{\pi}{\arctan(\epsilon_\text{QPE} \tau)} - 1  \label{eq:queries_sine}
\end{align}

From these total number of queries, we want to estimate the number of phase qubits in each QPE variant.
First, for EF-QPE, we solve for the number of phase qubits:
\begin{align}
    \frac{1.56 \pi}{\epsilon_\text{QPE} \tau} \leq M (2^{k_{\text{EF}}}-1),
\end{align}
where $M$ is, again, the number of repetitions of each power of $U$.
This results in:
\begin{align}
    k_{\text{EF}} = \bigg\lceil \log_2 \bigg( \frac{1.56\pi}{\epsilon_\text{QPE} \tau M} + 1 \bigg) \bigg\rceil,
\end{align}
which we round to an integer with ceiling.
With this number of phase qubits that achieves the desired error, we can now compute the number of queries if we were to employ a QPE with directional controlled unitaries \cite{babbush2018encoding}\footnote{ Ref.~\cite{babbush2018encoding} presents a QPE circuit that doubles the effect of phase kickback via use of ``directionally controlled'' unitaries \cite{wecker2015solving, kivlichan2020improved}, i.e. applying $U$ when the control qubit is $\ket{1}$ and applying $U^\dagger$ when the control qubit is $\ket{0}$. Such QPE requires $2^{k-1}$ queries for $k$ phase qubits.}:
\begin{align}\label{eq:queries_ef_directional}
    N_{\text{EF, directional}} = M \bigg( 2^{ k_{EF}-1} \bigg) = M \bigg( 2^{\lceil \log_2(1.56\pi/(\epsilon_\text{QPE} \tau M)) + 1 \rceil - 1}  \bigg).
\end{align}
For the number of queries for the textbook QPE, we can instead compute $N_{\text{EF, textbook}} = M ( 2^{ k_{EF}}-1 )$.
We repeat the same procedure for the sine windowed QPE and arrive at:
\begin{align}
    N_{\text{sine, directional}} &= 2^{ k_{ \text{sine}}-1} \\
    N_{\text{sine, textbook}} &= 2^{ k_{ \text{sine}}}-1,
\end{align}
where $k_{ \text{sine}} = \lceil \log_2(\pi/\arctan{(\epsilon_\text{QPE} \tau)}) \rceil$.
The number of queries for both the EF-QPE and the sine-windowed variants are compared in~\cref{fig:qpe_queries}.
These numerics confirm that using the sine-windowed QPE reduces the number of queries required to achieve a chosen target precision.
Therefore, we choose the sine-windowed QPE to maintain the Heisenberg scaling while also reducing the number of queries.

\subsection{Ordering of Suzuki-Trotter sequence and merging terms}\label{subsection:trotter_reordering}

We implement the time evolution unitary using a second-order Suzuki-Trotter formula, partitioning the Hamiltonian into three non-commuting components (see \cref{fig:fh_plaq}): an interaction term and two sets of non-overlapping plaquette terms.
Repeating the symmetric product formula means that evolutions with the same generator can often appear consecutively and thus can be merged via $\mathrm{e}^{\mathrm{i}H_i s_1}\mathrm{e}^{\mathrm{i}H_i s_2} = \mathrm{e}^{\mathrm{i}H_i (s_1+s_2)}$.
This is particularly advantageous when the implementation cost is largely independent of the evolution time, as in our case where the time determines the angle of rotation.
Specifically, when simulating $H_\text{FH} = H_1 + H_2 + H_3$ for a time $\tau$ using $r$ steps, the resulting sequence admits the following merging,
\begin{subequations}
\begin{align}
\mathrm{e}^{\mathrm{i}H_\text{FH}\tau} &\approx  (\mathrm{e}^{\mathrm{i}H_1\tau/2r}\mathrm{e}^{\mathrm{i} H_2 \tau/2r} \mathrm{e}^{\mathrm{i}H_3\tau/r}\mathrm{e}^{\mathrm{i} H_2\tau/2r}\mathrm{e}^{\mathrm{i}H_1\tau/2r})^r\\
& = \mathrm{e}^{\mathrm{i}H_1\tau/2r}(\mathrm{e}^{\mathrm{i}H_2\tau/2r} \mathrm{e}^{\mathrm{i}H_3\tau/r}\mathrm{e}^{\mathrm{i}H_2\tau/2r}\mathrm{e}^{\mathrm{i}H_1\tau/r})^{r-1}\mathrm{e}^{\mathrm{i}H_2\tau/2r}\mathrm{e}^{\mathrm{i}H_3\tau/r}\mathrm{e}^{\mathrm{i}H_2\tau/2r}\mathrm{e}^{\mathrm{i}H_1\tau/2r}.
\end{align}
\end{subequations}
Then in QPE repeated applications of the unitary $\mathrm{e}^{\mathrm{i}H_\text{FH}\tau}$ to extract phase information enables further merging between queries. 
For a $k$-qubit phase register and a unitary implemented with $r$ second-order Trotter steps the number of queries to each evolution is given in \cref{tab:no_queries_merging}.

\begin{table}[h!]
\renewcommand{\arraystretch}{1.2}
\centering
\begin{tabular}{|c | c | c  | c |}
\hline
 \textbf{QPE control structure} & $\mathrm{e}^{\mathrm{i}H_1 (\cdot)}$ & $\mathrm{e}^{\mathrm{i}H_2 (\cdot)}$ & $\mathrm{e}^{\mathrm{i}H_3 (\cdot)}$ \\ 
 \hline
  \textit{Standard/Textbook} & $r(2^k-1) +k$ & $2r(2^k -1)$  &  $r(2^k-1)$ \\
  \hline
  \textit{Directional} & $r2^{k-1}+k$ & $r2^{k}$ & $r2^{k-1}$   \\ 
  \hline
\end{tabular}
\caption{Query count for the three Hamiltonian components after merging consecutive evolutions between Trotter steps and across queries controlled on the same phase qubit.
Standard refers to textbook QPE control structure.
Note that this query count will include evolutions of different times, e.g. $\mathrm{e}^{\mathrm{i}H_1 \tau}$ and $\mathrm{e}^{\mathrm{i}H_1 \tau/2}$ both contribute to the cited count.
For completeness they are divided as follows: in the case of standard QPE, there are $r(2^{k}-1) - k$ calls to $\mathrm{e}^{\mathrm{i}H_1 \tau}$ and $2k$ calls to $\mathrm{e}^{\mathrm{i}H_1 \tau/2}$.
In the case of QPE using directional controls, there are  $r2^{k-1}-k$ calls to $\mathrm{e}^{\mathrm{i}H_1 \tau}$ and $2k$ calls to $\mathrm{e}^{\mathrm{i}H_1 \tau/2}$.}
\label{tab:no_queries_merging}
\end{table}

The ordering of Hamiltonian partitions affects both the Trotter error and the circuit cost via differing query counts, introducing an additional compilation degree of freedom.
The query counts satisfy $n_Q(H_2)>n_Q(H_1)>n_Q(H_3)$, so from a compilation standpoint it is advantageous to assign the most resource-intensive evolution to $H_3$ and the cheapest evolution to $H_2$. 
For our decomposition the cost of evolutions is ordered $\mathcal{C}(H_h^g)>\mathcal{C}(H_h^p)>\mathcal{C}(H_I)$:  all evolutions involve a tower of rotations; hopping terms additionally require two-mode FFFT gates; and `gold' hopping further incurs a cost of the $\mathrm{fSWAP}$ network, as shown in \cref{fig:big_summary}.

Previous resource estimates \cite{campbell2022early, kan2024resource-optimized} used the ordering $H_1 = H_I$, $H_2 = H_h^p$, $H_3 = H_h^g$ (the `IPG ordering'), motivated by analytically tractable commutator bounds derived in \cite{campbell2022early}.
However, we will see that a comparable commutator bound can be achieved for our suggested ordering and so the choice can instead be guided by the query count.
This motivates the Trotter ordering $H_1 = H_h^p$, $H_2 = H_I$, $H_3 = H_h^g$ (the `PIG ordering') illustrated in \cref{fig:merging}.

\begin{figure}[H] 
    \centering
    \includegraphics[trim=10 10 10 10,clip,width=0.8\textwidth]{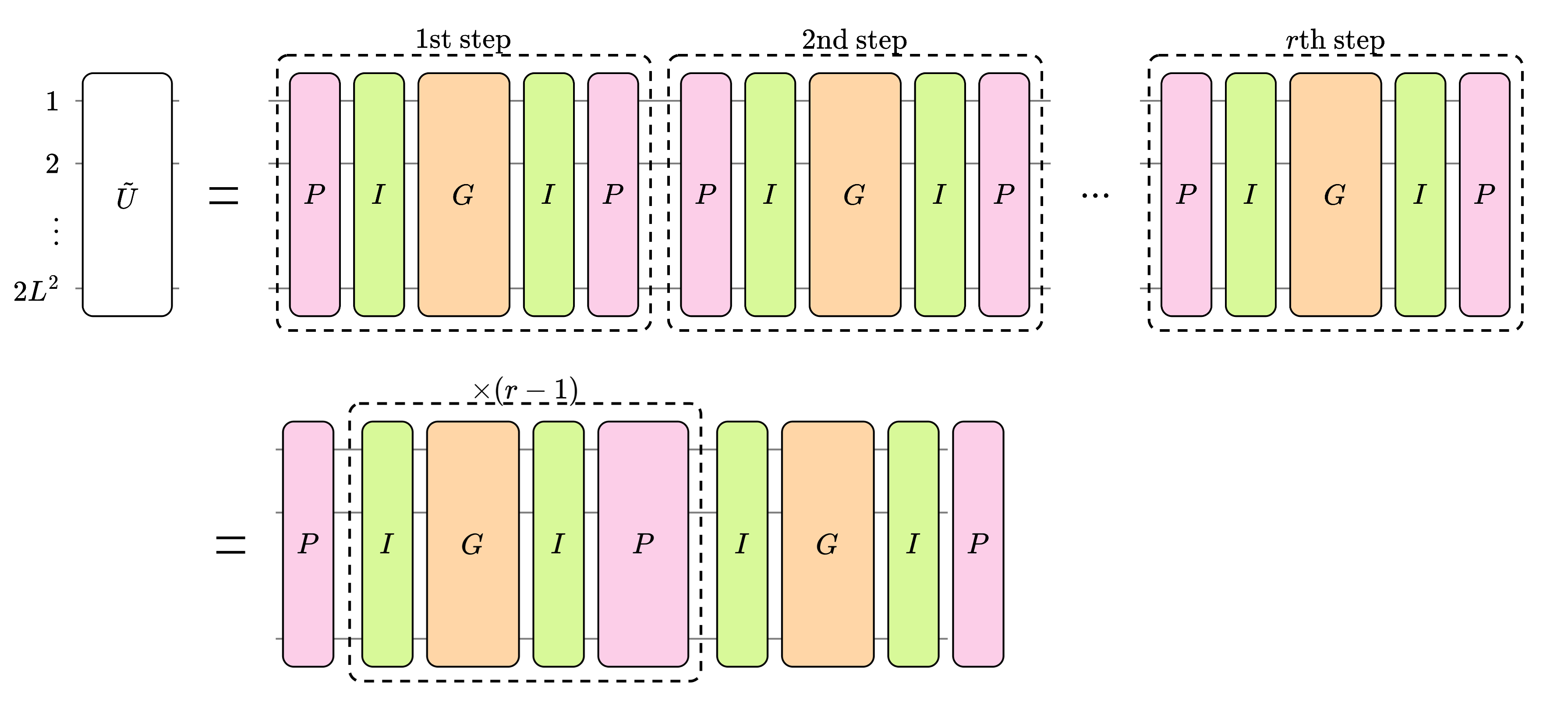}
    \caption{Circuit diagrams demonstrating Trotter ordering and merging within a single evolution. The `PIG' order allows merging of the more resource-intensive hopping terms. In addition to merging within a query, we allow merging between sequential queries controlled on the same phase qubit.
    }
    \label{fig:merging}
\end{figure}

\subsubsection{Controlling the Suzuki-Trotter sequence}\label{subsubsec:controlling_trotter_sequence}

In the QPE algorithm, the Trotterized time evolution operators $\tilde{U}$ are controlled, as shown in~\cref{fig:big_summary} \textcolor{blue}{(I)}.
For QPE leveraging directional phase kickback we require both closed-controlled and directionally controlled versions of $\tilde{U}$.
Given the `PIG' ordering used in $\tilde{U}$, the two types of controlled-$\tilde{U}$ operations can be compiled to leverage the symmetry in the second-order Suzuki-Trotter sequence. 

\begin{figure}[H]
\centering
    \includegraphics[trim=10 10 10 10,clip,width=0.85\linewidth]{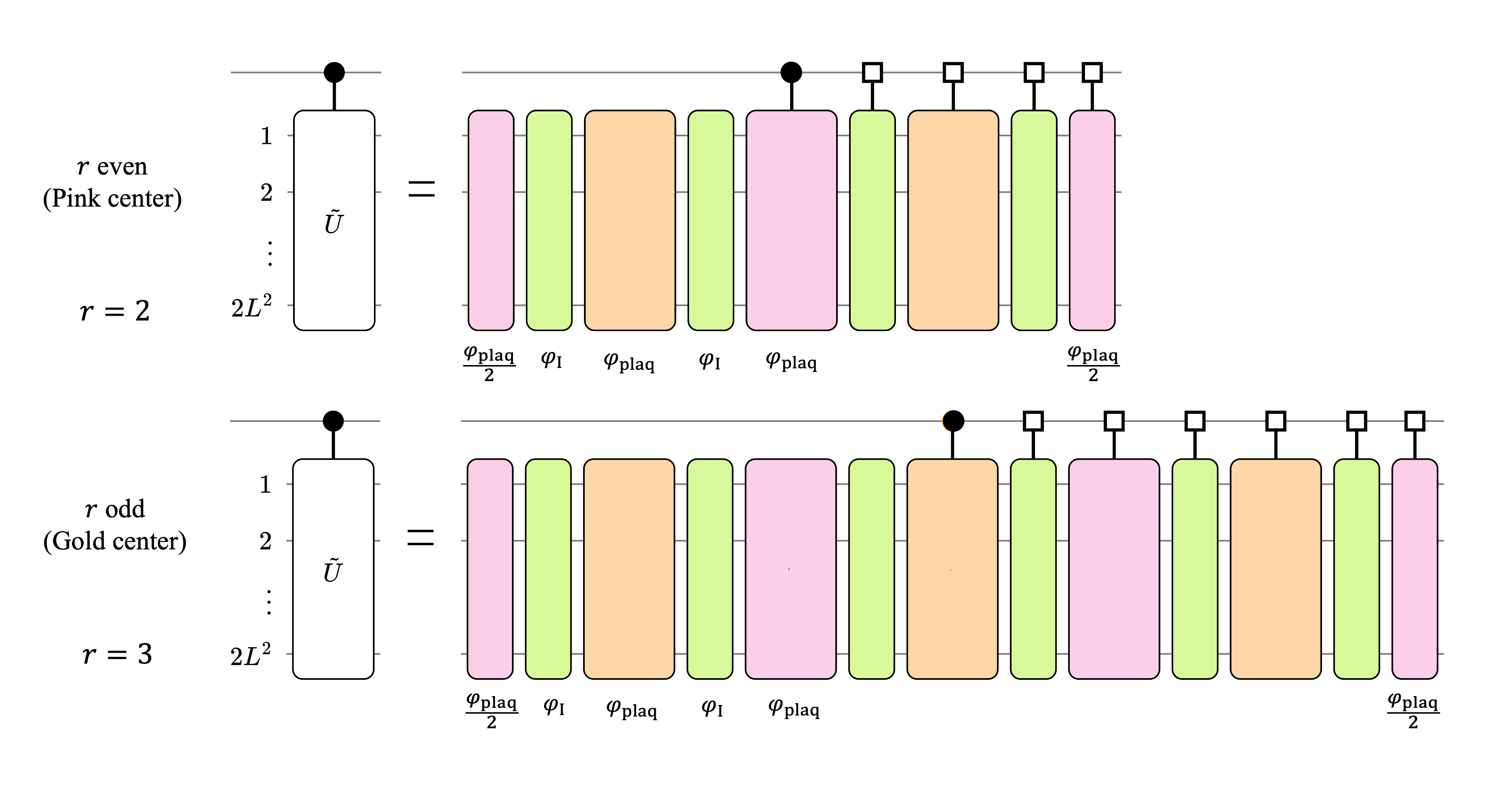}
\caption{Controlling the time-evolution operator. Controlled evolution operator for small numbers of Trotter steps (i.e. $r=2$ and $r=3$ to represent even and odd cases) to demonstrate the control structure. 
    The symbols under each type of operation indicate the angle in the rotation towers.
    For pink plaquette terms, the two operations at the ends have half the angle but all intermediate pink and gold terms should have the same angle.
    This method of controlling half the numbers of Trotter terms is from Ref.~\cite{simon2025halving}. Using the `PIG' ordering, the controlled operation always corresponds to a pink or gold plaquette term with (merged) angle $\varphi_{\text{plaq}}$.}
\label{fig:controlling_trotter_sequence}
\end{figure}

For the closed-control case on the first phase qubit, only the center term has to be controlled.
For an even number of Trotter steps, $r$, the center term is a pink plaquette term while for an odd $r$, the center term is a gold plaquette term.
Using the compilation strategy from Ref.~\cite{simon2025halving}, we can avoid controlling around half of the terms and only control the center term and directionally control the following terms, see~\cref{fig:controlling_trotter_sequence}.
That is, there are $4r + 1$ terms, in which one of them is closed-controlled and $2r$ of them are directionally controlled.
In the following section, we propose a method for efficient directional control of Trotter terms in this work.

For the directional-control case on subsequent phase qubits, all constituent terms in $\tilde{U}$ are directionally controlled.
Again, we can compile this control structure with reduced overhead -- for each of the three types of Trotter terms, circuit symmetry allows the required (directional) control to be applied only to the parallel set of equiangular rotations.
In the following section, we discuss an efficient way of implementing this directional control structure.

\subsection{Hamming weight phasing}\label{subsec:hwp}

The evolution of both the plaquette term (\cref{fig:big_summary} \textcolor{blue}{(IVa)}) and the interaction term (\cref{fig:big_summary} \textcolor{blue}{(IIIa)}) each contains towers of $m = L^2$ equiangular rotation gates.
Hamming weight phasing (HWP) is a compilation technique to synthesize such a tower of $m$ rotations using logarithmically many ancilla qubits \cite{gidney2019efficient, kivlichan2020improved}.
While HWP has been widely applied to  compilation schemes for the Fermi-Hubbard model \cite{campbell2022early, kivlichan2020improved, kan2024resource-optimized} due to its favorable Toffoli complexity, its practical benefit is architecture dependent.
On architectures with limited connectivity recent works have opted to directly synthesize these equiangular rotations~\cite{chung2026partiallyfaulttolerantquantumcomputation, FLASQ, webster2026pinnaclearchitecturereducingcost}.
For instance, in Ref.~\cite{FLASQ} computing the Hamming weight requires routing between data and ancilla registers via additional patches, consuming workspace that could otherwise be used for executing logical operations. 
In contrast, the non-local connectivity of the active volume architecture largely mitigates this overhead, and the spacetime tradeoff of HWP can be advantageous when sufficient memory is available \cite{litinski2022activevolume}.
We will return to further analyze this tradeoff in \cref{subsec:batched_hwp}.

Refs.~\cite{wang2023option, kan2024resource-optimized} introduced a Toffoli-optimized version of HWP 
costing
$m + \lfloor \log_2(m) \rfloor - w(m) + 1$ Toffoli gates and only one rotation (referred to as the ``payload'' rotation), where $w(m)$ is the Hamming weight of the binary representation of $m$.
The circuit first computes the Hamming weight of the $m$-qubit register onto a logarithmically-sized ancillary register, then applies the generalized phase-gradient addition circuit (GPGA) \cite{gidney2019efficient, wang2023option} to synthesize the required phase gradient using a reusable $(\lfloor \log_2(m) \rfloor + 1)$-qubit catalyst.
Then the Hamming weight is uncomputed.

There is additionally a one-time cost of preparing the catalyst register used in GPGA.
There are two distinct HWP circuits corresponding to the interaction and plaquette terms since their corresponding rotation angles generally differ, i.e. $u\neq t$.\footnote{For certain relationships between the magnitudes and signs of $t$ and $u$, additional catalyst states may be shared e.g. by replacing some additions with subtractions via negations. Although $t=1$ and $u=8$ permits this type of fine-grained optimization, we retain the more general construction to accommodate other parameter choices.}
Each unique HWP circuit contributes $\lfloor \log_2(L^2)\rfloor + 1$ catalyst rotations.
As we merge pink plaquette evolutions in the `PIG' term ordering, the change in the evolution time doubles the rotation angle with respect to the un-merged evolution and thus requires a different phase gradient state.
Instead of creating a full new catalyst state, all but one catalyst qubit can be reused.
Hence the total number of catalyst rotations is $2\lfloor \log_2(L^2)\rfloor + 3$.

Due to the symmetry of operations in each term, controlling each term corresponds to implementing the respective control structure for the HWP subroutine.
Our main contribution to HWP is reducing the cost of implementing the control structure and providing explicit circuits for this.
Previous resource estimates often neglect the cost of controlling the evolution, and instead scale the cost of a single (uncontrolled) evolution by the number of queries \cite{campbell2022early, kan2024resource-optimized}.
Naive control methods can be inefficient and materially increase the complexity of the subroutine, e.g. controlling an adder doubles the complexity.
Ref.~\cite[Appendix A]{sanders2020compilation} describes a less naive method to directionally control HWP.
Our proposed implementation further reduces the cost of the control structure.
We incorporate this overhead explicitly into our resource estimates for a more complete instantiation of the circuit and confirm that controlling the unitary adds insignificant overhead, especially after our optimization.

\subsubsection{Directionally controlled Hamming weight phasing}\label{subsec:bi_hwp}

A controlled HWP can be implemented by controlling the payload rotation and the uncompute of the out-of-place adder-like operation 
in the GPGA circuit as shown in~\cref{fig:controlled_hwp}.
One subtlety is that our target operation is a tower of $R_z$ rotations whereas the HWP framework traditionally assumes a tower of \emph{phase} gates (e.g.~\cite[Fig.~14]{gidney2019efficient}).\footnote{One could instead replace the payload rotation with an $R_z$ gate, but for either choice, the controlled and directionally controlled HWP circuits require phase fix-up operations.}
While these two gates differ only by a global phase when uncontrolled, upon introducing control, this phase becomes relative and must be corrected via a `fix-up' rotation on the control qubit.

\begin{figure}[H]
\centering
\includegraphics[trim=15 15 15 15,clip,width=0.95\linewidth]{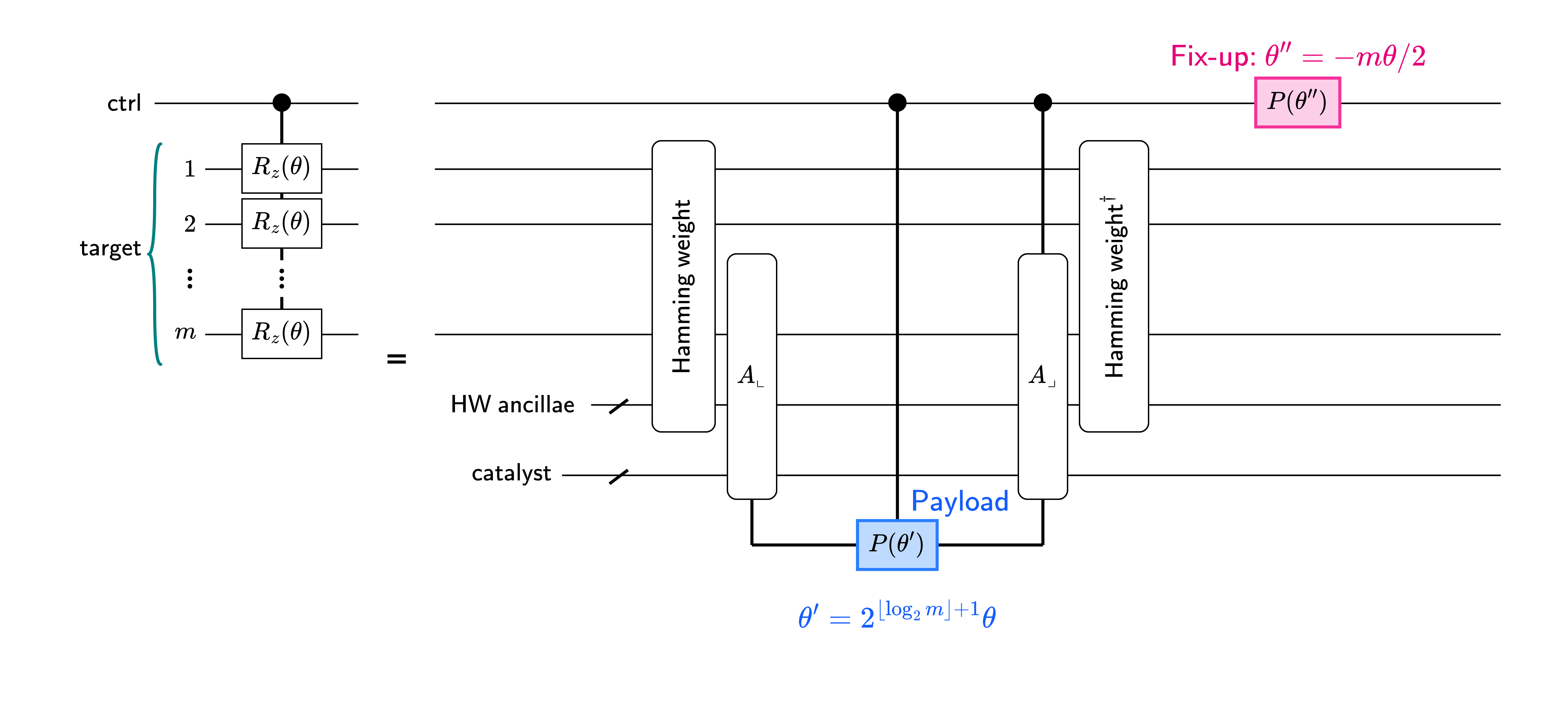}
    \caption{Controlled Hamming weight phasing. 
    A controlled HWP is implemented by controlling only the payload phase rotation and operation $A_{\lrcorner}$ (see~\cref{fig:phasing_circuit_av_counts} for its gate decomposition).
    The controlled payload decomposes into one left elbow (temporary AND compute), one uncontrolled rotation, and one right elbow (temporary AND uncompute), together with a local phase fix-up rotation on the control qubit.
    Recall that adder-like circuits $A_{\llcorner}$ and $A_{\lrcorner}$ act on qubits encoding the Hamming weight, which may be qubits from the target as well as the ancilla registers \cite{kan2024resource-optimized}. Controlling the GPGA circuit replaces $\lfloor \log_2 (L^2) \rfloor + 1$ of the CNOTs in $A_{\lrcorner}$ with Toffoli gates \cite{gidney2018halving}.
    }
    \label{fig:controlled_hwp}
\end{figure}

Our key compilation technique is to realize directional control of HWP through applying operations on the GPGA catalyst register rather than on the target register.
A straightforward approach to achieve this is to apply open-controlled CNOT fanouts to the target register before and after every HWP invocation, thereby switching between the addition and subtraction-like circuits \cite{sanders2020compilation}.
However, because the target register is acted upon by intermediate operations such as fSWAPs, two-mode FFFTs, and CNOTs in the interaction and plaquette terms, these fanouts must be repeated for every HWP call.

Instead, we apply open-controlled CNOT fanouts to the GPGA \emph{catalyst register}, leaving the target register untouched, as shown in~\cref{fig:directionally_controlled_hwp}.
The payload phase rotation is similarly conjugated by open-controlled CNOTs to reverse the sign of the applied phase.
Together, these transformations replace separately controlled implementations of the forward and inverse $R_z$ rotation towers with a single directionally controlled HWP construction.
The catalyst and payload transformations again introduce local phase corrections on the control qubit.
These corrections combine into a single $R_z$ rotation and the resulting circuit is equivalent to the directionally controlled rotation towers up to a global phase.\footnote{Note that we are implementing $\mathrm{e}^{\mathrm{i}\gamma}C(U)$ for a global phase $\gamma$ instead of $C(U)$, where $C(U)$ is the controlled unitary in QPE. 
This is different from preparing $C(\mathrm{e}^{\mathrm{i}\gamma}U)$, which would impact the phase estimate.}
After the first conventionally controlled HWP on the first QPE phase qubit (see~\cref{fig:big_summary} \textcolor{blue}{(I)}), all subsequent HWP invocations are directionally controlled.
Consequently, the catalyst register need only be negated once before the first directionally controlled HWP and restored after the final one.
This reduces the catalyst overhead to just two logarithmic-sized CNOT fanouts per QPE phase qubit, rather than incurring this cost for every HWP call.

\begin{figure}[H]
\centering
\includegraphics[trim=10 10 10 10,clip,width=\linewidth]{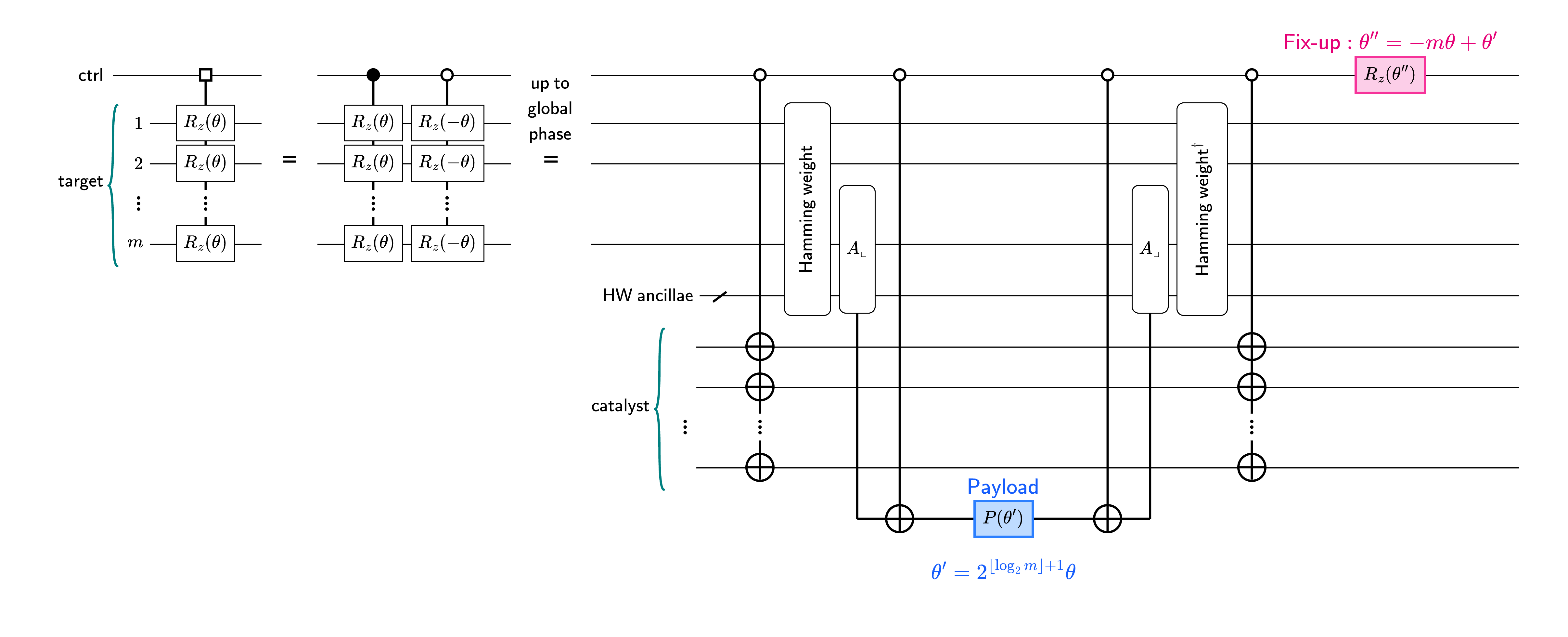}
\caption{
Directionally controlled Hamming weight phasing implemented by acting on the logarithmic-sized GPGA catalyst register rather than the target register. Open-controlled CNOT fanouts toggle the GPGA circuit between addition and subtraction-like circuits, while conjugating the payload phase rotation reverses its sign. A local phase correction is required on the control qubit.}
\label{fig:directionally_controlled_hwp}
\end{figure}

The local phase corrections from successive directionally controlled HWP calls commute and therefore accumulate into a single $R_z$ rotation on each of the remaining $k-1$ phase qubits.
As shown in~\cref{appen:phase-fix-ups}, these accumulated corrections form a generalized phase gradient and can therefore be synthesized directly using GPGA rather than separate rotations.
Together with the amortized catalyst fanouts described above, this keeps the control overhead small.
\cref{tab:control_cost_breakdown} gives the complete resource breakdown.
For the $20 \times 20$ lattice, the active volume from the control structure is only around $7\times 10^3-8\times 10^3$ blocks, compared to around $10^7$ blocks from the rest of the circuit. 

{
\renewcommand{\arraystretch}{1.2}
\begin{table}[h!]
\centering
\begin{tabular}{|>{\centering\arraybackslash}p{3.2cm}|>{\raggedright\arraybackslash}p{10cm}|}
\hline
\textbf{QPE phase qubit(s)} &
\multicolumn{1}{c|}{\textbf{Resources}} \\
\hline

\multirow{4}{3.2cm}{\centering First phase qubit}
& One left and right elbow for controlling the payload rotation in the center term \\
\cline{2-2}
& $\lfloor \log_2(L^2) \rfloor + 1$ Toffoli gates (controlling adder-like circuit) \\
\cline{2-2}
& One $R_z$ rotation (phase fix-up) \\
\cline{2-2}
& $2r \times 2$ CNOTs (directional control of payload rotations in HWP calls) \\
\cline{2-2}
& Two CNOT fanouts with target size $2\lfloor \log_2(L^2) \rfloor + 3$ qubits \\
\hline

\multirow{3}{3.2cm}{\centering $(k-1)$ phase qubits}
& $(r2^{k+1} - 4r + k + 1) \times 2$ CNOTs
(directional control of payload rotations in HWP calls) \\
\cline{2-2}
& $2(k-1)$ CNOT fanouts with target size
$2\lfloor \log_2(L^2) \rfloor + 3$ qubits \\
\cline{2-2}
& $(k-1)$ Toffoli gates, one phase rotation, and a catalyst of size $(k-1)$
for synthesizing the generalized phase gradient comprising phase fix-up
$R_z$ rotations \\
\hline
\end{tabular}
\caption{Additional resources required to enable directional phase kickback for QPE on the plaquette Trotterized time evolution operator using $k$ total phase qubits.
}
\label{tab:control_cost_breakdown}
\end{table}
}

\subsection{Reducing Clifford costs from the fermionic swap network}\label{subsec:fswap}

The Fermi–Hubbard model is naturally expressed in terms of fermionic operators, as shown in \cref{eq:fhm}. To simulate this system on a quantum computer, it must be mapped to qubits via a fermion-to-qubit encoding.
We apply the Jordan-Wigner encoding \cite{JordanWigner1928, Nielsen2005TheFC}, in which fermionic hopping operators are mapped to Pauli strings with non-local parity:
\begin{equation}
c_j^\dagger c_k
=
\frac{1}{4}
\left(X_j - i Y_j\right)
\left(\prod_{\ell=j+1}^{k-1} Z_\ell \right)
\left(X_k + i Y_k\right),
\quad (j < k).
\end{equation}
A key implication of this encoding is that locality depends on the ordering of fermionic modes rather than their geometric arrangement.
As a result, the geometrically local hopping terms in the Fermi-Hubbard model  can map to high-weight Pauli operators if the corresponding modes are not adjacent in the Jordan–Wigner ordering.
Instead of directly implementing these high-weight Pauli operators, our compilation scheme uses fermionic swap (fSWAP) gates to dynamically reorder modes \cite{campbell2022early, kan2024resource-optimized}.
Interacting fermionic sites become adjacent in the Jordan-Wigner ordering thus localizing the hopping term before implementation.
Although these operations are Cliffords and therefore negligible in a T-count cost model, a naive fSWAP implementation can dominate the active volume of a single Trotter step, necessitating careful optimization.

\subsubsection{Jordan-Wigner ordering strategy}

\begin{figure}[h!]
\centering
\includegraphics[width=0.8\linewidth]{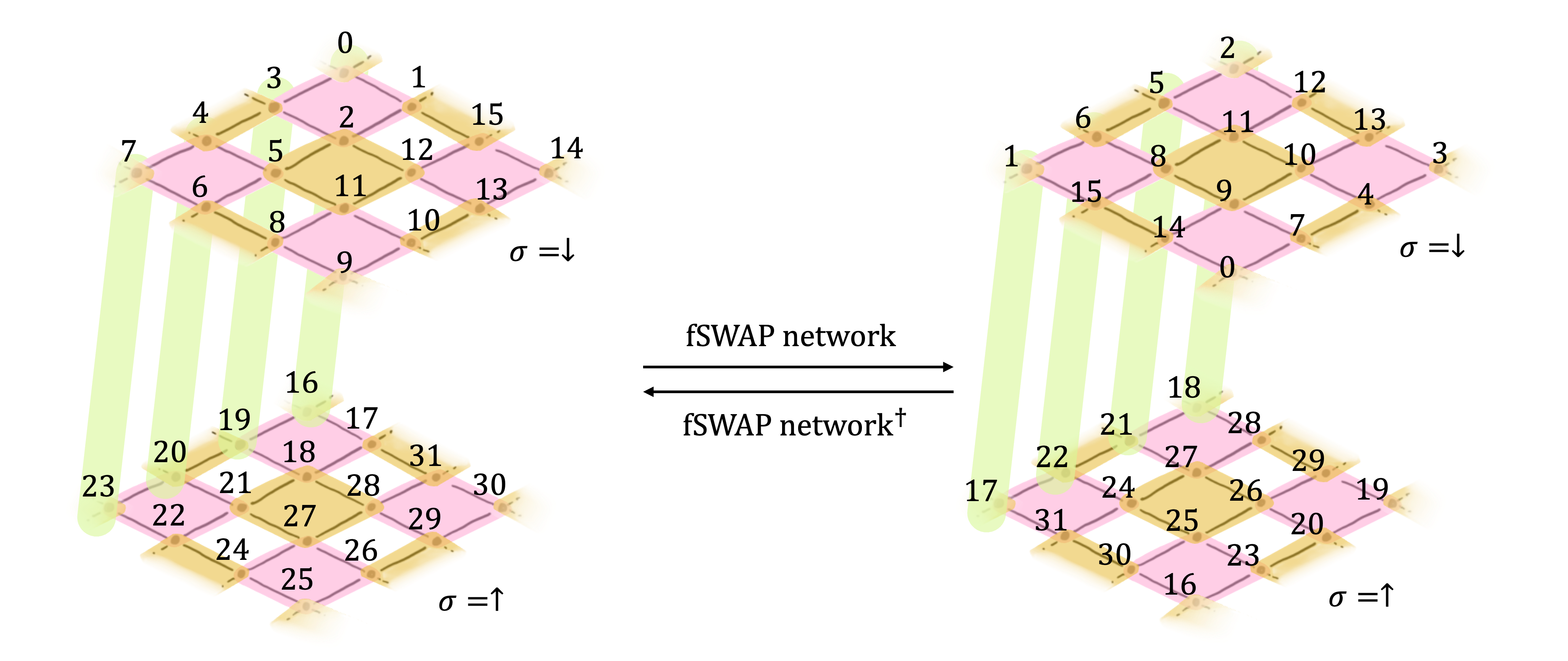}
\caption{Jordan-Wigner ordering of fermionic modes onto the 2D square lattice. The pink and gold highlighted squares are the two disjoint subgraphs of the hopping terms, the green highlight corresponds to the interaction between spin sectors on a given lattice site. The spin down sector is enumerated with the first $L^2$ qubits and the spin up sector is enumerated with the second $L^2$ qubits. The left-hand side depicts the initial enumeration where the pink plaquettes are localized: sites are labeled with consecutive modes around the plaquette.
To simulate the gold terms, a fermionic swap network is used to rearrange the fermion modes on the qubits so that the gold plaquettes are localized.
}
\label{fig:enumeration}
\end{figure}

The Jordan-Wigner ordering of fermionic modes on the 2D lattice is a free parameter in the compilation strategy.
In the Jordan-Wigner representation, the interaction operator does not acquire a parity string.
We therefore assign the first $L^2$ fermionic modes to the spin-down sector and the second $L^2$ modes to the spin-up sector, such that fSWAP operations occur within each sector independently.
This is sometimes referred to as the `block' ordering \cite{arute2020observationseparateddynamicscharge, Stanisic_2022, alam2025programmabledigitalquantumsimulation}.
In this scheme the interaction term acts between qubits separated by a distance of $L$ in the Jordan-Wigner ordering.
On sparse-connectivity architectures such as heavy-hex this leads to substantial routing overhead, motivating an interleaved or pair-interleaved spin ordering \cite{Chowdhury_2026, hartnett2026fastaccuratehighresolutionsimulation}.
However, the non-local CNOT gates from the shifted interaction term do not incur additional cost over local CNOT gates in an active volume architecture.

For a single spin sector, previous works have explored optimal Jordan-Wigner orderings for NISQ-era simulation algorithms and specific architectures \cite{Chiew_2023}. 
Optimality in this context strongly depends on the circuit structure and architecture connectivity.
Recall we implement the fermionic hopping terms in ``pink'' and ``gold'' groups.
While there exist enumerations that localize each set of hoppings individually, no single enumeration simultaneously localizes both.
Within each spin sector, we choose the initial Jordan-Wigner ordering to localize the pink plaquettes.
This contrasts with other approaches (i.e. \cite{kan2024resource-optimized}) that adopt a standard serpentine ordering often considered in DMRG \cite{scardicchio2026, abedi2025}.\footnote{This choice may additionally require a one-time fSWAP network to map between the qubit ordering used for initial-state preparation and that used for time evolution, since some preparation methods may favor a different ordering (e.g., MPS-based state preparation).}
As illustrated in \cref{fig:enumeration}, the required fSWAP network is the sequence of swaps that transforms between the pink-localized and gold-localized Jordan-Wigner ordering.

\subsubsection{fSWAP network implementation}

\begin{figure}[h!]
\centering
\begin{subfigure}[c]{0.4\textwidth}
    \centering
    \includegraphics[trim=5 5 5 5,clip,width=\linewidth]{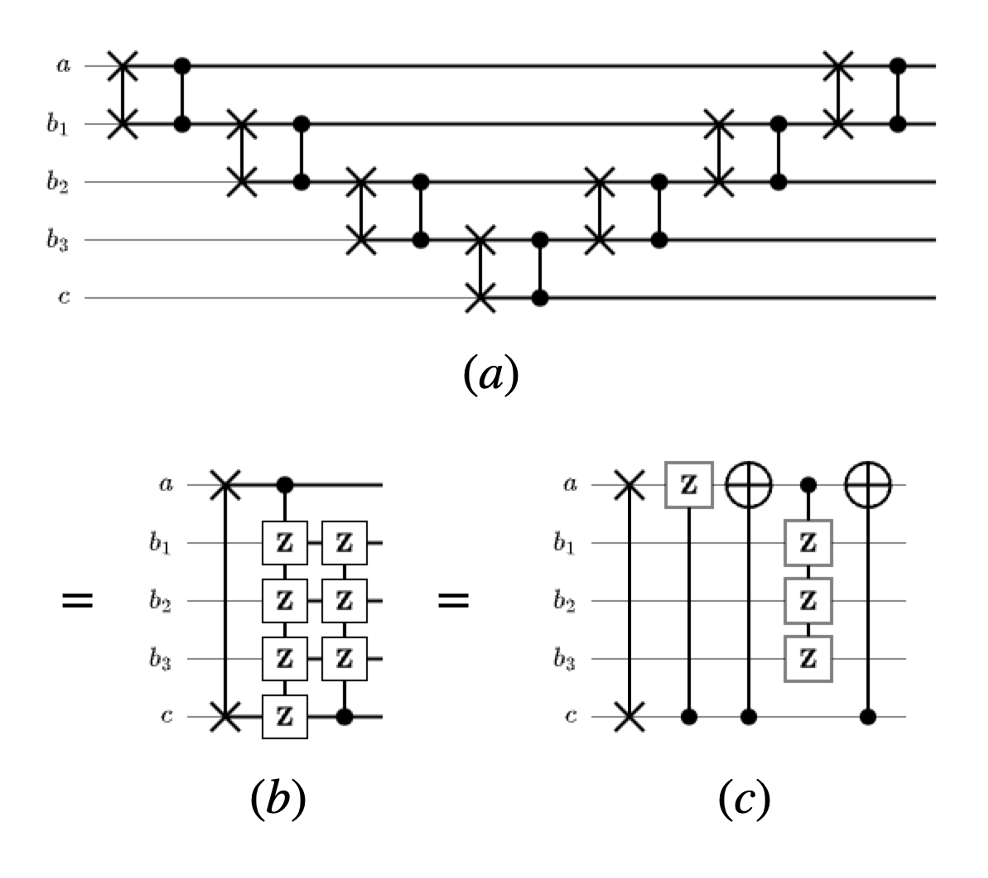}
\end{subfigure}
\hfill
\begin{subfigure}[c]{0.55\textwidth}
    \centering
    \includegraphics[width=\linewidth]{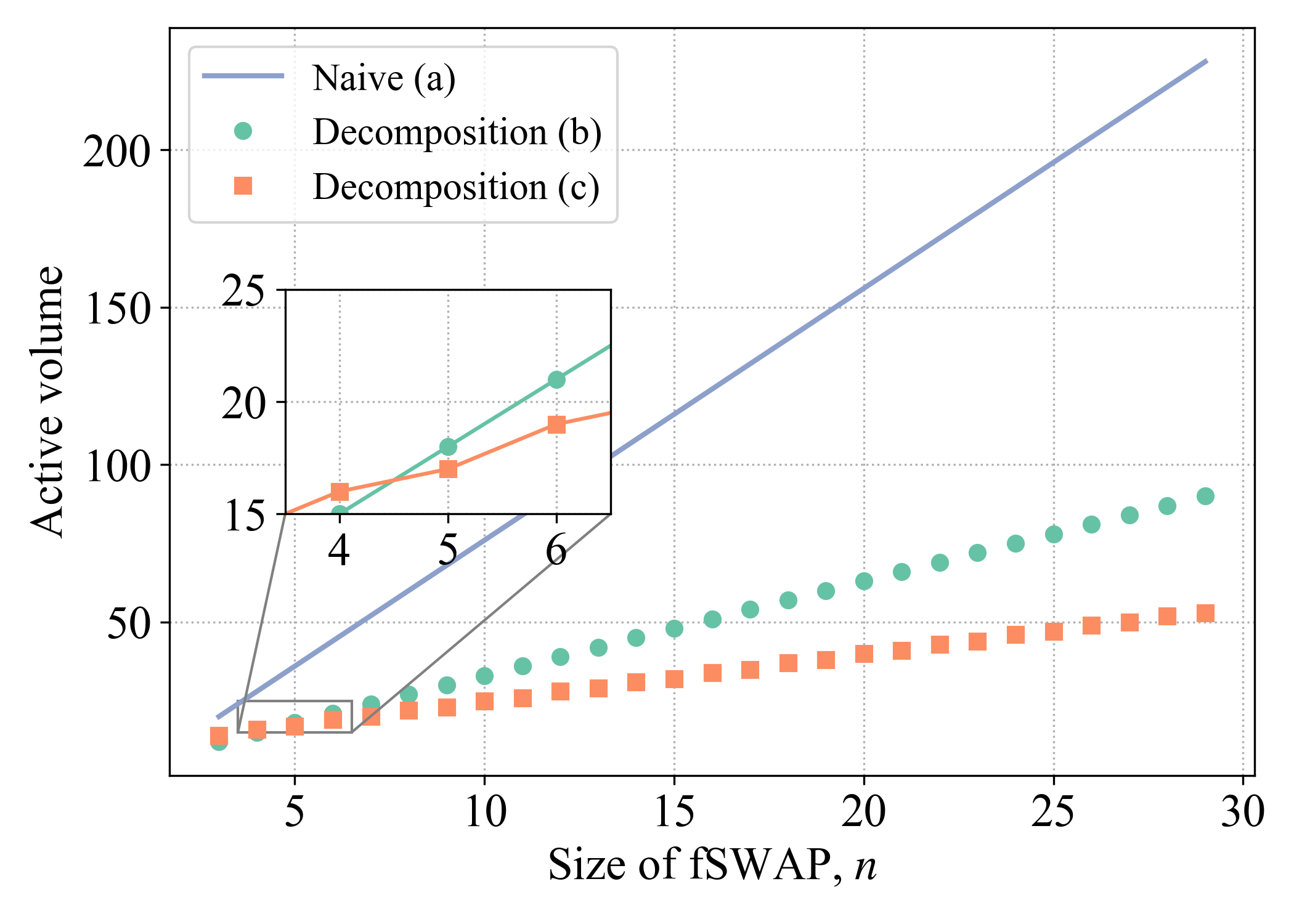}
\end{subfigure}
\caption{Decompositions and costs of non-local fSWAP operations. \textbf{(Left)} An example of a fermionic swap (fSWAP) between non-adjacent modes $a$ and $c = a + n$ where $n=4$. Three equivalent circuit decompositions are shown.
\textbf{(Right)} Comparing active volume costs of different decompositions of a non-adjacent fSWAP of varying sizes $n$ (number of fermions between the two non-local fermions to swap).
This naïve gate decomposition of a fSWAP costs $4(2n -1)$ logical blocks.
The active volume cost of the first decomposition (b) is $\lceil\frac{3}{2}(n-1)\rceil + \lceil\frac{3}{2}n\rceil + 4$ logical blocks. 
The cost of (c) is $\lceil\frac{3}{2}(n-1)\rceil + 2 + 5 + 4$ logical blocks. See \cref{subsubsec:fanouts} and \cref{app:cx_cz} for optimized derivations of the component active volume.
The inset figure shows the crossover point at $n=5$. 
}\label{fig:fswap_comparisons}
\end{figure}

Having fixed the Jordan-Wigner ordering strategy, we now consider the cost of implementing the resulting fSWAP network.
An  individual fSWAP operation consists of a sequence of SWAP gates and controlled-$\mathrm{Z}$ operations.
In an active volume architecture which supports non-local connectivity, 
the SWAP gates need not be physically implemented and can instead be absorbed into a relabeling of qubits.
As a result, the cost of the fSWAP is entirely determined by the required $\mathrm{CZ}$ gates.
For non-adjacent fermionic modes ($i$ and $i+n$ where $n > 1$), an fSWAP is typically realized via a double cascade of SWAPs and CZ gates \cite{seki2022spatial, kan2024resource-optimized, jiang2018quantum}.
Depicted in \cref{fig:fswap_comparisons} (a), intermediate qubits are temporarily displaced and then restored, which results in a symmetric forward-and-reverse cascade.

Active volume is sub-additive so the multi-target gates in decompositions (b) and (c) are cheaper than multiple singly targeted gates, see~\cref{tab:av_costs}.
~\cref{fig:fswap_comparisons} compares the three decompositions, demonstrating that at $n=5$ there is a crossover where if $n \geq 5$ (c) is preferable and (b) is favorable for smaller fSWAP sizes. 
After applying the two decompositions (depending on $n$) an example fSWAP network is given in~\cref{fig:big_summary} \textcolor{blue}{(IIIc)}.
While there are further works studying fermionic routing~\cite{maskara2025fastsimulationfermionsreconfigurable, constantinides2025lowdepthfermionroutingancillas}, after considering the adjustments in our work, the active volume contribution of the fSWAP network becomes a relatively minor component of the total cost of a Trotter step -- see \cref{fig:av_breakdowns} for a breakdown of active volume by subroutine.

\subsection{Optimizing circuit fragments for active volume}\label{subsec:preoptimized_circuits}

One way to estimate the active volume of a subroutine is to sum the cost of each gate operation using lookup tables from Refs.~\cite{litinski2022activevolume, heavey2025improved}.
However, this results in a loose upper bound since the costs are sub-additive.
Tighter estimates can be obtained by optimizing active volume counts for sets of gate operations using simplified oriented ZX (OZX) diagrams as in~\cite{litinski2022activevolume, heavey2025improved}.
In this section, we derive active volume counts for CNOT and CZ fanout operations, replace gate operations in the plaquette terms with equivalent Pauli product rotation (PPR) operations, and group adder segments within Hamming weight phasing.
The resulting optimized costs are summarized in~\cref{tab:av_costs} with detailed derivations provided in~\cref{appen:av-subroutine}.

\begin{table}[H]
\centering
\renewcommand{\arraystretch}{1.1}
\setlength{\tabcolsep}{6pt}
\begin{tabular}{|p{0.4\linewidth}|c|p{0.15\linewidth}|}
\hline
\textbf{Operation} & \textbf{Active volume} & \textbf{Notes} \\
\hline
CNOT fanout ($m$ targets) & $\left\lceil \frac{3}{2}m \right\rceil + 3$ & ~\cref{fig:fanout_av}\textcolor{blue}{a-c} \\
CZ fanout ($m$ targets) & $\left\lceil \frac{3}{2} m \right\rceil + 2$ & ~\cref{fig:fanout_av}\textcolor{blue}{d-f} \\
$YX$ Clifford PPR & 5 & ~\cref{fig:av_yx_pi4_ppr} \\
$XX$ and $YY$ $\pi/8$ PPR & 64 & ~\cref{fig:xx_yy_pprs} \\
CNOT and CZ & 5 & ~\cref{fig:cx_cz} \\
Half-adder compute--uncompute pair & $54$ & ~\cref{fig:hw_adder_av_counts} \\
Full-adder compute--uncompute pair & $74$ & ~\cref{fig:hw_adder_av_counts} \\
Adder-like pair in GPGA & $57$ & ~\cref{fig:phasing_circuit_av_counts} \\
\hline
\end{tabular}
\caption{Summary table for active volume of subroutines used in this work.}
\label{tab:av_costs}
\end{table}

\subsubsection{Reducing the active volume of CNOT and CZ fanouts}\label{subsubsec:fanouts}

Multi-target CNOT and CZ gates, i.e. collections of CNOT or CZ gates sharing a common control qubit but acting on different targets, are widely used in quantum subroutines such as adders and QROMs.
Estimating the active volume of a fanout by summing the cost of individual two-qubit gates gives a loose upper bound.
Naively, a single CNOT gate requires $4$ logical blocks in an active volume architecture~\cite{litinski2022activevolume}, so a CNOT fanout with $m$ targets would be assigned a cost of $4m$ logical blocks.
Instead we derive optimized OZX representations; see~\cref{app:fanout_av}.
The resulting fanout costs remain linear in $m$, but the leading coefficient is reduced from $4$ to $3/2$, with additive constants of $3$ and $2$ for CNOT and CZ fanouts, respectively.
These counts will be used in later sections to estimate the cost of the adders in Hamming weight phasing as well as CZ fanouts used in fermionic swap networks.

\subsubsection{Replacing gate operations with PPRs}\label{subsubsec:PPR}

Pauli product rotations (PPRs) can often substitute several elementary quantum gate operations, e.g. $\exp[\mathrm{i} \phi(Z \otimes Z)]$ replaces two CNOTs and an $R_z(2 \phi)$.
Additionally, active volume of general PPRs with different $X$ and $Z$ weights have been optimized and costed in Ref.~\cite{litinski2022activevolume}.
We leverage these developments to reduce active volume counts for two of the more costly routines in the plaquette evolution term: the evolution under $XX + YY$ and the two-mode FFFT gate.

\paragraph{Evolution under $\boldsymbol{(X_A\otimes X_B)+(Y_A\otimes Y_B)}$:}
We begin by considering the operator $\exp[\mathrm{i} s (X_A \otimes X_B)] \cdot \exp[\mathrm{i} s (Y_A \otimes Y_B)]$, which admits the matrix representation
\begin{align}
    \exp[\mathrm{i} s (X_A \otimes X_B)] \cdot \exp[\mathrm{i} s (Y_A \otimes Y_B)] \cong
    \begin{bmatrix}
        1 & 0 & 0 & 0 \\
        0 & \cos(2s) & \mathrm{i} \sin(2s) & 0 \\
        0 & \mathrm{i} \sin(2s) & \cos(2s) & 0 \\
        0 & 0 & 0 & 1
    \end{bmatrix}.
\end{align}
Circuit decompositions of this operation are shown in~\cref{fig:XXYY_term} using standard gates and in~\cref{fig:XXYY_term_with_ppr} using PPRs.
The PPR decomposition is already substantially cheaper in active volume prior to OZX optimization: excluding rotations, the standard-gate decomposition costs $34$ logical blocks\footnote{Recall that a $\sqrt{X}$ gate and its inverse each cost $5$ logical blocks (see ~\cref{app:root_x_av}), a Hadamard costs $3$, and a CNOT costs $4$ \cite{litinski2022activevolume}.} compared to $10$ blocks for the PPR construction.
We further reduce this cost by decomposing the $Y \otimes X$ Clifford PPR in terms of a joint $Y \otimes X \otimes Y$ PPR and an ancilla qubit prepared in the $\ket{+}$ state and optimizing the corresponding OZX in~\cref{app:yxz_ppm_av}.

\begin{figure}[H]
    \centering
    \begin{subfigure}[b]{0.44\textwidth}
        \centering
        \includegraphics[trim=15 15 15 20,clip,width=\textwidth]{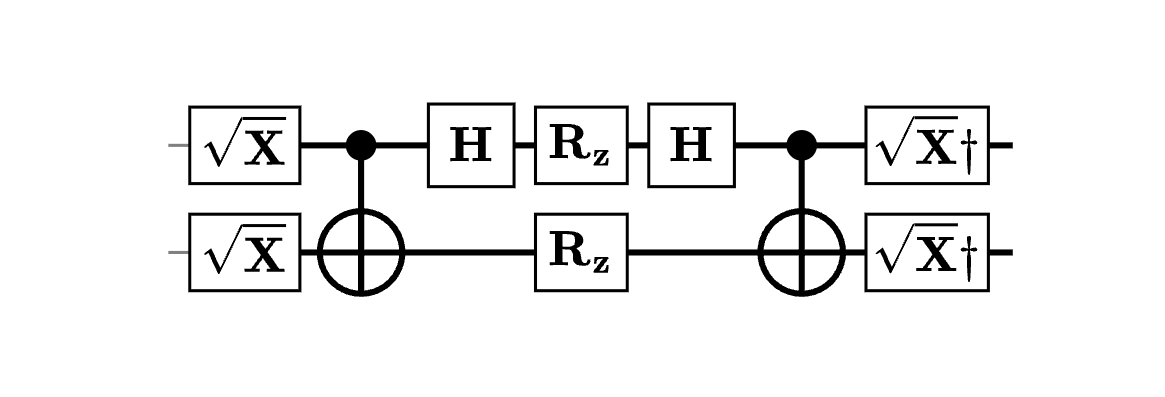}
        \caption{Decomposition using standard gates. Note that each $\sqrt{X}$ gate can be decomposed into $H S H$ gates.}
        \label{fig:XXYY_term}
    \end{subfigure}
    \hfill
    \begin{subfigure}[b]{0.44\textwidth}
        \centering
        \includegraphics[trim=20 20 20 25,clip,width=\textwidth]{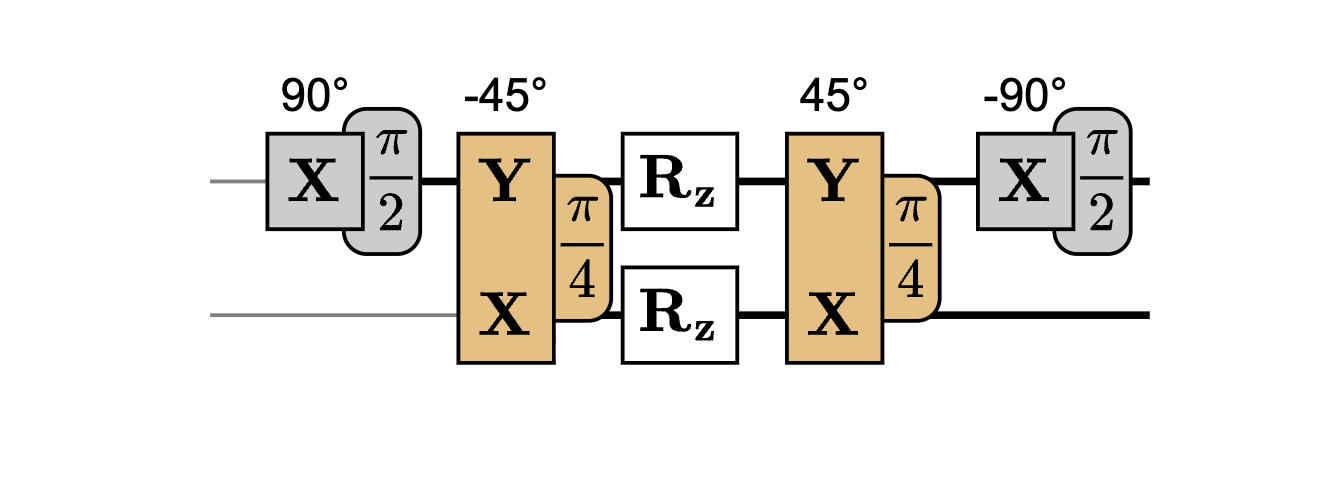}
        \caption{Decomposition using PPRs. The Pauli gates, i.e. rotation angle of $\pm \pi/2$, have no active volume cost.}
        \label{fig:XXYY_term_with_ppr}
    \end{subfigure}
    \caption{Standard and PPR circuit decomposition for the $\exp[\mathrm{i} s(X_A \otimes X_B)] \cdot \exp[\mathrm{i} s (Y_A \otimes Y_B)]$.}\label{fig:expXXYY}
\end{figure}

\paragraph{Two-mode fast fermionic Fourier transform:} Another significant source of non-Clifford cost comes from the two-mode fast fermionic Fourier transform (FFFT) gate.\footnote{Referred to in Ref.~\cite{campbell2022early} as $F$ operators.}
A two-mode FFFT has the following matrix representation in the occupation number basis:
\begin{align*}
    \begin{bmatrix}
    1 & 0 & 0 & 0 \\
    0 & 1/\sqrt{2} & 1/\sqrt{2} & 0 \\
    0 & 1/\sqrt{2} & -1/\sqrt{2} & 0\\
    0 & 0 & 0 & -1
\end{bmatrix}.
\end{align*}
The Clifford+T decomposition of this operation gives an active volume cost of $89$ blocks\footnote{We assume a decomposition of a controlled Hadamard gate using $S, H, T$ gates followed by a CNOT then $T^\dagger, H, S^\dagger$ gates.}.
Our first improvement is to re-express the circuit in terms of PPRs, and conjugate by Cliffords chosen to reduce the active volume cost. 
In particular, PPRs with an odd number of $Y$s require reactive $Y$ operations and are therefore more expensive.
The PPR decomposition we consider in~\cref{fig:fh_plaq_diagonalizing_unitaries} (b) is conjugated with single-qubit Cliffords so that the remaining non-Clifford operations contain an even number of $Y$s.
These inner Cliffords then cancel, giving the circuit in~\cref{fig:big_summary} \textcolor{blue}{(IVb)}. 
We expect the outer Cliffords to also cancel between plaquettes through the fSWAP network, but conservatively retain the active volume cost here.

\begin{figure}[ht]
\centering
\includegraphics[trim=10 10 10 10,clip,width=0.7\linewidth]{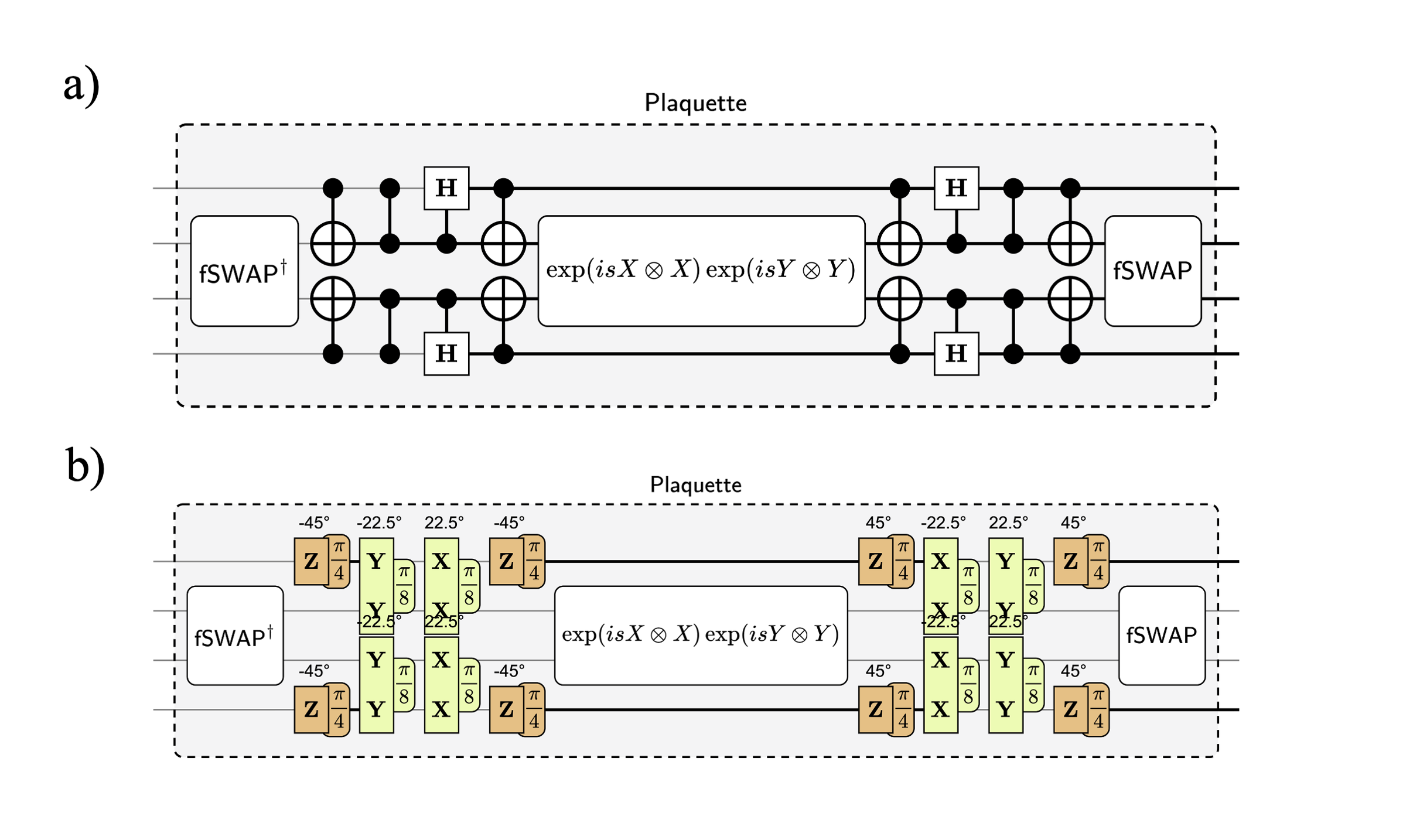}
\caption[]{Active volume optimization of two-mode FFFT within a plaquette evolution.
\textbf{(a)} Standard gate decomposition of two-mode FFFT in a plaquette term;
\textbf{(b)} PPR decomposition of two-mode FFFT, each using two $\pi/8$-PPRs (with even $Y$ weight) and two Clifford PPRs (Cliffords are in orange). These FFFTs are conjugated by single-qubit Cliffords, to guarantee that the non-Clifford operations contain an even number of $Y$ operations and therefore do not need extra blocks for a reactive $Y$. The inner Cliffords cancel resulting in the final decomposition.
} 
\label{fig:fh_plaq_diagonalizing_unitaries}
\end{figure}

After these manipulations, each two-mode FFFT reduces to applying a pair of PPRs with Paulis $X \otimes X$ and $Y \otimes Y$ and angle $\frac{\pi}{8}$.
The decomposition of this operation into Pauli product measurements (PPMs), together with the compilation down to logical blocks is in~\cref{app:zxppms}.
The resulting active volume cost is $64$ logical blocks, including magic-state preparation and reactive measurements.
Adding the outer $\pi/4$ $Z$ PPR gives a total cost of $69$ blocks.

\subsubsection{Using optimized adder segments in Hamming weight phasing}\label{subsub:adders}
In Hamming weight phasing, there are two main subroutines: (un)computing the Hamming weight of the target register and a phasing circuit that is effectively a sequence of adder-like segments with a payload phase operation.
Half (two-bit) and full (three-bit) adders are the building blocks of both subroutines, and the corresponding oriented ZX diagrams have been optimized in Ref.~\cite{litinski2022activevolume}. 
To compute the Hamming weight of an $n$-qubit register, we use $n - \lceil \log_2(n+1) \rceil$ full adders and $\lfloor \log_2(n+1) - w(n)\rfloor$ half adders, for a total of $n - w(n)$ adders \cite{perriello2024quantum}.
Using the optimized constructions reduces the active volume cost of a half-adder compute--uncompute pair from $57$ (1 left elbow including the resource state cost, 1 right elbow and 2 CNOTs) to $54$ blocks and that of a full-adder pair from $85$ (4 CNOTs with two qubit targets, 2 CNOTs, 1 left elbow with resource state, and 1 right elbow) to $74$ blocks.
The adder-like segments appearing in the phasing circuit are similarly reduced from $75$ (2 CNOTs with two qubit targets, 3 CNOTs, 1 left elbow with resource state, and 1 right elbow) to $57$ blocks using the optimized values from Ref.~\cite{litinski2022activevolume}.
The circuit fragments in question and the corresponding optimized OZX diagrams demonstrating these costs are expanded on in~\cref{appen:adder_av}.

\subsection{Batched Hamming weight phasing}\label{subsec:batched_hwp}

Up to this point, our compilation choices have targeted reductions in active volume.
Here, we consider a space-time trade-off in which a modest increase in active volume can substantially reduce the logical-qubit requirement, an important constraint for early-generation quantum computers.
While Hamming weight phasing is gate-efficient, there is a space-time tradeoff as this subroutine requires $O(m)$ ancillae.
To leverage HWP with reduced qubit costs, we consider a batched version \cite{kivlichan2020improved, campbell2022early}, in which instead of treating an entire tower of $m$ equiangular rotations, the tower is split into $\beta$ batches.
Each batch of $m/\beta$ rotations is then treated using HWP.
For simplicity, we consider $\beta$ to be a power of two.
Computing the Hamming weight uses $m/\beta - w(m/\beta)$ ancillae the generalized phase-gradient addition circuit uses an additional $\lfloor \log_2(m/\beta) \rfloor + 1$.
For small numbers of batches, the number of logical qubits can be significantly reduced while minimally increasing the active volume. 

\begin{figure}[H]
\centering
\includegraphics[trim=10 10 10 10,clip,width=0.9\textwidth]{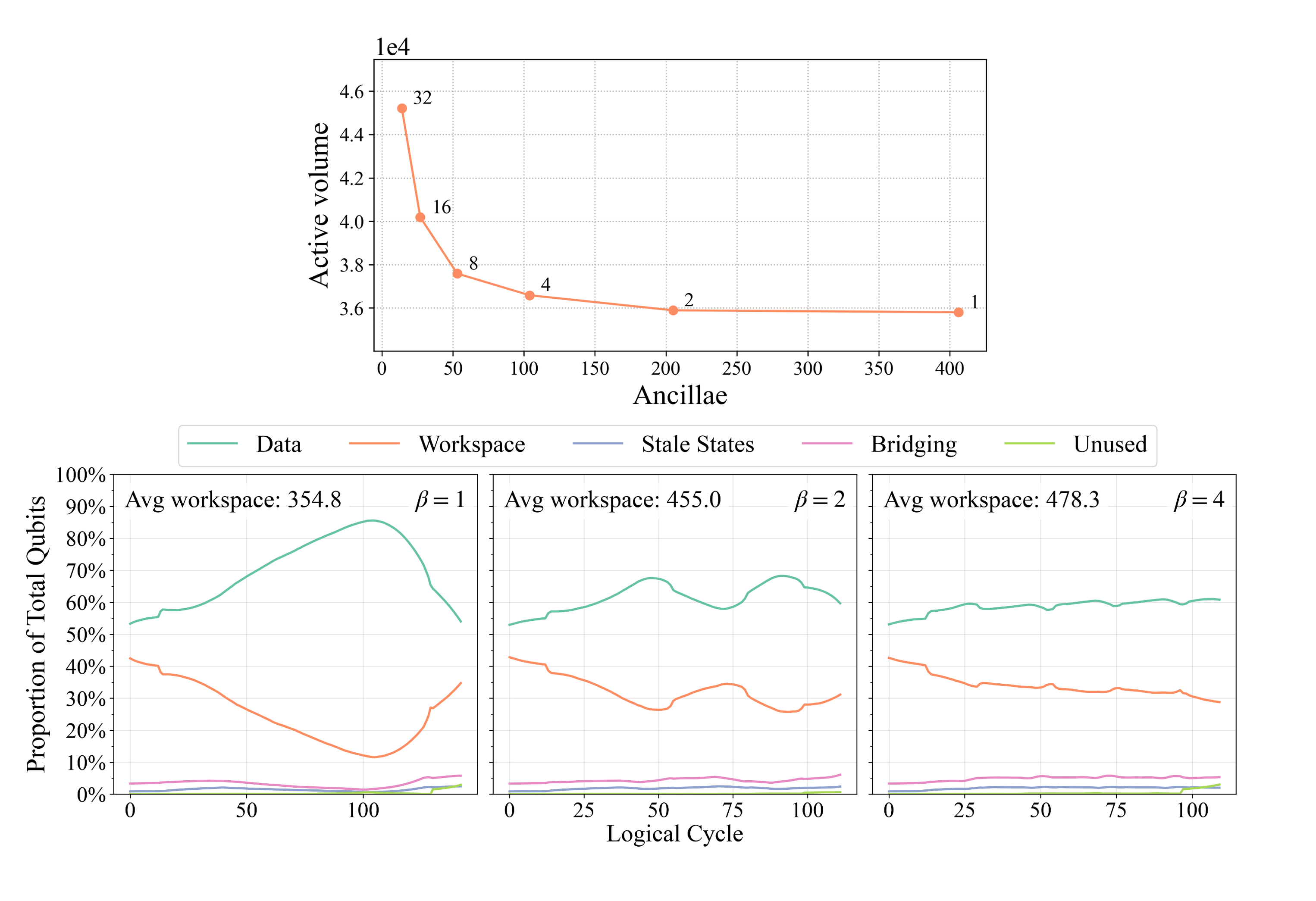}
\caption{Resource counts at $20 \times 20$ versus the number of ancillae in batched Hamming weight phasing for a single interaction term in the Trotterized time evolution operator. 
The total error from rotations is set to $10^{-2}$.
\textbf{Top:} The increase in the active volume and decrease in number of ancillae qubits as we increase the number of batches. The number of batches $\beta$ is annotated next to each point.
\textbf{Bottom:} The partitioning of qubits among the five active volume architecture qubit roles throughout execution of the interaction term using $\beta = \{1,2,4\}$. Results were obtained using the block scheduler of \cite{heavey2026scheduler}. Data qubit (memory) allocation steadily increases during the computation of each batch, which corresponds to a reduction in the workspace size, creating a lemon shape. The height of these lemons decreases and flattens out as we increase the number of batches, which results in a larger average workspace size.
}
\label{fig:batching_demo}
\end{figure}

We demonstrate this by considering the interaction term of Trotterized time evolution unitary, in which we have a tower of $L^2$ rotations conjugated by CNOTs (\cref{fig:big_summary} \textcolor{blue}{(IIIa)}).
For $L=20$, we consider different batch sizes and track the active volume and ancilla requirements in~\cref{fig:batching_demo}.
Increasing the number of batches incurs only a modest active volume overhead while substantially reducing the qubit requirement, particularly at small batch counts.
However, these aggregate metrics by themselves do not help identify the preferred compilation choice: the benefit of the space-time tradeoff will depend on the total machine size and on how qubits are allocated over time, not only the maximum qubit allocation.
In an active volume architecture, the total number of qubits is partitioned into different roles at each logical cycle, distinguishing data and workspace qubits from temporary overheads associated with stale-state storage and bridging~\cite{litinski2022activevolume}, as well as currently unused qubits.
To evaluate this trade-off we consider logical-cycle-level schedules generated by the block scheduler of \cite{heavey2026scheduler}.
Larger batch counts require greater active volume, but they also leave more workspace available on average, allowing more logical work to be performed in each cycle.

The block scheduler’s qubit-usage data allows us to identify the optimal batch size for a given logical-qubit budget.
Let $\Delta AV$ and $\Delta \bar{W}$ denote the percentage changes in active volume and average workspace size respectively, when moving from the currently selected batch size to a candidate batch size.
The candidate is advantageous when
\begin{equation}
    \Delta \bar{W} > \Delta AV 
\end{equation}
i.e. the relative increase in available workspace exceeds the relative active volume overhead.
We demonstrate this empirically for the 1323 logical qubit case in~\cref{fig:optimal_batch_size_vs_computer_size} where $\beta=8$ achieves the highest workspace efficiency and therefore the shortest logical cycle count. 
As the available number of logical qubits increases, this space-time trade-off shifts in favor of smaller batch sizes.
Therefore, although our primary objective is to reduce active volume, selecting among available space-time trade-offs requires proceeding further down the compilation stack and evaluating the logical-cycle schedule under the target machine’s resource constraints.

\begin{figure}[H]
\centering
\includegraphics[width=0.95\textwidth]{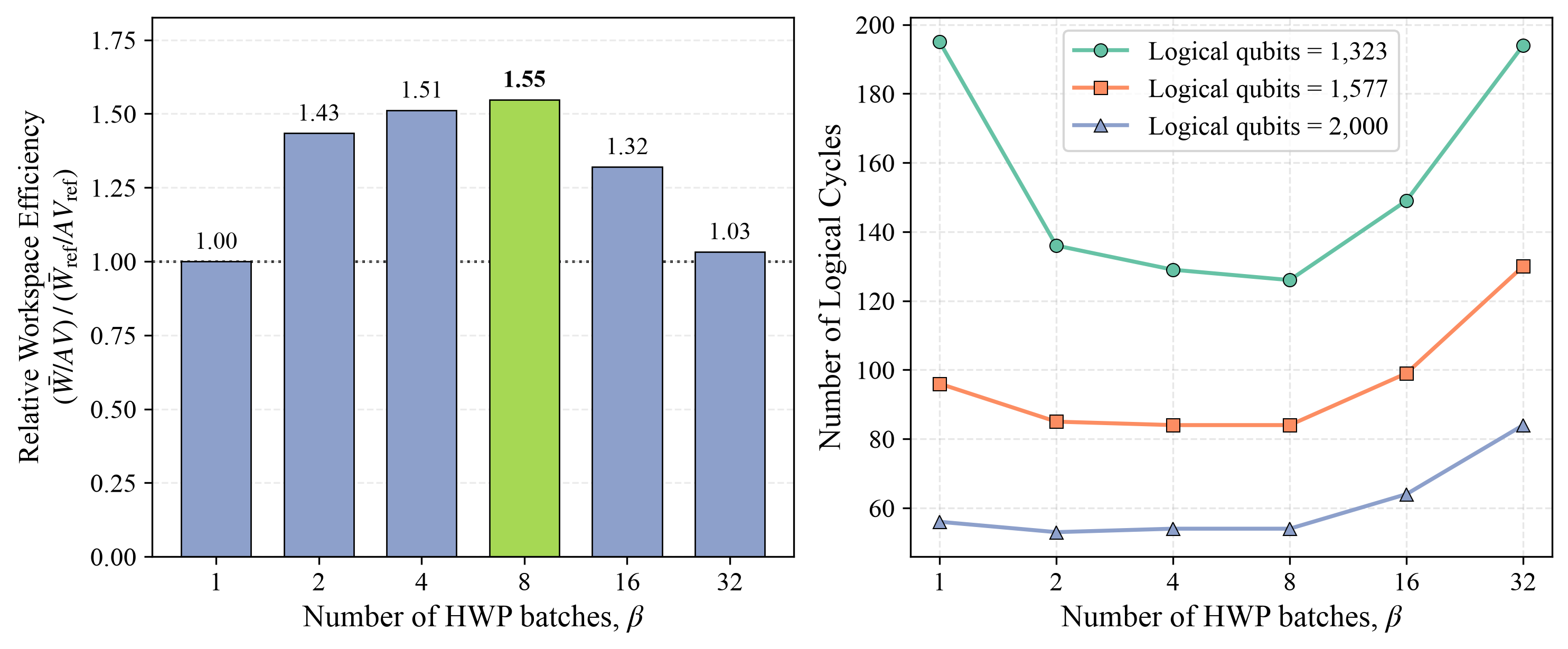}
\caption{
Batch-size optimization for catalyzed Hamming-weight phasing of a single interaction term on the $20\times20$ lattice. \textbf{Left:} For a 1323 logical qubit machine, the ratio of average
workspace size to active volume ($\bar{W}/AV$), normalized to the single-batch implementation, is maximized at $\beta=8$. \textbf{Right:} Logical-cycle counts obtained using the block scheduler of \cite{heavey2026scheduler} for machines with 1323, 1577, and 2000 logical qubits. These qubit counts are arbitrarily chosen but are large enough to run the computation without an out-of-memory error. The preferred batch count decreases as more qubits become available, from $\beta=8$ to $\beta=4$ and $\beta=2$, respectively. For the 1323 qubit case, the variation in the number of logical cycles across batch sizes follows the trend shown in the left panel, supporting
the proposed explanation for why some batch sizes outperform others. 
}
\label{fig:optimal_batch_size_vs_computer_size}
\end{figure}

\subsection{Circuit error analysis}\label{subsubsec:error_budgeting_for_circuit}

The relative energy error of $0.51\%$ represents the total energy error budget, which must be apportioned among the distinct sources of error in the algorithm:
\begin{enumerate}
    \item \textbf{QPE error}, $\epsilon_\text{QPE}$;  We consider the intrinsic error of QPE due to a finite phase register. This error contribution is isolated from other implementation errors by considering an exact time evolution unitary and ground state preparation. 
    This discretization error determines the number of phase qubits required in the algorithm.
    \item \textbf{Time-evolution implementation error}; The error due to approximating the time-evolution unitary $U(\tau)=\mathrm{e}^{\mathrm{i}H\tau}$ by a unitary $\tilde{U}_\tau$. This approximation error has two distinct origins:
    \begin{enumerate}
        \item \textit{Suzuki–Trotter error}; Implementing the time evolution via the second-order product formula described in \cref{eqn:trotter-intro} with the plaquette Hamiltonian partitioning from \cite{campbell2022early} introduces an error that is controlled by increasing the number of Trotter steps, $r$. 
        The error of a single Trotter step of duration $s$ can be bounded in terms of a commutator norm $W$, as shown in~\cite[Proposition 10]{childs2021theory} and \cite[Equation 6]{kivlichan2020improved}:
        \begin{align}\label{eq:2ndtrotter_s}
        \norm{U_{s}  - U(s)} \leq \frac{Ws^3}{r^3}.
        \end{align}
        We note that
        $W$ depends on the system, Hamiltonian splitting, and term ordering
        and can be costly to estimate; we defer the discussion of how this is computed to the next section.
        For the full evolution over time $\tau$ broken into $r$ steps, this yields
        \begin{align}\label{eq:2ndtrotter}
        \norm{\left(U_{\tau/r} \right)^r - U(\tau)}& \leq r \norm{U_{\tau/r} - U(\tau/r)}\leq r W\left(\frac{\tau}{r} \right)^3 = \frac{W\tau^3}{r^2},
        \end{align}
        where $(U_{\tau/r})^r$ describes the Suzuki-Trotter approximation of the time evolution operator at time $\tau$ with otherwise perfect gate synthesis.
        \item \textit{Rotation synthesis error}; The circuit contains arbitrary rotation (payload) gates appearing from the HWP subroutine.
        We implement the rotation synthesis using 
        repeat-until-success mixed fallback
        rotation synthesis of \cite{shorter_circs}.
        For a diamond norm difference of $\Delta$ in the rotation gate, the estimated mean number of T-gates required per arbitrary rotation \cite[Table 1]{shorter_circs} is
        \begin{equation}\label{eqn:rottoffs}
            \Delta<10^{-4}: \qquad N_{\mathrm{rot-T}}(\Delta) = 0.53\log_2(1/\Delta) + 4.86.
        \end{equation}
        We assume T-gates are catalyzed using CCZ states \cite{gidney2019efficient}.\footnote{The implied 1:2 conversion rate in resource states has been used in prior resource estimates for Fermi-Hubbard \cite{campbell2022early, kan2024resource-optimized} as a reasonable cost proxy. However, in practice, this ratio will depend on the architecture and the types of magic states distilled.
        We use this cost proxy for the error analysis but assume distillation of both Toffoli and T states when counting the active volume.
        }
        Given $\beta$ batches and $r$ Trotter steps there are $\beta(4r+1)$ arbitrary rotations in one query. 
        The total unitary error given each rotation is implemented to error $\Delta$ is given by repeated use of the triangle inequality:
        \begin{equation}\label{eqn:splitting_rot_error}
            \|(U_{\tau/r})^r - (\tilde{U}_{\tau/r})^r \|\leq  \beta(4r+1)\Delta.
        \end{equation}
    \end{enumerate}
    
    Using the triangle inequality, the total unitary error can be bounded by the sum of two unitary errors
    \begin{subequations}
    \begin{align}
    \|U(\tau) -(\tilde{U}_{\tau/r})^r \|
    &= \|U(\tau) - (U_{\tau/r})^r + (U_{\tau/r})^r -(\tilde{U}_{\tau/r})^r \| \\
    &\leq \|U(\tau) - (U_{\tau/r})^r\|
    + \|(U_{\tau/r})^r - (\tilde{U}_{\tau/r})^r \| .
    \end{align}
    \end{subequations}
    The total unitary error must subsequently be translated into an energy error, which we will denote by $\epsilon_U$. We will later apportion this energy error into two pieces, an energy error budget resulting from the Suzuki-Trotter approximation, $\epsilon_\text{TS}$ and an energy error budget for imperfect rotation synthesis, $\epsilon_\text{rot}$.
    \item \textbf{QPE implementation/synthesis error}; Independent of the approximation of the time evolution $\tilde{U}$ described above, the third contribution corresponds to the approximation of the QPE circuit itself.
    If the perfect QPE circuit for unitary $U$ is denoted $V_{\text{QPE}}$, then the QPE circuit with implementation error is denoted $\tilde{V}_{\text{QPE}}$.
    At this stage, we neglect errors associated with state preparation, the phase state window preparation, and the inverse QFT.
    Under these assumptions, the implementation error in QPE stems from rotation synthesis in the preparation of the catalyst state.
    
    While we envisage using the same repeat-until-success synthesis method as \cref{eqn:rottoffs}, the catalyst rotations are given an independent error contribution.
    Recall from \cref{subsec:hwp} that the number of catalyst qubits for the `PIG' ordering with $\beta$ batches of HWP is given by $2\lfloor \log_2(L^2/\beta)\rfloor + 3 $ where we apportion the same error to each rotation as in \cref{eqn:splitting_rot_error}.
    As with the previous contributions, this unitary error is later mapped to an energy error, which we denote by $\epsilon_{\text{cat}}$.
\end{enumerate}

\subsubsection{Error budgeting}
Combining all three contributions, the total energy error must satisfy
\begin{align}
\epsilon \leq \epsilon_{\text{QPE}} + \epsilon_{U} + \epsilon_{\text{cat}} .
\end{align}
To relate the corresponding unitary implementation errors to deviations in the estimated energy, we use an effective-Hamiltonian analysis, which bounds eigenenergy shifts in terms of the operator-norm distance between the ideal and approximate unitaries~\cite{bhatia1984bound,kivlichan2020improved}.
The resulting bounds determine the allowable time-evolution and catalyst-synthesis errors for a given allocation of the total energy budget; the derivation is given in~\cref{appen:error_budgets}.

A resource-efficient allocation of the error budget is obtained by minimizing the Toffoli count while constraining the sum of error contributions to be within the total error budget.
We further decompose the time-evolution error contribution as
\begin{align}
\epsilon_U \approx \epsilon_{\text{ST}}+\epsilon_{\text{rot}},
\end{align}
where $\epsilon_{\text{ST}}$ and $\epsilon_{\text{rot}}$ denote the contributions from Suzuki--Trotter and rotation-synthesis errors, respectively\footnote{This is a first-order approximation that allows a linear attribution of energy error to each source. The approximation and its relation to the underlying unitary-error bounds are detailed in the appendix.}.
This gives the working error budget
\begin{align}\label{eqn:error_budgetmain}
\epsilon \geq
\epsilon_{\text{QPE}}
+\epsilon_{\text{ST}}
+\epsilon_{\text{rot}}
+\epsilon_{\text{cat}}.
\end{align}

We optimize the allocation of~\cref{eqn:error_budgetmain} jointly with the evolution time $\tau$.
The allocation is parameterized by ${\delta_{\text{QPE}},\delta_{\text{ST}},\delta_{\text{rot}}}$, where each $\delta_i\in(0,1)$ specifies the fraction of the remaining error budget assigned to the corresponding error source; by construction, this parameterization ensures the error inequality.
The evolution time is optimized subject to an upper bound $\tau_{\max}$, which we choose to be less restrictive than the conservative no-aliasing bound $2\pi/\lVert H_{\mathrm{FH}}\rVert_1$.
Since classical estimates already determine the ground-state energy to a coarse precision~\cite{leblanc2015solutions,sorella2023systematically,Levy2024,roth2025superconductivity,liang2026investigating,Viteritti2026}, this prior information can be used to increase $\tau$, with the quantum algorithm only required to resolve the remaining uncertain bits.
Increasing $\tau$ reduces the QPE error at the cost of increased Trotter error.
We conservatively assume a classical precision of $5\%t$ per site; further details and justification are given in~\cref{appen:error_budgets}.
For a given allocation, the optimization parameters determine the principal circuit parameters: the number of QPE queries $N$, the number of Trotter steps $r$, and the synthesis precisions $\Delta_{\text{rot}}$ and $\Delta_{\text{cat}}$ of the payload and catalyst rotations.
The optimization procedure is summarized in~\cref{alg:opt_sim}.

\begin{algorithm}[h!]
\caption{Resource-optimized error budgeting }
\renewcommand{\arraystretch}{1.2}
\label{alg:opt_sim}
\begin{algorithmic}[1]
\Require{System size $\{L_x,L_y\}$, kinetic coefficient $t$, potential coefficient $u$, total error budget $\epsilon = 0.0051 t L_x L_y$ and number of batches for HWP $\beta$. }
\vspace{1mm}
\State Start with an initial guess for the error budgeting parameters and evolution time $\{\delta_\text{QPE}, \delta_\text{ST}, \delta_\text{rot}, \tau\}$ where $\delta_i \in (0, 1)$ and $\tau \in (0,\tau_\text{max})$. From the $\delta$'s compute the energy error allocation, $\epsilon$'s.
\State Compute circuit parameters: number of queries $N$, number of trotter steps $r$, error per rotation $\Delta_\text{rot}$, error per catalyst rotation $\Delta_\text{cat}$:
\begin{align*}
N(\epsilon_\text{QPE}, \tau) &= 2^{\ceil{\log_2(\pi/\arctan(\epsilon_{\text{QPE}}\tau))}-1}
\\
r(\tau, \epsilon_\text{TS}, u, t, L_x, L_y) &= \Bigg\lceil \sqrt{\frac{W_\text{PLAQ}(L_x,L_y,u,t) \tau^3}{2\sin(\epsilon_{\text{TS}}\tau/2)}}\Bigg\rceil\\
\Delta_\text{rot}(\epsilon_\text{rot},\beta, \tau, r)&= \frac{2}{\beta(4r+1)}\sin\left(\frac{\epsilon_\text{rot}\tau}{2} \right)\\
\Delta_\text{cat}(\epsilon_\text{cat},\beta, \tau, L_x, L_y)& = \frac{\sin(\epsilon_\text{cat}\tau/2)}{\lfloor \log_2(L_xL_y/\beta)\rfloor} + 3/2.
\end{align*}
\State Evaluate an estimate for the Toffoli count for a single query of the time evolution unitary:
\begin{align*}
\mathrm{Toff}_{\text{FFFT}} &= (4r+2)L_xL_y \\
\mathrm{Toff}_{\text{HW}} &= \beta(4r+1)\left(\frac{L_xL_y}{\beta} + \lfloor \log_2(L_xL_y/\beta)\rfloor - w(L_xL_y/\beta) + 1\right) \\
\mathrm{Toff}_{\text{rot}} & = \beta (4r+1)(0.265\log_2(1/\Delta_\text{rot}) + 2.43)\\
\mathrm{Toff}_{\text{cat}} & = (2\lfloor \log_2(L_xL_y/\beta)\rfloor + 3) \times (0.265\log_2(1/\Delta_\text{cat}) + 2.43)
\end{align*}
\Statex Update initial parameters to minimize over this cost function
\[
\min_{\delta_\text{QPE}, \delta_\text{ST}, \delta_\text{rot}, \tau} \quad
N\times(\mathrm{Toff}_{\text{FFFT}} + \mathrm{Toff}_{\text{HW}} + \mathrm{Toff}_{\text{rot}} ) + \mathrm{Toff}_{\text{cat}}
\]
\Statex \textbf{Output:} Optimal parameters $\{\delta_\text{QPE}^\star, \delta_\text{ST}^\star, \delta_\text{rot}^\star, \tau^\star\}$ and corresponding chosen circuit parameters $\{ N, r, \Delta_\text{rot}, \Delta_\text{cat}\}$.
\end{algorithmic}
\end{algorithm}

We use Toffoli count as a cost metric for the error budgeting because it admits simple analytic estimates, enabling fast evaluation during the optimization.
A more representative metric, such as active volume, would require explicit circuit compilation and detailed architectural considerations, making it significantly more 
computationally intensive.
As indicated by the Clifford/non-Clifford breakdown (\cref{fig:av_breakdowns}), the Toffoli count remains a reasonable proxy\footnote{Additionally for the error budgeting we ignore merging between unitary calls to simplify the cost function and multiply by the number of queries.} for the error budget.
Estimates for the Toffoli cost of the various circuit components are taken from \cite{kan2024resource-optimized}.
The output of the optimizer for all lattice sizes is reported in~\cref{appen:qre_data}.

\subsubsection{Computing the Trotter error}\label{sect:pig_trotter_error}

Changing the Trotter ordering requires recomputing the commutator bound as previous works used the `IPG' ordering.
The usual second order commutator bound formula is given by
\begin{equation}\label{eqn:trotterpig}
            \norm{\mathrm{e}^{\mathrm{i} (H_h + H_I)\tau}-\mathrm{e}^{\mathrm{i}  (\tau/2)H_h^p}\mathrm{e}^{\mathrm{i}  (\tau/2)H_I}\mathrm{e}^{\mathrm{i}  \tau H_h^g}\mathrm{e}^{\mathrm{i} (\tau/2)H_I}\mathrm{e}^{\mathrm{i}  (\tau/2)H_h^p}}\leq W_{\text{PLAQ}}\tau^3.
\end{equation}
\begin{equation}
    W_{\text{PLAQ}} = \frac{1}{12} \sum_{b=1,2}\left\|\sum_{c>b, a>b}\left[\left[H_b, H_c\right], H_a\right]\right\| +\frac{1}{24} \sum_{b=1,2}\left\|\sum_{c>b}\left[\left[H_b, H_c\right], H_b\right]\right\|,
\end{equation} \label{eq:W_PLAQ}
with $H_1 = H_h^p$, $H_2 = H_I$ and $H_3 = H_h^g$. 
We investigate two methods of computing $W$ since the tight bound is prohibitively expensive as involves computing the spectral norm of Hilbert-space-sized matrices. 

The first method is inspired by the work of~\cite{Schubert.2023}.
The key idea is to expand the above nested commutators and then group the resulting expressions into clusters acting on at most $C$ sites.
We choose $C$ to be small enough that we can exactly diagonalize each of the clusters and thus compute their spectral norm.
We then upper bound the overall spectral norm by applying triangle inequalities between clusters.
This is a slight adaptation from Ref.~\cite{Schubert.2023} which bounds the commutator norms for a sub-lattice of terms which, upon translations, yield bounds on the overall sums.
In contrast, we optimize over partitioning the full, expanded commutator sums into as few clusters as possible to minimize the use of triangle inequalities.

The second approach uses the algebraic structure of the Hamiltonian to express the double commutators in terms of operators whose norms can be bounded directly.
This generalizes the result of \cite{campbell2022early} (which was for the `IPG' Trotter ordering) and allows large portions of the norm calculation to be performed analytically.
Since fewer triangle inequalities are introduced than in the clustering method, the bound is expected to be slightly tighter, as observed in~\cref{fig:wplaq_plot}, and is therefore used in our numerical resource estimates.
A detailed derivation is given in~\cref{appen:trotter_error}.
Although the `IPG' ordering still gives lower values for $W_\text{PLAQ}$, our improved bound leaves only a small gap between the two orderings.
This modest increase is more than compensated by the substantially larger compilation savings obtained by reordering the components.

\begin{figure} 
\centering
\includegraphics[width=0.6\linewidth]{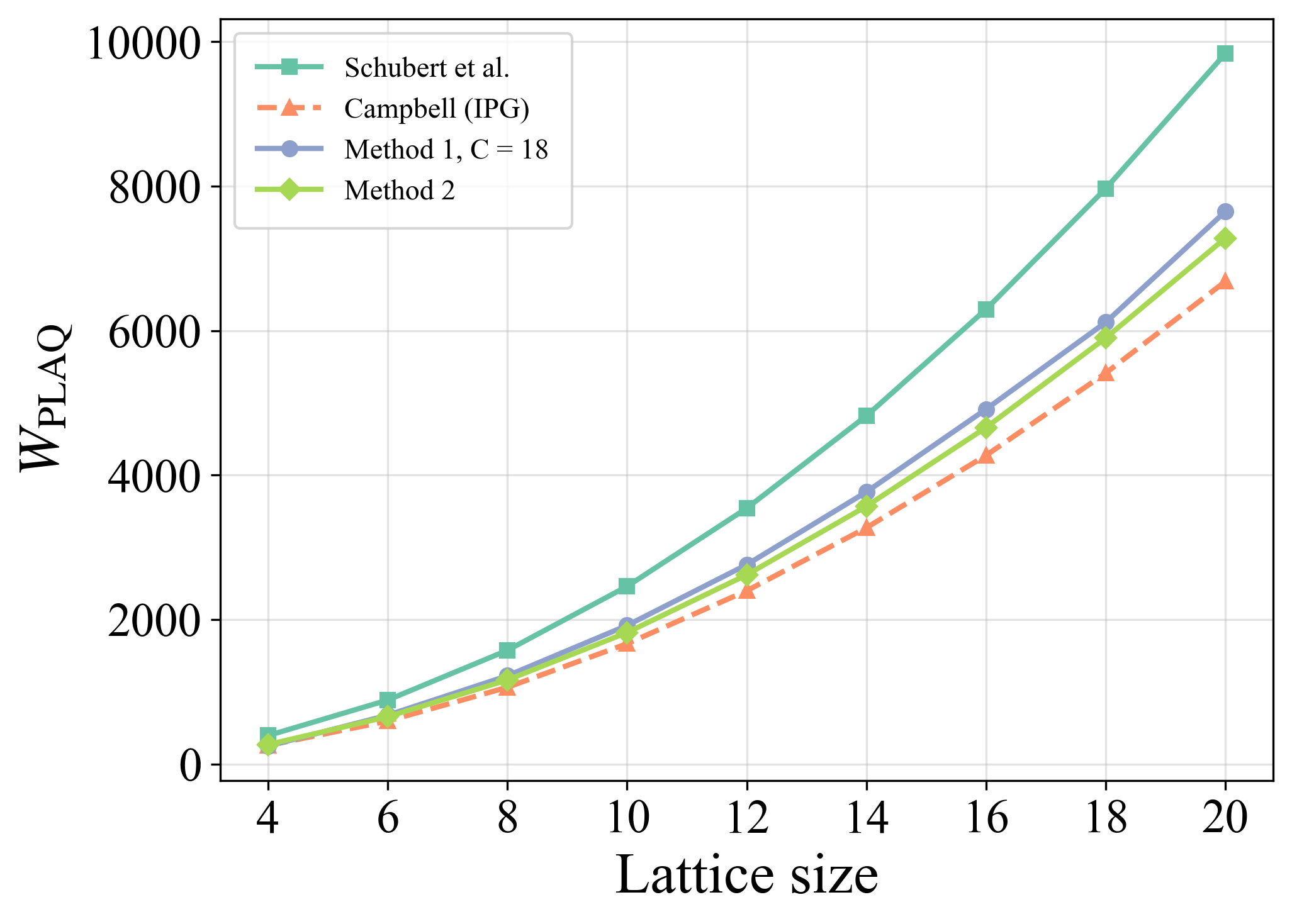}
\caption{Comparison of Trotter commutator bounds for $W_{\mathrm{PLAQ}}$ for $u=8$ and $t=1$.
We introduce: Method~1, an extended clustering approach, and Method~2, a generalized algebraic approach and compare against prior work.
Campbell is available only for the IPG ordering \cite{campbell2022early}, whereas Schubert \cite{Schubert.2023}, Method~1, and Method~2 can be directly compared, computing $W_{\mathrm{PLAQ}}$ for the PIG ordering.
}
\label{fig:wplaq_plot}
\end{figure}

\section{Future directions and conclusions}\label{sect:conclusions}

In this work, we presented the first active-volume-based compilation of the Fermi-Hubbard ground-state energy estimation algorithm for an early generation FTQC.
We confirmed that conventional circuit volume increasingly overestimates the execution cost on architectures with some non-local connectivity and showed that directly optimizing for spacetime cost yields substantial additional reductions beyond those achieved through T-count optimization alone.
These improvements arise from optimizations across multiple levels of the quantum software stack, ranging from algorithmic choices, through circuit-level compilation such as Trotter reordering to architecture-aware logical operation synthesis using OZX-diagram compression.
Although developed for active volume architectures, we emphasize that many of our compilation techniques are broadly applicable to reducing resources on general surface-code-based architectures.
Combined with recent advances in active volume scheduling~\cite{heavey2026scheduler}, these results substantially reduce the estimated runtime of large-scale Fermi--Hubbard simulation on early fault-tolerant quantum computers. 
Although the absolute runtime values depend significantly on the hardware parameters, we observe the expected runtime trends, both in the comparison between the baseline and improved circuits and in the impact of increasing the number of workspace qubits.
We conclude by outlining several directions for future work.

The largest opportunity for further reducing the active volume is likely to come from tighter estimates of the Trotter error.
The resource estimates presented in this work rely on rigorous operator-norm error bounds, which are expected to be highly conservative for the low-energy subspace relevant to ground-state energy estimation.
In addition, the commutator bounds used to estimate the Trotter error introduce further pessimism.
While NISQ experiments typically estimate Trotter error through numerical convergence against classical simulations, such approaches are difficult to scale to large lattices and do not provide the rigorous error guarantees associated with FTQC.
Developing substantially tighter, yet reliable, error estimates for the low-energy sector relevant to ground-state energy estimation would therefore have a direct and potentially dramatic impact on the cost of quantum simulations.
Further reductions in active volume could be achieved by reformulating 
the error-budget optimization to directly minimize active volume rather than Toffoli count, while relaxing the restriction on the number of QPE queries may provide a smoother optimization landscape through the controlled implementations proposed in Refs.~\cite{lee2021even,dutkiewicz2025error}. 
At the circuit level, additional reductions may be obtained through improved fermionic swap network constructions that reduce the number of fSWAP operations~\cite{constantinides2025lowdepthfermionroutingancillas} while exposing larger circuit fragments to ZX-diagram optimization.

Finally, as architecture-aware compilation continues to mature, further reductions in execution time are likely to require closer co-design between circuit compilation and runtime scheduling.
While active volume provides an effective optimization objective, as with T-gate count, minimizing active volume alone does not necessarily minimize execution time.
For example, aggressive OZX optimization can merge many operations into large active volume blocks which, despite having lower active volume, are more difficult to pack efficiently within the limited workspace available during scheduling.
Future compiler optimizations should therefore be developed alongside scheduling strategies to ensure that reductions in active volume translate into corresponding reductions in runtime.

\section*{Code availability}
We use PsiQuantum's open-access software Workbench to program quantum circuits and generate quantum resource estimates on the level of Toffoli gates, logical qubits, and active volume \cite{psiquantum_qdk}. The code to instantiate the specific circuits described in this paper is publicly available in the GitHub repository \cite{PsiQFHCodeRepo}.

\section*{AI use statement}
The scientific ideas in this manuscript originate from the authors.
Code to implement algorithmic subroutines was developed by the authors.
The manuscript was drafted and revised by the authors.
An AI chat tool (ChatGPT) was used in the review process of the manuscript and (Cursor) to clean up code documentation, remove unused code, follow code standards as well as to help write presentational plotting functions.

\section*{Author contributions}
The author list is ordered alphabetically. 
HA and SS contributed equally to this work and co-led the project and manuscript preparation.
HA, AK, JL, SP, SS, MS, RL contributed to the conception of the work. JL was the broader project manager for lattice simulations.
HA, AC, SH, AK, JL, SP, JP, WP, SS, MS, GU designed the quantum circuits to reduce the active volume. HA, CH, SS, WS, GU implemented the circuits in Workbench. 
AC and SH ran the block scheduler for runtime estimates.
All authors contributed to the writing and editing of the
manuscript.

\section*{Acknowledgments}
We thank Modjtaba Shokrian Zini, Zheting Jin, and Kasra Hejazi in particular for useful discussions and Cristian L. Cortes for early code contributions.
The quantum circuit diagrams in this work were generated using Circuit Designer (freely available at 
\url{https://construct.psiquantum.com/cd/}), and the quantum circuits were implemented and their resources were counted using Workbench 
(\url{https://construct.psiquantum.com/qdk})
from PsiQuantum's Construct software suite.

\printbibliography

\appendix
\addtocontents{toc}{\protect\setcounter{tocdepth}{1}}
\crefalias{section}{appendix}
\crefalias{subsection}{appendix}

\section{Deriving the active volume of a subroutine}\label{appen:av-subroutine}
In this section we outline how the active volume cost of an operation can be derived for a surface code active volume architecture that uses lattice surgery, dislocations, $Z$ basis measurements, and $\ket{0}$ state initialization as primitive operations, as was considered in Ref.~\cite{litinski2022activevolume}.

Each logical qubit is encoded as a $d \times d$ surface-code patch that undergoes $d$ rounds of syndrome measurements per logical cycle, defining a ($d \times d \times d$) spacetime block.
A fault-tolerant circuit can therefore be represented as a network of such blocks, with the active volume determined by the number of spacetime blocks required to realize the computation.

To derive the cost of a subroutine, we first rewrite it as a ZX diagram~\cite{Coecke_2011}.
For circuits written using the \{$S$, $H$, CNOT, $T$\} gate set, where
\begin{equation}
S = 
\begin{pmatrix}
1 & 0 \\
0 & i
\end{pmatrix},
\quad
H = \frac{1}{\sqrt{2}}
\begin{pmatrix}
1 & 1 \\
1 & -1
\end{pmatrix},
\quad
T =
\begin{pmatrix}
1 & 0 \\
0 & e^{i\frac{\pi}{4}}
\end{pmatrix},
\quad
CNOT =
\begin{pmatrix}
1 & 0 & 0 & 0 \\
0 & 1 & 0 & 0 \\
0 & 0 & 0 & 1 \\
0 & 0 & 1 & 0
\end{pmatrix}
\end{equation}
one can use the identities provided in~\cref{fig:gate_identities}.
We require the ZX diagram to be phaseless so that it maps directly to the supported surface code scheme.
Phase gates are therefore implemented by magic-state injection rather than by phased spiders: 
$\ket{Y} = \frac{1}{\sqrt{2}}(\ket{0}+\mathrm{i}\ket{1})$ and $\ket{T} = \frac{1}{\sqrt{2}}(\ket{0}+\mathrm{e}^{\mathrm{i}\frac{\pi}{4}}\ket{1})$ magic states supply the phases required for the $S$ and $T$ gates, respectively.
These magic states must be created in magic state factories and are used to execute gates such as Toffoli gates, elbows, and Pauli product rotations; see Ref.~\cite{litinski2022activevolume}.

\begin{figure}[h!]
    \centering
    \includegraphics[width=0.65\linewidth]{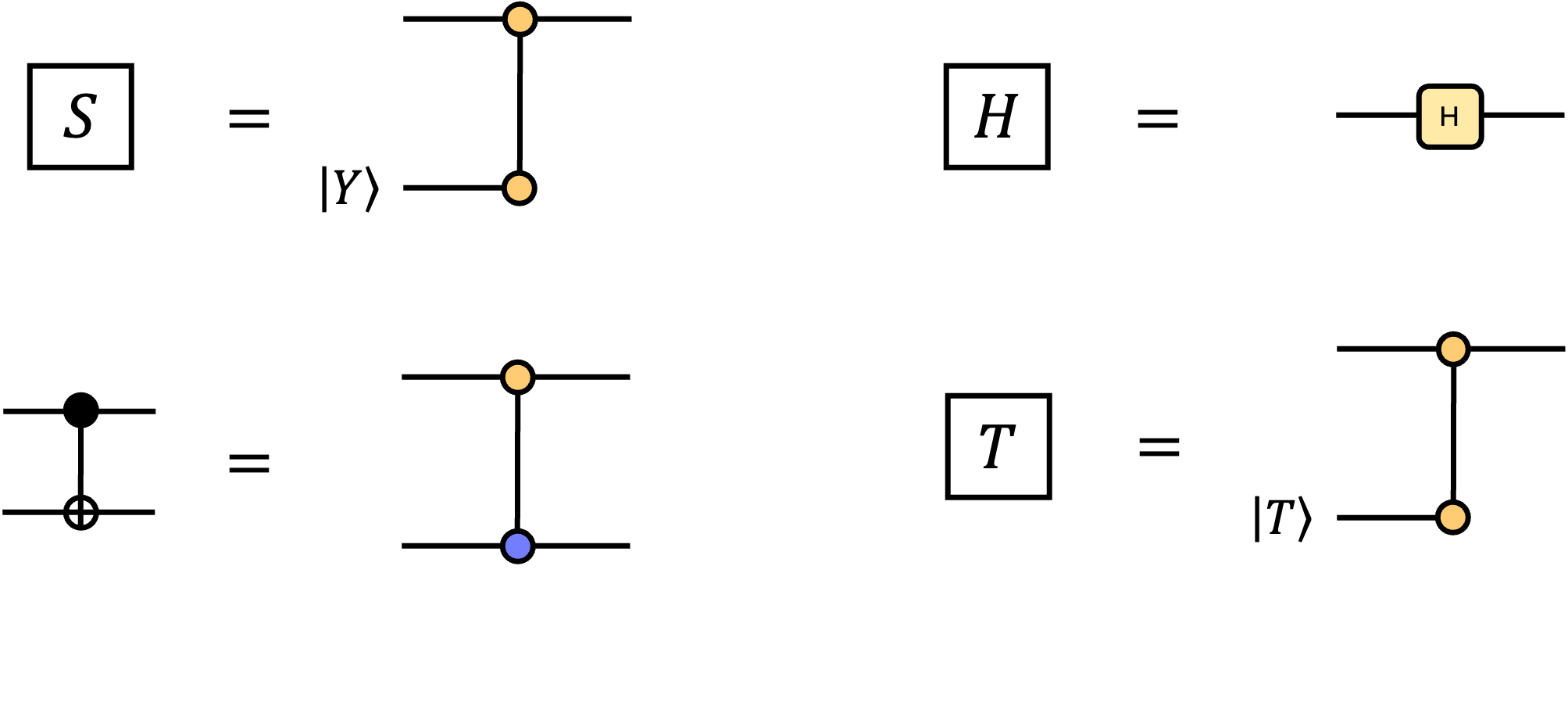}
    \caption{Phaseless ZX identities for the \{$S$, $H$, CNOT, $T$\} gate set. Orange (blue) spiders are Z (X) type. All spiders are phaseless with $\ket{Y}$ and $\ket{T}$ state injection supplying the phase for the $S$ and $T$ gates, respectively.
}
    \label{fig:gate_identities}
\end{figure}

\subsection{ZX to oriented ZX Protocol}

The ZX diagram specifies the required logical connectivity but not whether that connectivity can be realized by the available faces of each surface-code spacetime block.
We address these geometric constraints by assigning orientations to the ZX diagram to obtain an oriented ZX (OZX) representation.
Once we have a ZX diagram, we convert it into an OZX diagram using the following protocol:

\begin{enumerate}

  \item \textbf{Ports and spacetime faces.}
  For each spider, introduce six labeled ports: North ($N$), South ($S$), East ($E$), West ($W$), Up ($U$), and Down ($D$), representing the six faces of the associated spacetime block~\cite{demarti2024decoding} (\cref{fig:ports}). Every edge between two spiders is assigned to a port, and it must use the same port type on both spiders it connects. Each port of each spider can accommodate at most one edge. If the corresponding block does not provide enough ports to support all of a spider’s edges, you may need to split that spider into multiple spiders.

  \begin{figure}[h!]
      \centering
      \includegraphics[width=0.14\textwidth]{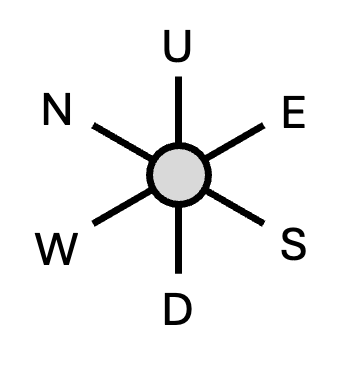}
      \caption{Locations of the North ($N$), South ($S$), East ($E$), West ($W$), Up ($U$), and Down ($D$) ports.}
      \label{fig:ports}
  \end{figure}

  \item \textbf{Spider orientations and syndrome collection.}
  Give each spider an \emph{orientation} by disabling exactly one pair of opposite ports, chosen from $N/S$, $E/W$, or $U/D$ (\cref{fig:ozx_examples}). The disabled pair is interpreted as the pair of edges reserved for syndrome collection in lattice surgery, ensuring fault-tolerant syndrome extraction~\cite{horsman2012surface, demarti2024decoding}. Concretely:
  \begin{itemize}
    \item An $N$-oriented spider has no connections on its $N/S$ ports.
    \item An $E$-oriented spider has no connections on its $E/W$ ports.
    \item A $U$-oriented spider has no connections on its $U/D$ ports.
  \end{itemize}
  \begin{figure}[h!]
      \centering
      \includegraphics[width=0.45\textwidth]{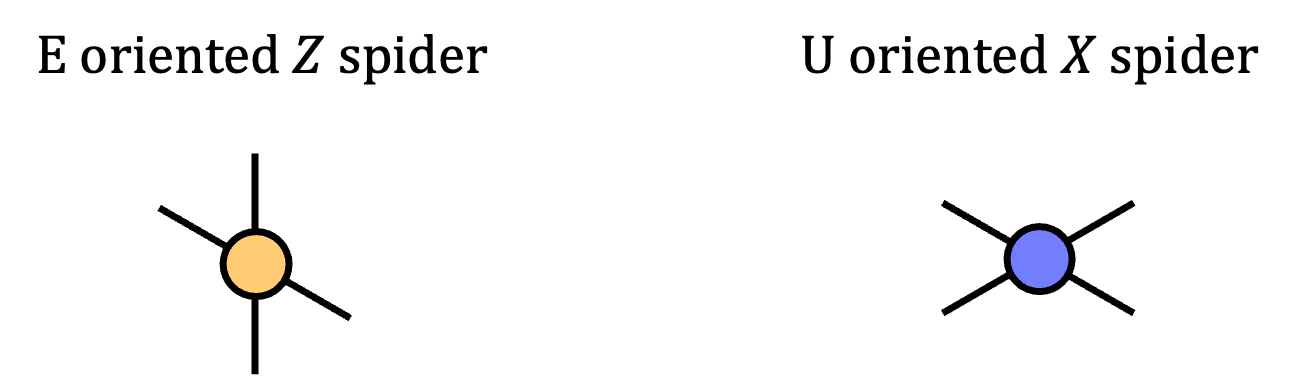}
      \caption{Oriented spider examples. \textbf{Left:} A $E$-oriented $Z$ spider. \textbf{Right:} A $U$-oriented $X$ spider.}
      \label{fig:ozx_examples}
  \end{figure}

  \item \textbf{Connectivity constraints for non-Hadamard edges.}
  Each edge must satisfy one of the two constraints:
  \begin{itemize}
    \item Spiders of the same type (both $Z$ or both $X$) that are directly connected must have the same orientation.
    \item Spiders of opposite type ($Z$ vs.\ $X$) that are directly connected must have different orientations.
  \end{itemize}
  See \cref{fig:table_of_rules} for examples.
  These rules ensure that rough edges connect to rough edges and smooth edges connect to smooth edges in lattice surgery~\cite{de2020zx}. 

  \item \textbf{Hadamard edges and dislocations.}
  Hadamard edges are interpreted as lattice dislocations and are restricted to $E/W$ ports (to simplify hardware; allowing $H$ on other ports is possible but not assumed). For Hadamard edges:
  \begin{itemize}
    \item Spiders of the same type that are directly connected must have different orientations.
    \item Spiders of the opposite type that are directly connected must have the same orientation.
  \end{itemize}
  This ensures rough and smooth edges are always connecting when performing lattice surgery on a dislocation~\cite{de2020zx}.

  \item \textbf{Input/output conventions and idle qubits.}
  Inputs to the diagram enter via $D$ ports while outputs exit via $U$ ports. Inputs and outputs can only attach to $E$-oriented $Z$ spiders or $N$-oriented $X$ spiders. This convention fixes a common orientation for all idle qubits, simplifying their connections to active volume blocks. Often the subadditivity of active volume arises from enforcing this extra constraint between active volume diagrams which need not be followed if the diagrams are combined.

  \begin{figure}[h!]
      \centering
      \includegraphics[width=0.8\textwidth]{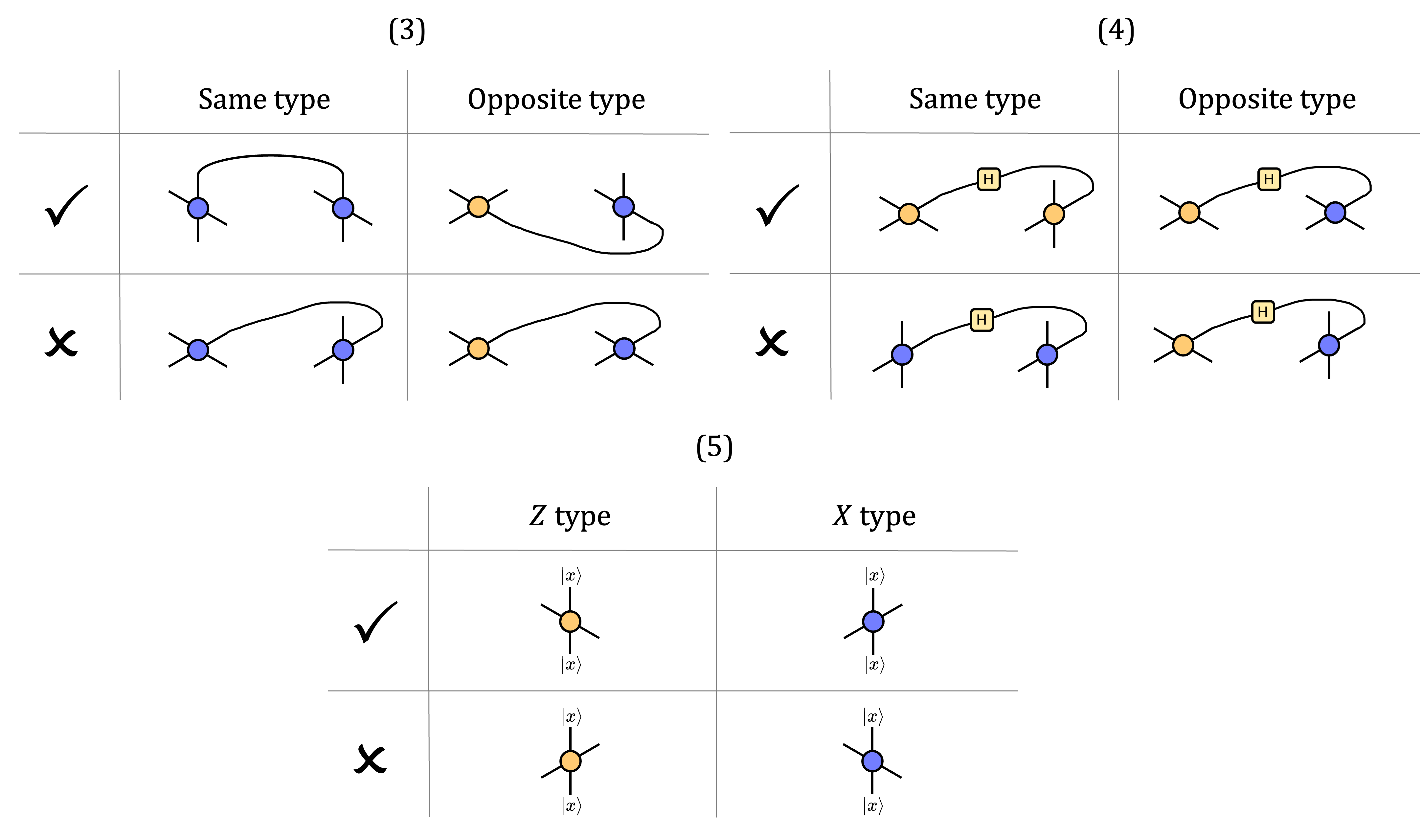}
      \caption{Tables that illustrate rules 3, 4, and 5 by showing both valid and invalid examples.}
      \label{fig:table_of_rules}
  \end{figure}

  \item \textbf{Drawing conventions.}
  For clarity of presentation, some OZX diagrams may use same-axis-to-same-axis rather than same-port-to-same-port connections when symmetry allows (e.g., connecting an $N$ port to an $S$ port). This is done purely for aesthetic purposes and diagrams using this convention should always have a valid corresponding same-port-to-same-port version.

\end{enumerate}
A summary of the port/orientation rules is given in~\cref{table:OZX_rules}.

Once a valid OZX diagram is obtained, the active volume of the subroutine is given by the number of spiders in the diagram together with the distillation cost of any injected magic states.
In practice, we often decompose large circuits into smaller subroutines, derive the active-volume cost of each, and sum the results.
Composing subroutines may introduce a small reorientation overhead: the $U$-port output of one piece must connect to the $D$-port input of the next, which is free when an appropriate port is available but otherwise requires an additional block.

\begin{table}[t]
  \centering
  \renewcommand{\arraystretch}{1.2}
  \begin{tabular}{|l|l|l|l|l|}
    \hline
    \textbf{Spider types} & \textbf{Edge type} & \textbf{Orientations} & \textbf{Allowed ports} & \textbf{Rule} \\ \hline
    Same ($Z$--$Z$ or $X$--$X$)
      & Non-Hadamard
      & $(N,N)$
      & $E/W/U/D$
      & 2, 3 \\
    Same ($Z$--$Z$ or $X$--$X$)
      & Non-Hadamard
      & $(E,E)$
      & $N/S/U/D$
      & 2, 3 \\
    Same ($Z$--$Z$ or $X$--$X$)
      & Non-Hadamard
      & $(U,U)$
      & $N/S/E/W$
      & 2, 3 \\[2pt]
    Opposite ($Z$--$X$)
      & Non-Hadamard
      & $(N,E)$
      & $U/D$
      & 2, 3 \\
    Opposite ($Z$--$X$)
      & Non-Hadamard
      & $(N,U)$
      & $E/W$
      & 2, 3 \\
    Opposite ($Z$--$X$)
      & Non-Hadamard
      & $(E,U)$
      & $N/S$
      & 2, 3 \\ \hline
    Same ($Z$--$Z$ or $X$--$X$)
      & Hadamard
      & $(N,U)$
      & $E/W$
      & 2, 4 \\[2pt]
    Opposite ($Z$--$X$)
      & Hadamard
      & $(N,N)$
      & $E/W$
      & 2, 4 \\
    Opposite ($Z$--$X$)
      & Hadamard
      & $(U,U)$
      & $E/W$
      & 2, 4 \\ \hline
    Input Z ($Z$--Input)
      & Non-Hadamard
      & $(E)$
      & $D$
      & 5 \\
    Input X ($X$--Input)
      & Non-Hadamard
      & $(N)$
      & $D$
      & 5 \\ [2pt]
    Output Z ($Z$--Output)
      & Non-Hadamard
      & $(E)$
      & $U$
      & 5 \\
    Output X ($X$--Output)
      & Non-Hadamard
      & $(N)$
      & $U$
      & 5 \\ \hline
  \end{tabular}
\caption{A table specifying how to convert a ZX diagram into an OZX diagram by assigning orientations to spiders and ports to edges. For each edge in the original ZX diagram, first orient the incident spiders (except input/output nodes, which serve only as placeholders for memory qubits and have no orientation) and then assign the edge to a port consistent with the table, given the spider types, edge type, and chosen orientations. If a combination of spider type, edge type, and orientation is not in this table, then it is not allowed. Two edges which connect to the same spider cannot have the same port type. One could view the process of going from ZX to OZX diagrams as assigning an orientation to each spider and an allowed port to each edge in a ZX diagram. The difficulty lies in finding the smallest equivalent ZX diagram which is amenable to this process.}
\label{table:OZX_rules}
\end{table}

\subsection{Example derivation: CY gate}

Here we demonstrate the active volume derivation process for the CY gate.
First, we shall decompose CY into \{$S$, $H$, CNOT, $T$\}, so we can use the identities from~\cref{fig:gate_identities}.

\begin{equation}
    Y = SXS^{\dagger},
\end{equation}
\noindent where $X$, $Y$, and $Z$ are the Pauli matrices. It follows that:
\begin{equation}
    CY = S \ CNOT \ S^{\dagger}.
\end{equation}
We first write this as a circuit diagram, then convert to ZX, convert to OZX, and finally count the number of spiders in the OZX diagram and add the cost of any magic state inputs to get the total logical block count.
This process is displayed in~\cref{fig:CY_derivation}.
We find that a CY can be executed in $13$ blocks including two $\ket{Y}$ states.
However, this can be reduced to just $9$ blocks by using a PPM decomposition (see~\cref{fig:CY_derivation_cheaper}).
Although both decompositions implement the same operation, their compiled costs can differ because $Y$-state injection effectively hides portions of the underlying ZX structure and thereby restrict the available simplifications.
This exemplifies a useful heuristic: decompositions with fewer resource-state inputs (e.g. $Y$ or $T$ states) are often preferable, both because they reduce distillation cost and because they expose more of the ZX diagram to subsequent optimization.

\begin{figure}[h!]
    \centering
    \includegraphics[trim=10 10 10 10,clip,width=0.9\linewidth]{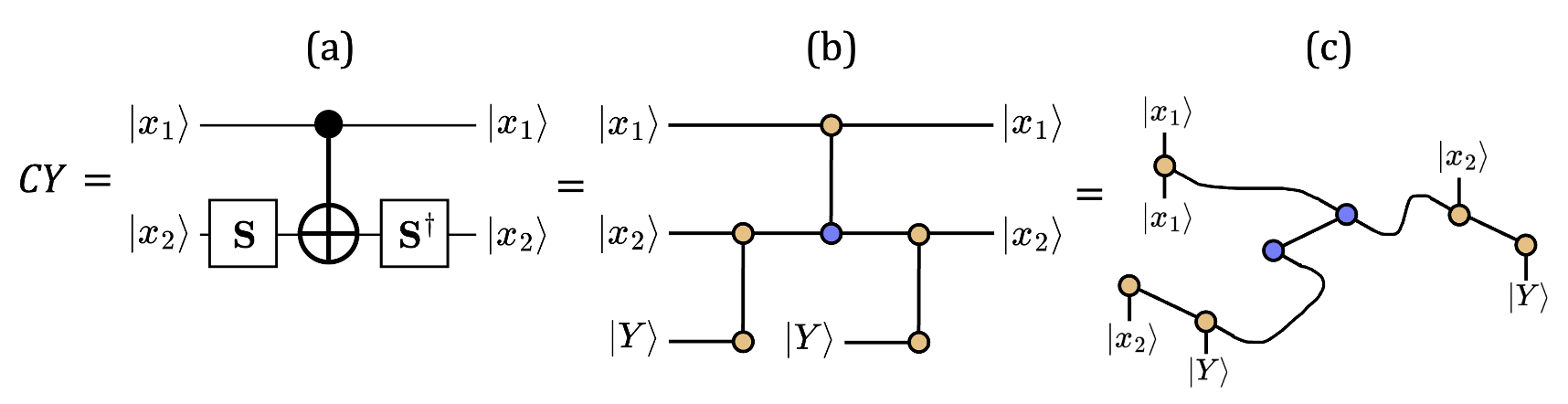}
    \caption{The active volume derivation of a CY gate. \textbf{(a)} First we decompose CY into two $S$ and one CNOT gate. \textbf{(b)} We use the ZX identities from~\cref{fig:gate_identities} to convert the circuit diagram into a ZX diagram. There are no meaningful diagram simplifications at this stage. \textbf{(c)} The OZX diagram requires the addition of one blue spider to obey connectivity requirements. There are $7$ spiders in the diagram and two $\ket{Y}$ states, therefore, the active volume for this decomposition is $13$ blocks.
    }
    \label{fig:CY_derivation}
\end{figure}

\begin{figure}[h!]
    \centering
    \includegraphics[trim=10 10 10 10,clip,width=0.8\linewidth]{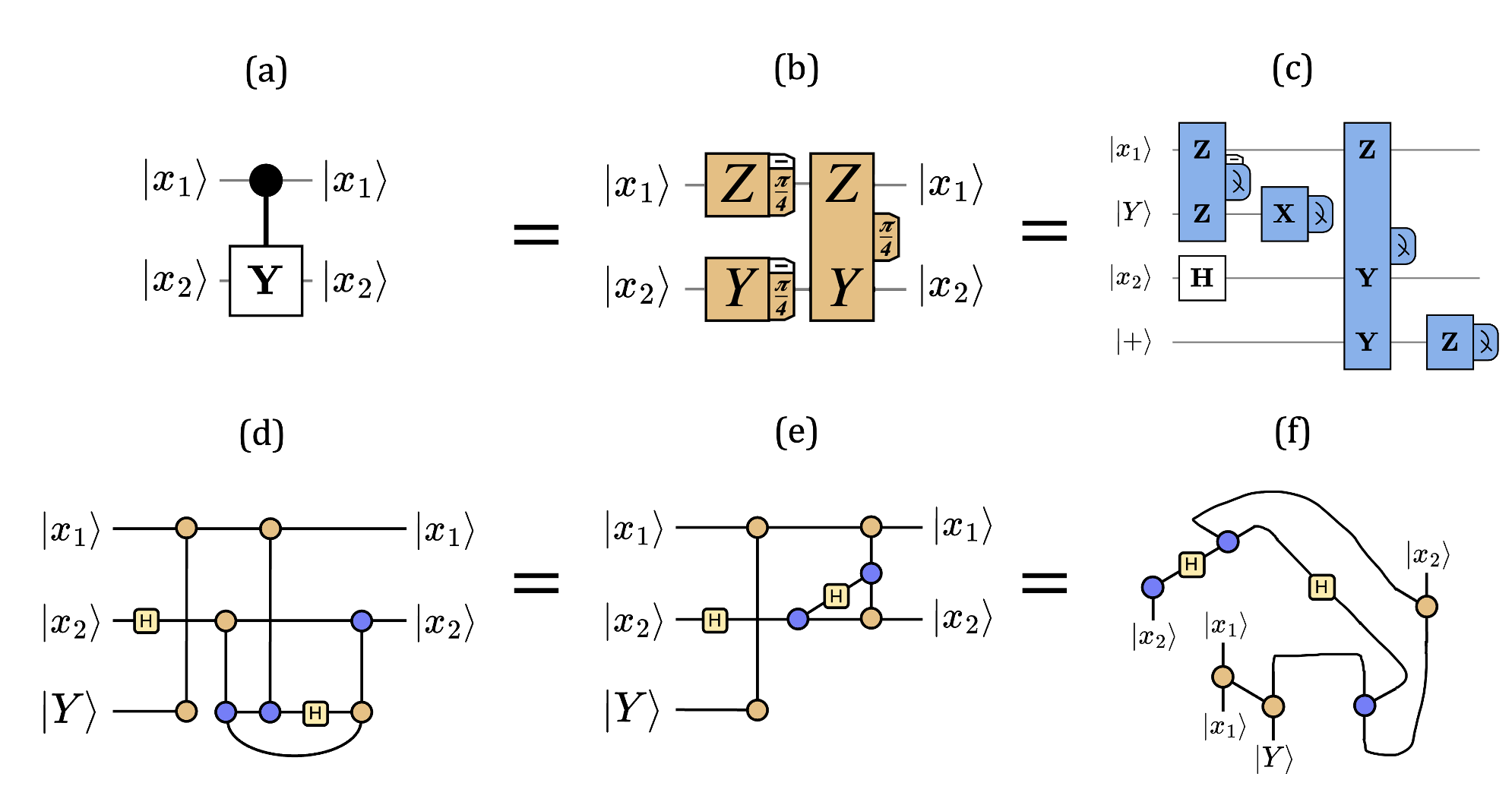}
    \caption{The active volume derivation of a CY gate via PPMs. \textbf{(a)} The circuit diagram for a CY gate. \textbf{(b)} A series of PPRs which are equivalent to a CY gate up to a global $\exp(\mathrm{i}\frac{\pi}{4})$ phase. \textbf{(c)} The PPR circuit rewritten as PPMs with state injection. A $Y$ $\frac{\pi}{4}$ PPR is equivalent to a $Z$ Pauli, which can be implemented through classical interpretation, and an $H$ gate. \textbf{(d)} The ZX diagram for the PPM circuit from \textbf{(c)}, obtained by following the prescription given in \cite[Figure 12]{litinski2022activevolume}. \textbf{(e)} A simplified ZX diagram representing the same circuit. \textbf{(f)} The OZX diagram for a CY gate. It has 6 spiders which correspond to 6 active volume blocks, it also consumes a $\ket{Y}$ state which costs 3 active volume blocks \cite{litinski2022activevolume}. Therefore the total cost of a CY gate is 9 blocks.
    }
    \label{fig:CY_derivation_cheaper}
\end{figure}

\clearpage
\subsection{Deriving the active volume of $\boldsymbol{\mathrm{CNOT}}$ and $\boldsymbol{\mathrm{CZ}}$ fanout}\label{app:fanout_av}
\begin{figure}[H]
    \centering
    \includegraphics[trim=10 10 10 10,clip,width=0.5\linewidth]{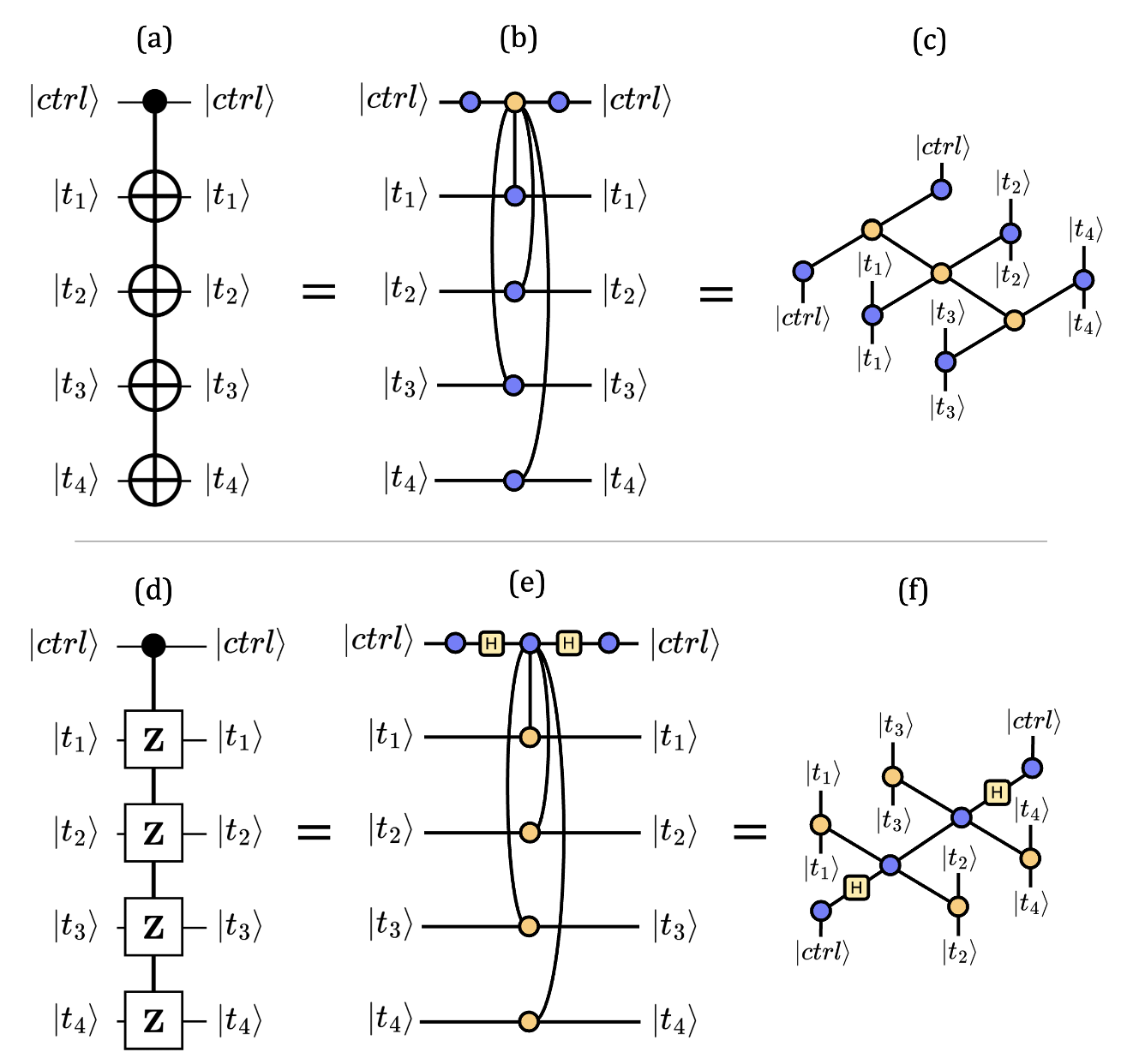}
    \caption{The active volume derivations of a multi-target (fanout) CNOT and CZ gate. \textbf{(a)} shows the circuit diagram representation of a multi-target CNOT. \textbf{(b)} The multi-target CNOT written as a ZX diagram. Here we use the identity provided in \cref{fig:gate_identities}, merge the control spiders, and insert $X$-type identity spider at the start and end of the control rail. \textbf{(c)} The OZX diagram for (b). More targets can be added by repeating the blue, orange, blue pattern. The block cost of an $m$-target multi-target CNOT is $\lceil \frac{3}{2} m \rceil + 3$. \textbf{(d)}, \textbf{(e)}, and \textbf{(f)} follow the same procedure but for a multi-target CZ. In (e), we use that a CZ is equivalent to a CNOT with Hadamards before and after the target. Since CZ is symmetric, we can treat the $\ket{ctrl}$ as the target and take $\ket{t_i}$ as the CNOT control.  By noticing the pattern shown in (f) we find the block cost for a $m$-target multi-target CZ is $\lceil \frac{3}{2} m \rceil + 2$.}
    \label{fig:fanout_av}
\end{figure}

\subsection{Deriving the active volume of $\boldsymbol{\sqrt{X}}$}\label{app:root_x_av}
\begin{figure}[H]
    \centering
    \includegraphics[trim=10 10 10 10,clip,width=0.85\linewidth]{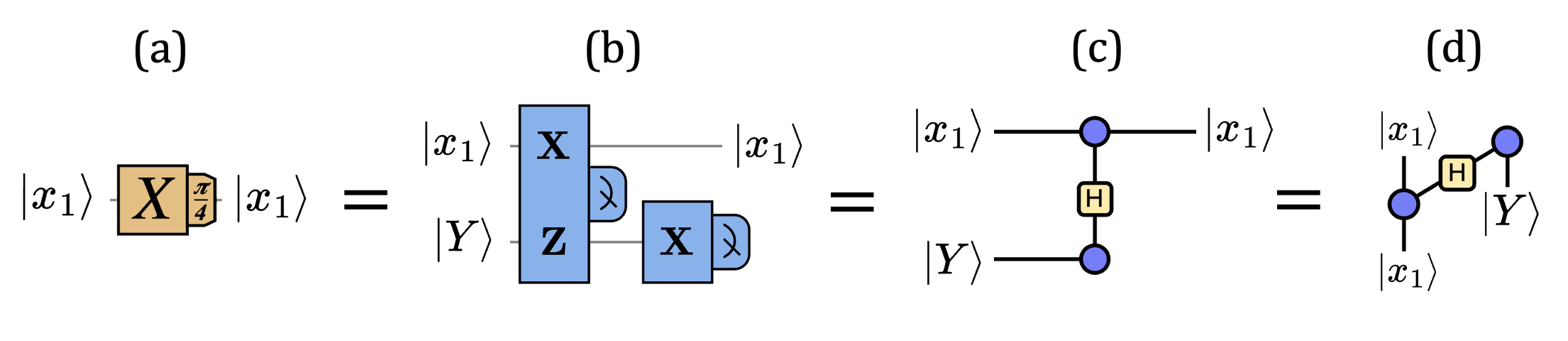}
    \caption{The active volume derivation of $\sqrt{X}$. \textbf{(a)} $\sqrt{X}$ is equivalent to an $X$ $\frac{\pi}{4}$ PPR up to global $\mathrm{e}^{\mathrm{i}\frac{\pi}{4}}$ phase. \textbf{(b)} $X$ $\frac{\pi}{4}$ PPRs can be implemented via $\ket{Y}$ state injection with an $X$ Pauli. The measurement outcomes of the $X \otimes Z$ PPM and $X$ determine the presence of an $X$ Pauli correction on $\ket{x_1}$. This correction can be implemented classically and is excluded from the diagram for simplicity. \textbf{(c)} The simplified ZX diagram for the circuit. \textbf{(d)} The OZX diagram for the circuit. It contains 2 spiders and consumes a $\ket{Y}$ state, which costs 3 blocks to create. Therefore, $\sqrt{X}$ costs 5 blocks.}
    \label{fig:root_x_av}
\end{figure}

\subsection{Deriving the active volume of $\boldsymbol{\mathrm{CNOT} \cdot \mathrm{CZ}}$}\label{app:cx_cz}
This gate combination appears in the fSWAP network and is assigned a cost of 5 logical blocks according to~\cref{fig:cx_cz}.
\begin{figure}[H]
    \centering
    \includegraphics[trim=10 10 10 10,clip,width=0.6\linewidth]{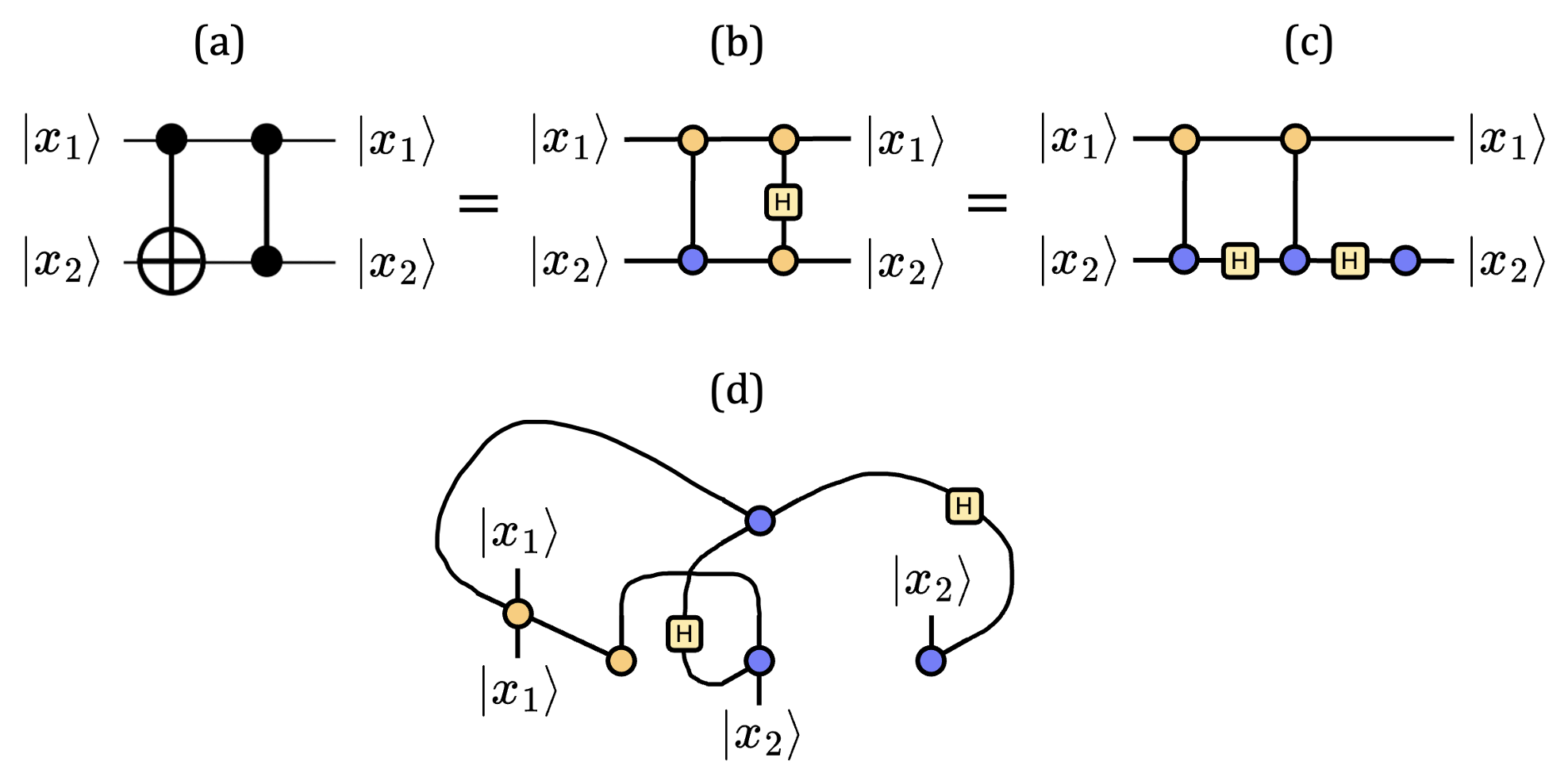}
    \caption{Active volume derivation for $\mathrm{CNOT} \cdot \mathrm{CZ}$. \textbf{(a)} The circuit diagram for a CNOT followed by a CZ gate. \textbf{(b)} The same circuit but written as a ZX diagram. \textbf{(c)} A rewrite of the ZX diagram where we have flipped the color of the bottom right orange spider by applying Hadamards to its edges. We also insert a blue identity spider between the Hadamard edge and the $\ket{x_2}$ output. \textbf{(d)} The same circuit transformed into an oriented ZX diagram. The diagram contains 5 spiders, therefore, a CNOT followed by a CZ costs 5 spacetime blocks.}
    \label{fig:cx_cz}
\end{figure}

\subsection{Deriving the active volume of $\boldsymbol{Y \otimes X \otimes Y}$ PPM for plaquette evolution operation}\label{app:yxz_ppm_av}

We derive the active volume of a $Y \otimes X \otimes Y$ PPM, which is used as part of the circuit for evolving under the Pauli operators $XX + YY$ in~\cref{subsubsec:PPR}.
A PPR can be decomposed as a joint PPM on the input register and on an ancilla qubit prepared in the state $\ket{+}$, derived in~\cref{fig:litinski_clifford_ppr}.
\begin{figure}[H]
    \centering
    \includegraphics[trim=10 10 10 10,clip,width=0.6\linewidth]{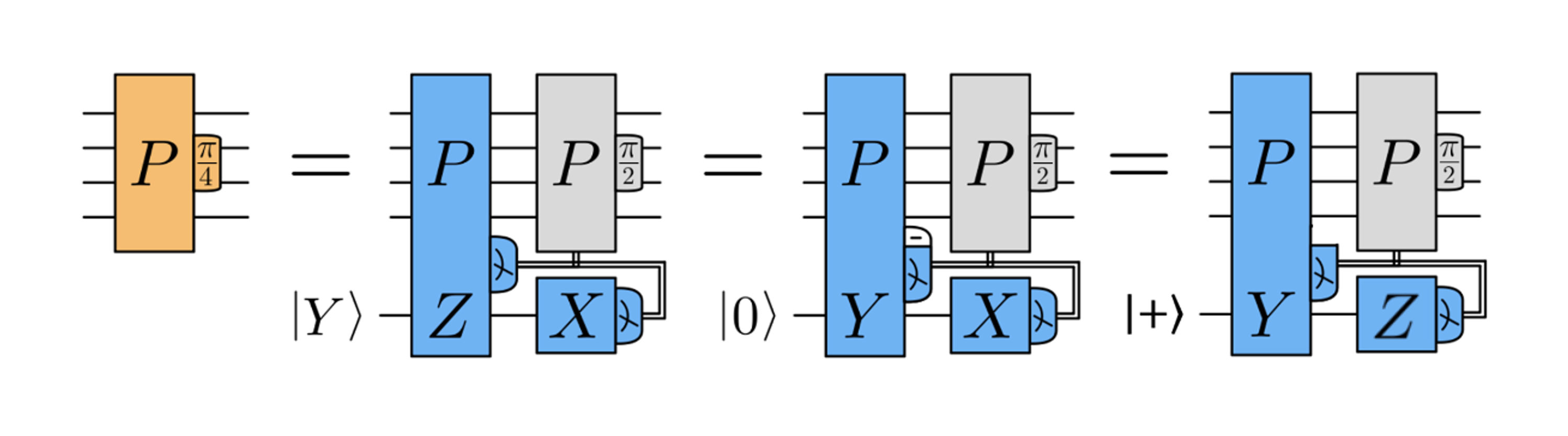}
    \caption{PPR decomposed in terms of PPMs. The first three equalities are copied from \cite{Litinski_2019}, the RHS is obtained by inserting a $H$ gate on the bottom rail and absorbing it into the $\ket{0}$ state and the $X$ measurement. When Pauli string $P$ has even (odd) $Y$ parity the middle-left (RHS) decomposition should be used. This is because PPMs with odd $Y$ parity induce a phase, $\mathrm{i}$. To prevent this, we can make the $Y$ parity even by adding a $Y$ measurement on a $Y$ state to the PPM. However, it is cheaper, in the case of $\frac{\pi}{4}$ PPRs, to absorb the injected $Y$ state into the PPM to obtain even $Y$ parity in the PPM.
    }
    \label{fig:litinski_clifford_ppr}
\end{figure}

In this case $P=Y \otimes X$, we must cost the implementation of a PPM with Pauli $Y \otimes X \otimes Y$.
Following the prescription in \cite[Figure 12]{litinski2022activevolume}, we show in~\cref{fig:av_yx_pi4_ppr} that the cost of a $Y \otimes X$ $\pi/4$ PPR is 5 blocks. 
\begin{figure}[H]
    \centering
    \includegraphics[trim=10 10 10 10,clip,width=0.7\linewidth]{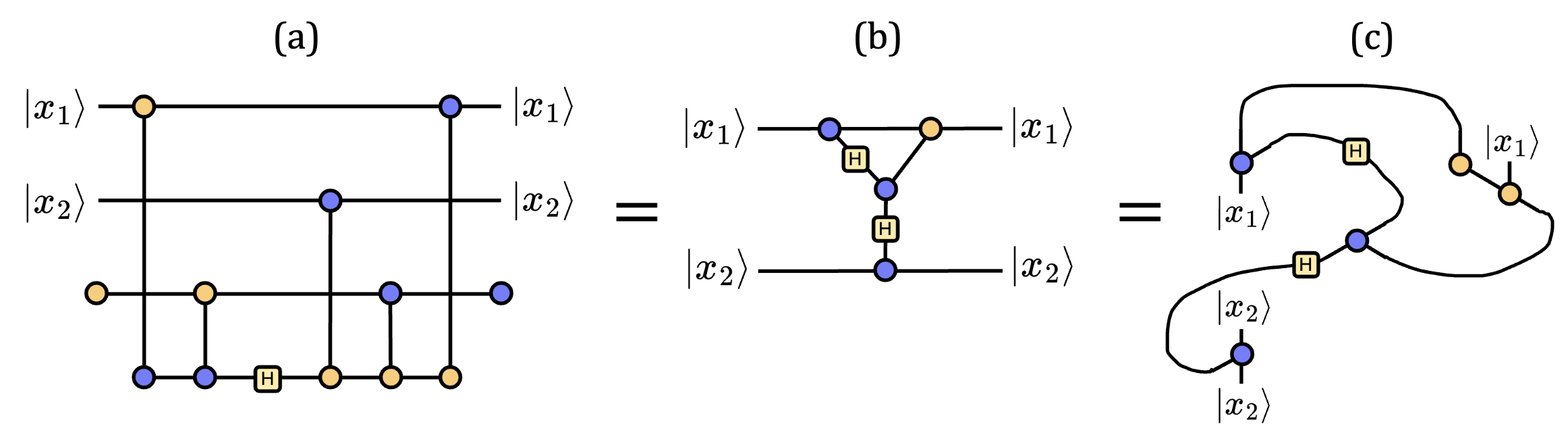}
    \caption{Active volume derivation of $Y \otimes X$ $\pi/4$ PPR. \textbf{(a)} The ZX diagram for the PPM in \cref{fig:litinski_clifford_ppr}. \textbf{(b)} The simplified ZX diagram. \textbf{(c)} The logically equivalent OZX diagram containing 5 spiders.}
    \label{fig:av_yx_pi4_ppr}
\end{figure}

\subsection{Deriving the active volume of an $\boldsymbol{(X \otimes X)(Y \otimes Y)}$ $\boldsymbol{\pi/8}$ PPR for two-mode FFFT}\label{app:zxppms}

Consider the $\pi/8$ PPRs decomposed into PPMs given in~\cref{fig:FH_PPMs} that appear in the 
two-mode FFFTs.
\begin{figure}[h!]
        \centering
        \includegraphics[trim=10 10 10 10,clip,width=0.55\textwidth]{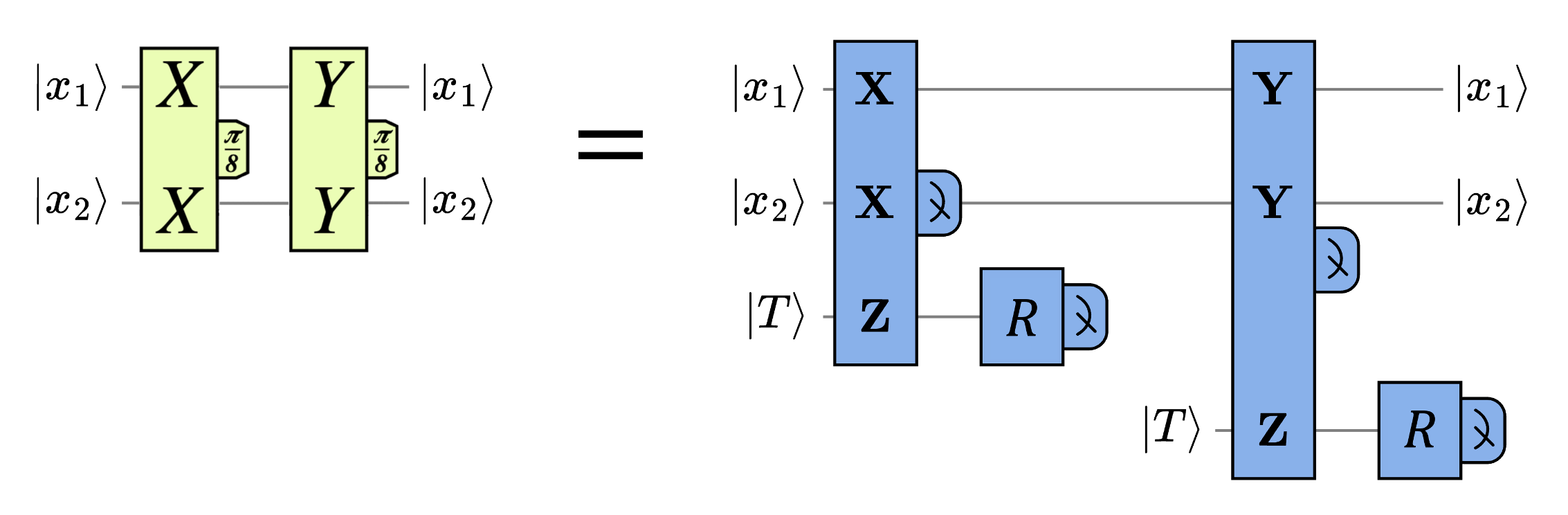}
        \caption[]
        {The PPM decomposition of the compiled 
        two-mode FFFT, which consumes two $\ket{T}$ states. The single qubit ``$R$'' measurements represent reactive measurements; based on the outcome of the previous PPM, we either apply an $X$- or $Y$-basis measurement. Based on the outcome of these measurements, we conditionally apply Pauli fix-ups. The fix-ups are $X \otimes X$ for the first ``$R$'' measurement (leftmost) and $Y \otimes Y$ for the second ``$R$'' measurement (rightmost). However, we exclude the fix-ups from the circuit as they can be implemented classically.} 
        \label{fig:FH_PPMs}
\end{figure}
\begin{figure}[h!]
        \centering
        \includegraphics[trim=10 20 10 10,clip,width=0.8\textwidth]{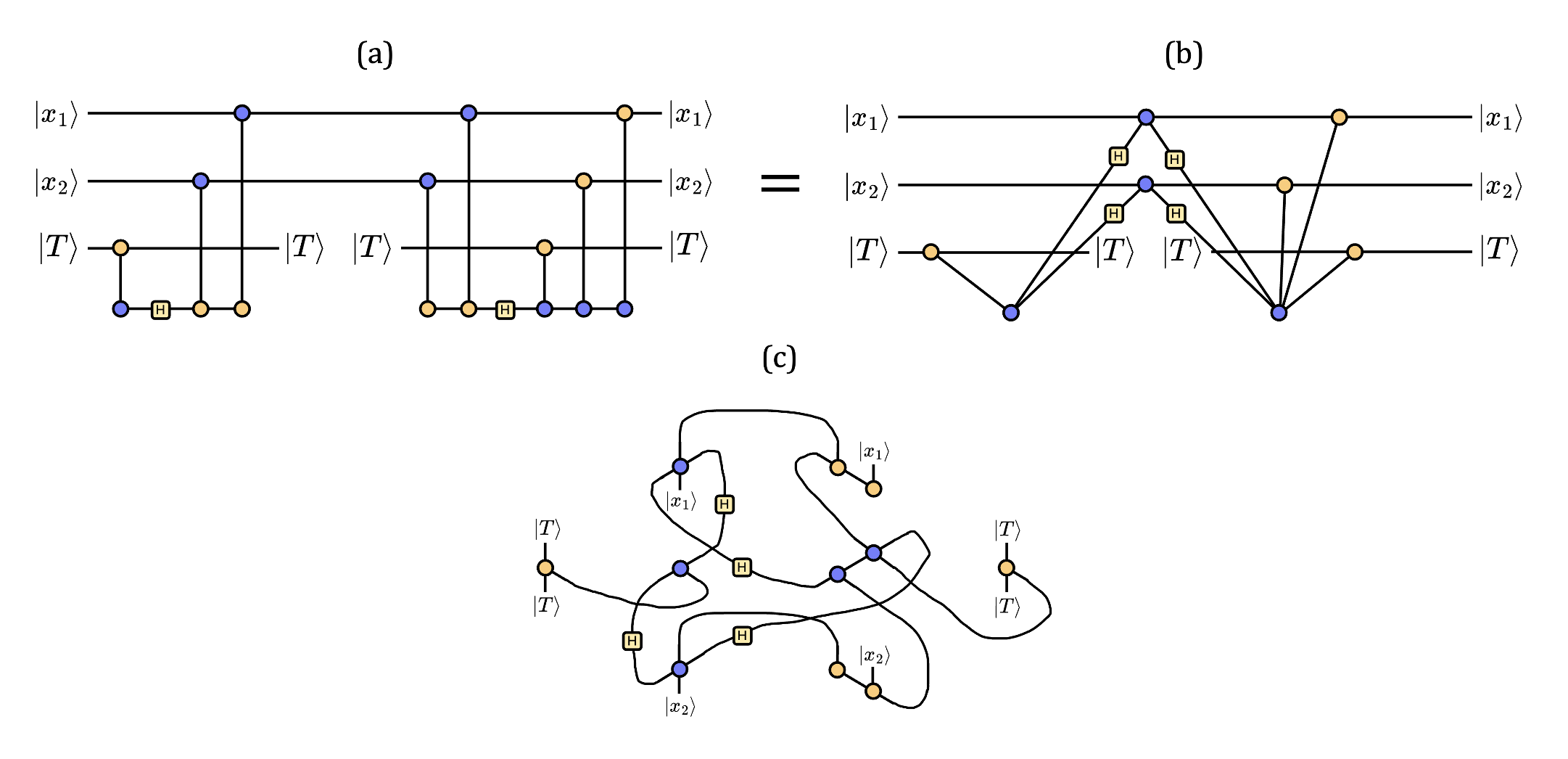}
        \caption{Active volume derivation for the PPM decomposition of the compiled two-mode FFFT. \textbf{(a)} The unoptimized ZX diagram for the PPM decomposition in~\cref{fig:FH_PPMs}. \textbf{(b)} The corresponding simplified ZX diagram. Note, we have used the spider merging and type switching rules. \textbf{(c)} The OZX diagram, it contains 11 spiders which correspond to 11 blocks. However, we also consume two $\ket{T}$ states which each have a 50\% chance of requiring reactive $Y$ measurements. The total cost of an $X\otimes X$ and $Y\otimes Y$ $\frac{\pi}{8}$ PPR is therefore 64 blocks.} 
        \label{fig:xx_yy_pprs}
\end{figure}

The unoptimized ZX diagram for these PPMs is shown in~\cref{fig:xx_yy_pprs} \textcolor{blue}{(a)}.
We do not optimize the parts of the circuit responsible for generation and consumption of magic states; we treat those as a pre-specified input and output and black-box those costs.
Simplifying the ZX diagram yields~\cref{fig:xx_yy_pprs} (b). Note that we are actually implementing a $(Y \otimes Y\otimes Z)^\dagger$ PPM so we can merge blue spiders from each $\pi/8$ PPR. This is only possible because flipping our interpretation of the measurement outcomes associated with the $(Y \otimes Y\otimes Z)^\dagger$ PPM effectively implements a $(Y \otimes Y\otimes Z)$ PPM.
Given that each pair of spiders was from a different PPR, this optimization wouldn't happen if we black-boxed the cost in blocks of each individual PPR. 
In~\cref{fig:xx_yy_pprs} (c) we write the logically equivalent OZX diagram. It contains 11 spiders and consumes two $\ket{T}$ states. Furthermore, these $\ket{T}$ states have a 50\% chance of requiring reactive $Y$-basis measurements, which are implemented by performing a Bell measurement between the $\ket{T}$ state and a $\ket{Y}$ state, consuming them both. $\ket{T}$ and $\ket{Y}$ states cost 25 and 3 blocks, respectively~\cite{litinski2022activevolume}, so the total cost of an $X\otimes X$ and $Y\otimes Y$ $\frac{\pi}{8}$ PPR is $64$ blocks.

\subsection{Deriving the active volume of Hamming weight phasing adder segments}\label{appen:adder_av}

The Hamming weight computation adder segments are highlighted in~\cref{fig:hw_adder_av_counts} and those used in the phasing circuit are drawn out in~\cref{fig:phasing_circuit_av_counts}.

\begin{figure}[H]
\centering
\includegraphics[trim=10 10 10 10,clip,width=0.9\textwidth]{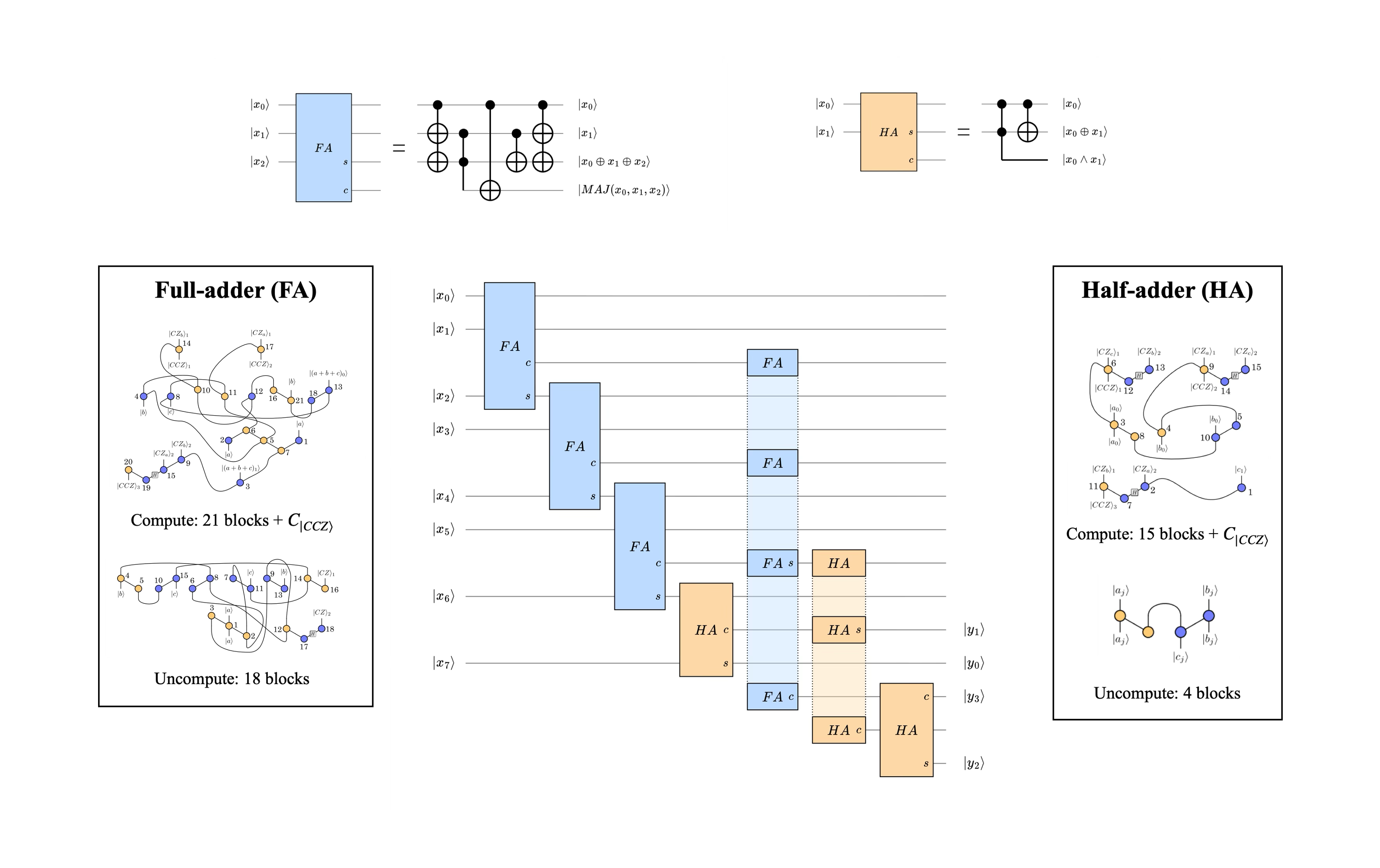}
\caption{Reduced active volume counts for adder segments used in Hamming weight computation.
We show an example circuit for computing the Hamming weight of an 8-qubit register with the circuit diagram adapted from Ref.~\cite{kan2024resource-optimized} and method from \cite{gidney2018halving}. This circuit comprises full adders (FA) and half adders (HA).
For both, their corresponding OZX diagrams (from Ref.~\cite{litinski2022activevolume}) have been compressed to reduce the active volume.
We note $C_{\ket{\text{CCZ}}}$ is the cost of the CCZ resource state, which is estimated to cost around 35 logical blocks using the two-stage distillation protocol described in Ref.~\cite{litinski2022activevolume}.
}
\label{fig:hw_adder_av_counts}
\end{figure}

\begin{figure}[H]
\centering
\includegraphics[trim=10 10 10 10,clip,width=0.9\textwidth]{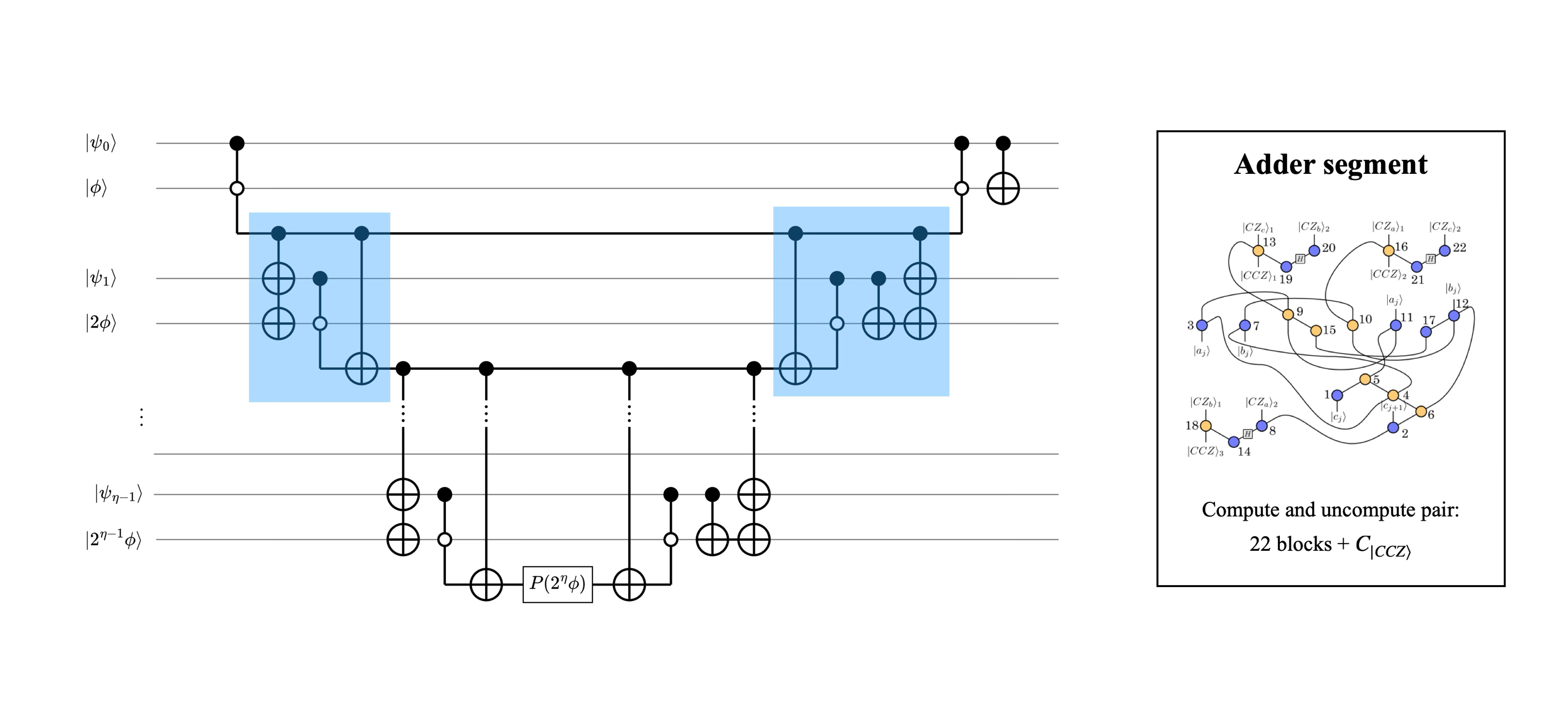}
\caption{Reduced active volume counts for adder-like segments (highlighted in blue) used in generalized phase gradient addition \cite{gidney2019efficient, kan2024resource-optimized, sanders2020compilation}. 
The OZX diagram is from Ref.~\cite{litinski2022activevolume}.
$C_{\ket{\text{CCZ}}}$ is the active volume cost of the CCZ resource state.
In this circuit diagram, we refer to the left half of the circuit (left of the payload rotation) as $A_{\llcorner}$ and the right half (right of the payload rotation) as $A_{\lrcorner}$.
}
\label{fig:phasing_circuit_av_counts}
\end{figure}

\section{Phase fix-up rotations in Hamming weight phasing}
\label{appen:phase-fix-ups}

In this appendix, we derive the local phase corrections required when replacing controlled and directionally controlled towers of equiangular $R_z$ rotations with controlled and directionally controlled HWP circuits, respectively.
Recall that an $R_z$ gate and phase gate ($P$) are defined as
\begin{align}
    R_z(\theta) &:= \begin{bmatrix}
        \mathrm{e}^{-\mathrm{i}\theta/2} & 0 \\
        0 & \mathrm{e}^{\mathrm{i}\theta/2}
    \end{bmatrix},
    &
    P(\theta) &:= \begin{bmatrix}
        1 & 0 \\
        0 & \mathrm{e}^{\mathrm{i}\theta}
    \end{bmatrix},
\end{align}
respectively.

\paragraph{Fix-up in controlled HWP.}

\begin{figure}[ht]
    \centering
    \begin{subfigure}[b]{0.46\textwidth}
        \centering
        \includegraphics[trim=30 30 30 30,clip,width=\textwidth]{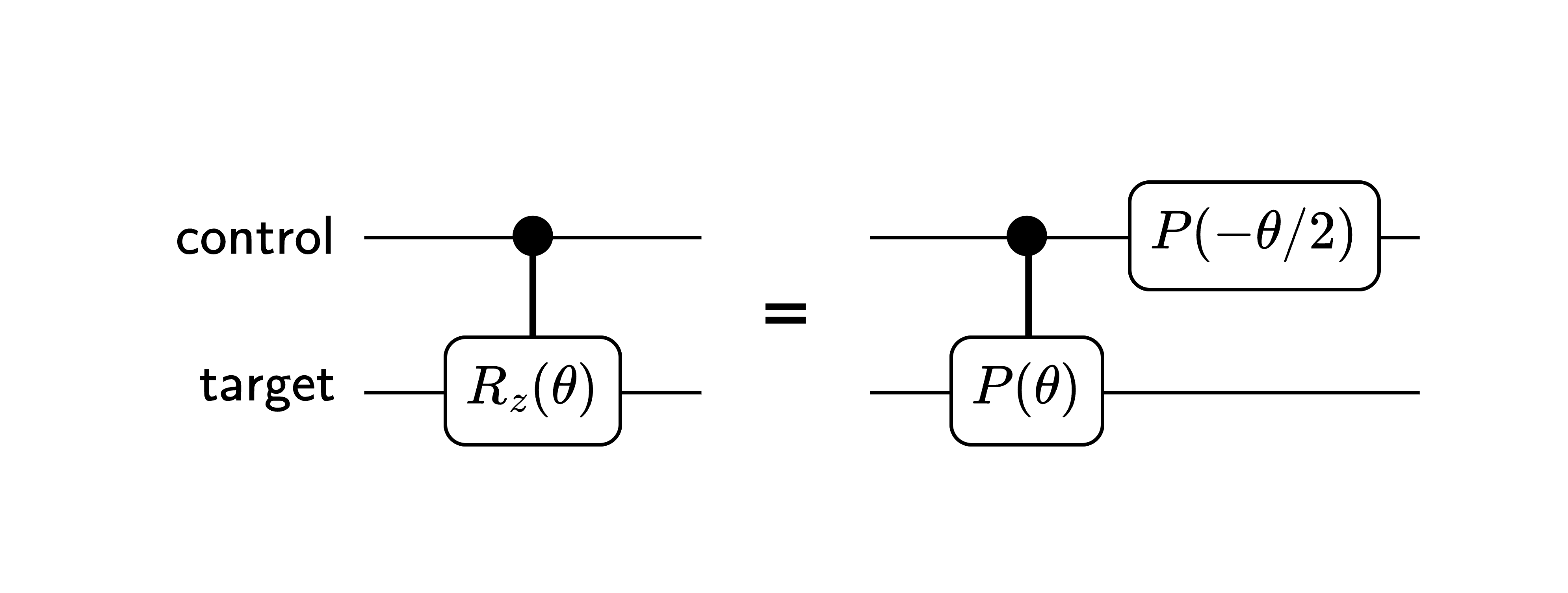}
        \caption{Single target case}
        \label{fig:controlled_rz_controlled_phase}
    \end{subfigure}
    \hfill
    \begin{subfigure}[b]{0.46\textwidth}
        \centering
        \includegraphics[trim=30 30 30 30,clip,width=\textwidth]{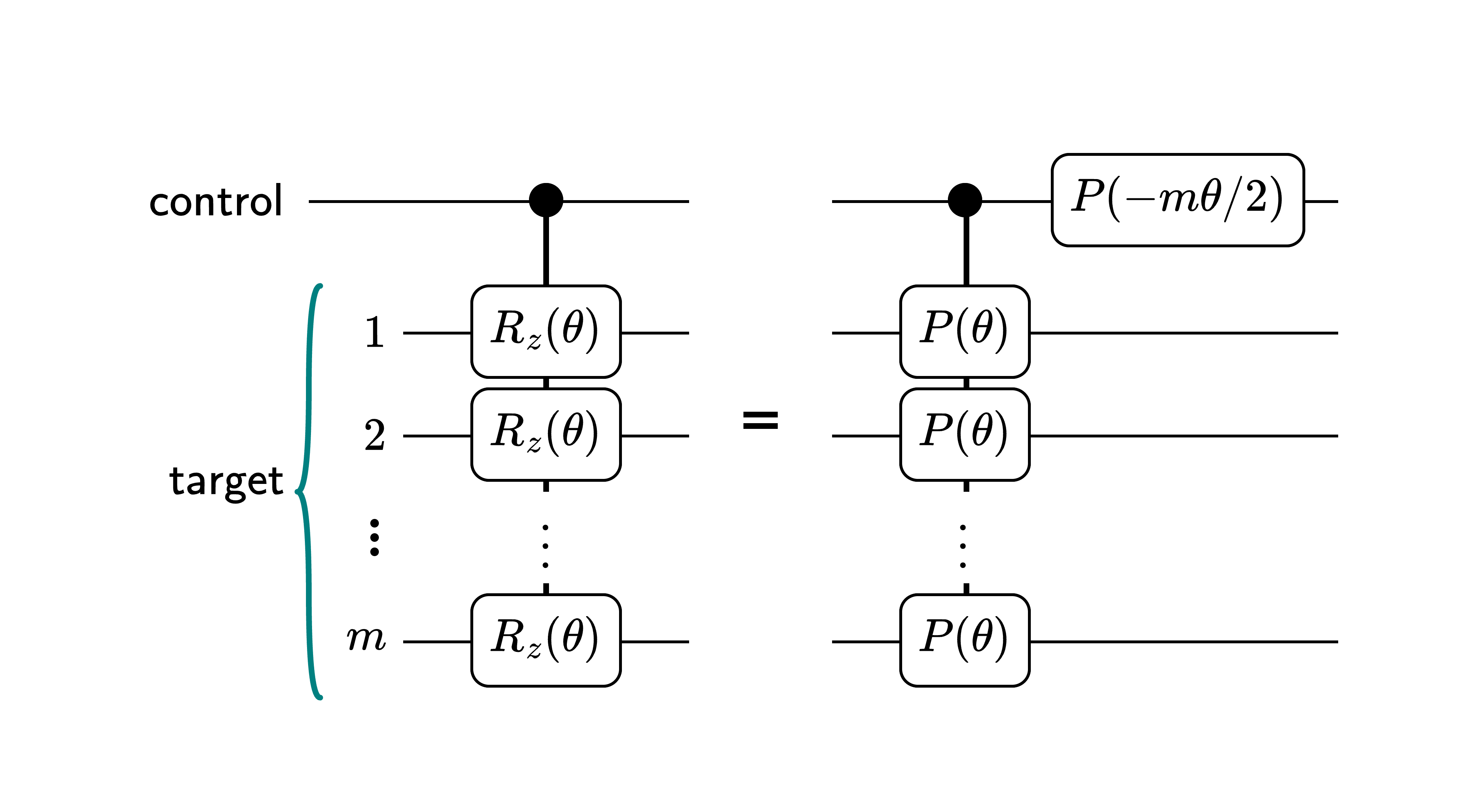}
        \caption{Multi-target case}
        \label{fig:controlled_rz_controlled_phase_multitarget}
    \end{subfigure}
    \caption{Circuit equivalence between a controlled $R_z$ rotation and a controlled phase rotation gate. (a) shows the equivalence in the case of a single target qubit, while (b) shows an extended version of the equivalence with $m$ target qubits. In (b), the phase fix-up gate is repeated $m$ times, which simplifies to a single phase gate.}\label{fig:rz_phase_controlled}
\end{figure}

A controlled $R_z$ and a controlled phase rotation can be made equivalent by applying a phase fix-up as shown in~\cref{fig:controlled_rz_controlled_phase}.
In the context of this work, we are substituting a controlled tower of $R_z$ rotations acting on $L^2$ qubits with a controlled implementation of Hamming weight phasing. The controlled version of the original HWP circuit using GPGA, in which the payload rotation is a phase gate, synthesizes the controlled multi-target phase gate. This equivalence is shown in~\cref{fig:controlled_rz_controlled_phase_multitarget}. 
This leaves $L^2$ applications of the phase fix-up gate, which accumulates to:
\begin{equation*}
P\bigg(-\frac{L^2\theta}{2}\bigg).
\end{equation*}
In other words, we need to correct by a local phase of $-\theta/2$ for each rotation in the tower, and there are $L^2$ of such rotations.

\paragraph{Fix-up in directionally controlled HWP.}

The original circuit we are trying to substitute with a directionally controlled HWP is the following: a closed-controlled implementation of the (forward) $R_z(\theta)$ tower and an open-controlled implementation of the (inverse) $R_z(-\theta)$ tower. That is, the angles between the two towers are equal in magnitude but opposite in sign.
We can express the two controlled $R_z$ towers in terms of two controlled phase rotation towers of opposite signs by leveraging the multi-target versions of the circuit equivalences~\cref{fig:controlled_rz_controlled_phase} and~\cref{fig:controlled_rz_controlled_phase_open_control}.
Substituting a controlled $R_z(\theta)$ tower with a controlled $P(\theta)$ tower contributes a phase fix-up gate of angle $-L^2\theta/2$ acting on the $\ket{1}$ state of the control qubit, while the inverse case contributes a phase gate of angle $L^2\theta/2$ acting on the $\ket{0}$ state of the control qubit.
These two fix-up gates are equivalent to a single $R_z(-L^2\theta)$ rotation on the control qubit, i.e. $P(-L^2\theta) \cdot X \cdot P(L^2\theta) \cdot X = R_z(-L^2\theta)$.

\begin{figure}[H]
\centering
\includegraphics[trim=20 20 20 20,clip,width=0.6\textwidth]{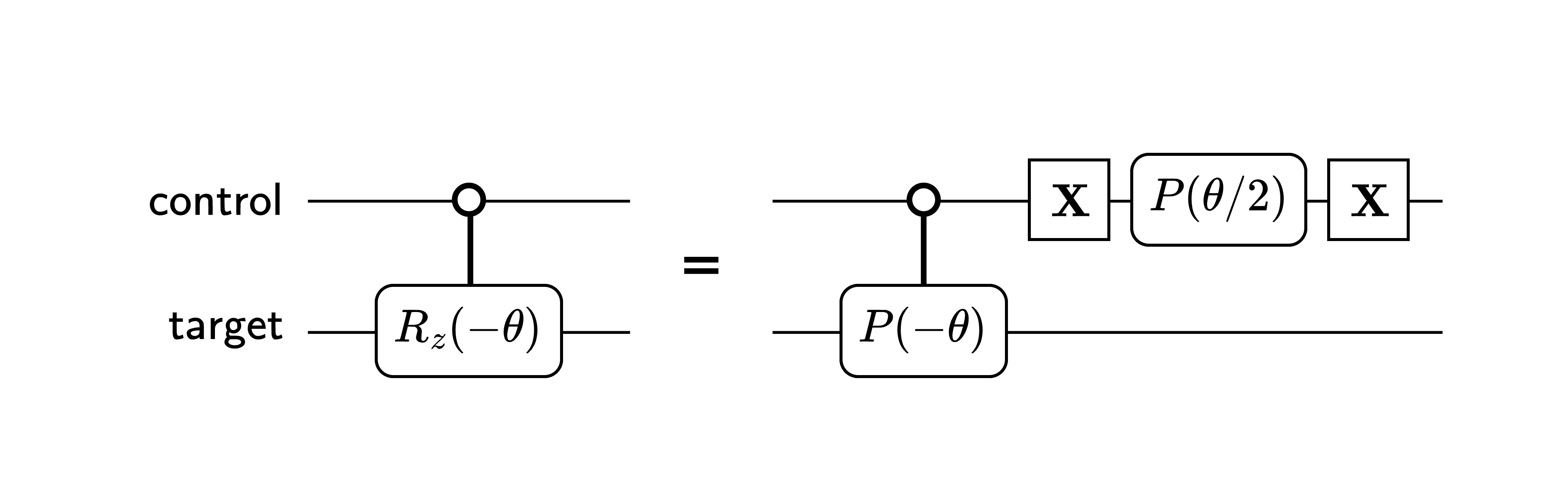}
\caption{Circuit equivalence to substitute an open-controlled $R_z$ gate, with an angle of the opposite sign, with an open-controlled phase gate and a fix-up phase gate.}
\label{fig:controlled_rz_controlled_phase_open_control}
\end{figure}

The directional implementation of HWP implements the forward and inverse towers of phase rotations by conjugating the circuit with open-controlled CNOT fanouts acting on the catalyst register.
In addition, the payload rotation needs to be directionally controlled, applying $P(-2^{\lfloor\log_2(L^2)\rfloor+1}\theta)$ when the control qubit is $\ket{0}$ and $P(2^{\lfloor\log_2(L^2)\rfloor+1}\theta)$ when the control qubit is $\ket{1}$. These two controlled payload rotations can be simplified to two uncontrolled phase rotations:
\begin{figure}[H]
\centering
\includegraphics[trim=20 20 20 20,clip,width=0.9\textwidth]{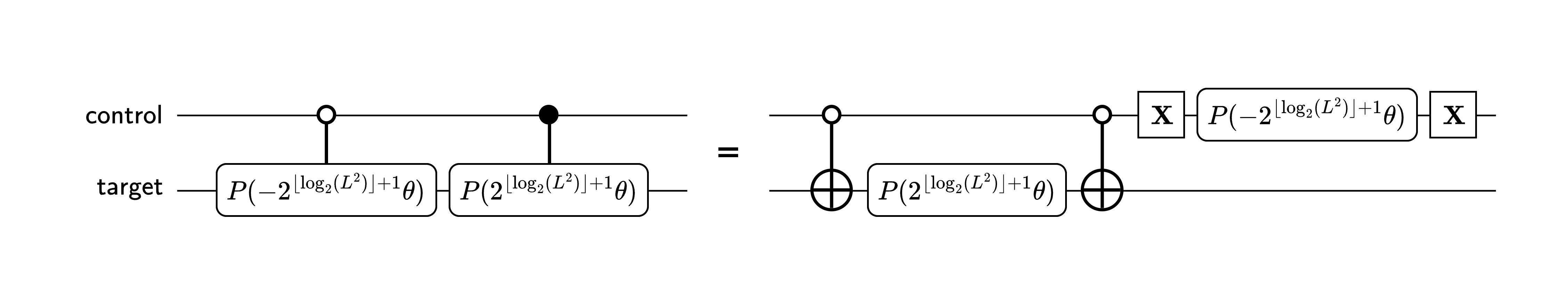}
\caption{Circuit equivalence to replace naive implementation of directionally controlled payload gate in GPGA.}
\label{fig:directionally_controlled_phase_gate}
\end{figure}
\noindent where we replace two controlled phase gates with an uncontrolled payload phase gate and a local phase correction of angle $-2^{\lfloor\log_2(L^2)\rfloor+1}\theta$.

The phase corrections therefore combine into a single local rotation
\begin{equation*}
R_z\left(-L^2\theta
+2^{\lfloor\log_2(L^2)\rfloor+1}\theta\right),
\end{equation*}
up to an overall global phase.

\paragraph{Accumulated phase corrections.}

Within a directionally controlled Trotter step, the local phase corrections introduced by successive directionally controlled HWP calls all act on the same QPE control qubit and therefore commute. Consequently, they may be combined into a single $R_z$ rotation on each of the $k-1$ directionally controlled phase qubits.

Each plaquette HWP contributes a phase correction of
\[
R_z(-L^2\theta_{\mathrm{plaq}}
+2^{\lfloor\log_2(L^2)\rfloor+1}\theta_{\mathrm{plaq}}),
\]
while each interaction HWP contributes
\[
R_z(-L^2\theta_{\mathrm{int}}
+2^{\lfloor\log_2(L^2)\rfloor+1}\theta_{\mathrm{int}}),
\]
where $\theta_{\mathrm{plaq}}$ and $\theta_{\mathrm{int}}$ are the payload rotation angles for the plaquette and interaction HWP circuits, respectively.

The accumulated correction on the $(j+1)$-th QPE phase qubit is therefore
\begin{equation*}
R_z\bigg(2^j r \left[
-L^2\left(\theta_{\mathrm{plaq}}+\theta_{\mathrm{int}}\right)
+2^{\lfloor\log_2(L^2)\rfloor+1}
\left(\theta_{\mathrm{plaq}}+\theta_{\mathrm{int}}\right)
\right]\bigg),
\qquad j\in[0,k-2],
\end{equation*}
where $r$ is the number of Trotter steps.
The accumulated corrections over the $k-1$ directionally controlled phase qubits therefore form a generalized phase gradient.
Since this phase gradient is uncontrolled, it can be synthesized directly using GPGA using $k-1$ Toffoli gates and a single phase rotation, without requiring any additional phase-fixup correction.

\section{Error budgeting details}\label{appen:error_budgets}

We use an effective-Hamiltonian analysis to relate the unitary error, which naturally characterizes the circuit approximation, to the resulting error in the estimated energy.
Any unitary operator can be written in the form $U(s) =\mathrm{e}^{\mathrm{i} H s}$ for some Hermitian operator $H$.
Accordingly, any approximate unitary $\tilde{U}(s)$ can be expressed as $\mathrm{e}^{\mathrm{i} H_\text{eff} s}$ for some effective Hamiltonian $H_\text{eff}$ that will depend on $s$.
This allows us to bound the unitary error in terms of the resulting eigenenergy deviations~\cite{bhatia1984bound,kivlichan2020improved}: 
\begin{align}
\| U(s) - \tilde{U}(s)\| = \| \mathrm{e}^{\mathrm{i} H s } -  \mathrm{e}^{\mathrm{i} H_\text{eff} s } \| \geq \left | \mathrm{e}^{\mathrm{i} E^{(k)} s } - \mathrm{e}^{\mathrm{i} E^{(k)}_\text{eff} s } \right | = 2 \sin \left(\frac{|E^{(k)}-E^{(k)}_\text{eff}|s}{2}\right),
\end{align}
for all $k$-th eigenvalues of $H$ and $H_\text{eff}$. Equivalently:
\begin{align}\label{eq:energy_to_unitary_error}
\abs{E^{(k)} - E^{(k)}_\text{eff}}  \leq \frac{2}{s} \arcsin \left(\frac{\| U(s) - \tilde{U}(s) \|}{2}\right) \leq \epsilon_{\text{eff}}.
\end{align}
We use the right hand inequality of \cref{eq:energy_to_unitary_error} to relate the allowable unitary error from the energy budget:
\begin{equation}
\|V_{\text{QPE}(\tilde{U})} -\tilde{V}_{\text{QPE}(\tilde{U})}\| \leq 2 \sin \left(\frac{\epsilon_{\text{cat}}\tau}{2}\right),\label{eq:energy_to_unitary_error_cat}
\end{equation}
\begin{equation}
 \|U(\tau) - (U_{\tau/r})^r\|
    + \|(U_{\tau/r})^r - (\tilde{U}_{\tau/r})^r \| \leq 2 \sin \left(\frac{\epsilon_{U}\tau}{2}\right).\label{eq:energy_to_unitary_error_evo}
\end{equation}

To simplify the optimization, we perform a first-order approximation of the $\arcsin$ function which introduces negligible error in the regime of interest\footnote{Note that if we were to include error budget for the QFT and window state preparation in QPE implementation error, one could replace $\epsilon_\text{cat}$ by $\epsilon_{V}$ and use a similar approximation $\epsilon_{V}\approx \epsilon_\text{cat} + \epsilon_\text{QFT} + \epsilon_\text{window}$.}:
\begin{subequations}
\begin{align}\label{eq:energy_to_unitary_error_evo_approx}
\epsilon_U &\geq \frac{2}{\tau} \arcsin \left(\frac{ \|U(\tau) - (U_{\tau/r})^r\|
    + \|(U_{\tau/r})^r - (\tilde{U}_{\tau/r})^r \| }{2}\right) \\
&\approx \frac{2}{\tau} \left(\arcsin \left(\frac{ \|U(\tau) - (U_{\tau/r})^r\|}{2}\right) + \arcsin \left(\frac{  \|(U_{\tau/r})^r - (\tilde{U}_{\tau/r})^r \|}{2}\right)  \right)\\
& = \epsilon_{\text{ST}} + \epsilon_{\text{rot}}.
\end{align}
\end{subequations}
This allows for a linear attribution of the energy error to each source:
\begin{subequations}
\begin{align}
 \epsilon &\geq \epsilon_{\text{QPE}} + \epsilon_{\text{ST}} + \epsilon_{\text{rot}} + \epsilon_{\text{cat}}\label{eqn:error_budget}\\
 &= \epsilon\Big[
\delta_{\text{QPE}}
+ (1-\delta_{\text{QPE}})\Big(
\delta_{\text{ST}}
+ (1-\delta_{\text{ST}})\big(
\delta_{\text{rot}} + (1-\delta_{\text{rot}})
\big)
\Big)
\Big]\\
 & =\delta_{\text{QPE}}\epsilon + (1-\delta_{\text{QPE}})\delta_{\text{ST}}\epsilon + (1-\delta_{\text{QPE}})(1-\delta_{\text{ST}})\delta_{\text{rot}}\epsilon + (1-\delta_{\text{QPE}})(1-\delta_{\text{ST}})(1-\delta_{\text{rot}})\epsilon.\label{eqn:deltatoepsilon}
\end{align}
\end{subequations}
By construction, all error allocation parameters $\delta$ lie in the interval $(0,1)$ which ensures the equality in \cref{eqn:error_budget}.
The optimization is therefore performed over the parameters $\{\delta_\text{QPE}, \delta_\text{ST}, \delta_\text{rot}, \tau\}$ using SciPy's basin-hopping algorithm with L-BFGS-B as the local minimizer \cite{scipy}.

We set the search bounds of the evolution time in the optimization to $[\tau_{\min},\tau_{\max}]$.
A conservative choice for $\tau_{\max}$ is $2\pi/\|H_{\text{FH}}\|_1$ to avoid aliasing \cite{O_Brien_2019,baysmidt2026quantumsimulationnanographenestrotter}\footnote{Aliasing occurs when eigenphases are not properly constrained to the interval $[0,2\pi)$, causing them to ``wrap around'' in phase space.}.
However, this bound is overly restrictive for Trotter-based approaches which, unlike qubitization, can exploit aliasing in ground-state energy estimation by increasing the evolution time. 
Increasing the evolution time reduces QPE error but increases Trotter error, leading to an optimal trade-off point that often exceeds $2\pi/\|H_{\text{FH}}\|_1$. 
Practically, as classical methods already provide ground-state energy estimates to some coarse precision, the quantum algorithm is used only to resolve the remaining bits.
Using well-established classical techniques, Ref.~\cite{leblanc2015solutions} provides benchmark zero-temperature estimates from 
AFQMC, DMRG, fixed-node QMC, and DMET that
already form a relatively narrow but still non-negligible band in the challenging regime $U/t=8$, and average electron density per site of $0.875$: DMRG gave $E/L_xL_yt\approx-0.759(4)$, and fixed-node VMC gave $-0.749(2)$. 
Thus, at this benchmark point, the spread is approximately $0.02t$ per site.
To remain conservative, we take the accuracy achievable with these established classical methods, \footnote{We note that classical accuracy has improved significantly in recent years through advances such as VAFQMC, neural quantum states, and tensor-backflow methods~\cite{sorella2023systematically,Levy2024,roth2025superconductivity,liang2026investigating,Viteritti2026}; however, we retain the conservative estimate here.} to be about $5\%\,t$ per site. 
We set the optimizer's upper bound on the evolution time based on this classical accuracy, i.e. $\tau_{\max} = 2\pi/ (0.05 t L_x 
L_y)$.
While the optimal time often exceeds $2\pi/\|H_{\text{FH}}\|_1$, it remains well within the $\tau_{\max}$ limit imposed by classical precision.

\subsection{Suzuki-Trotter error methodology}\label{appen:trotter_error}

The goal of this section is to generalize the analytic commutator bounds of Ref.~\cite{campbell2022early} to arbitrary second-order Suzuki product formulas, allowing different Hamiltonian decompositions, Trotter orderings, lattice geometries, and hopping ranges.
Rather than evaluating commutator norms on exponentially large Hilbert-space operators, our objective is to estimate the Trotter error using only operations acting on matrices whose dimension scales with the number of fermionic modes.
This provides an automated procedure for evaluating commutator bounds for a broad class of Hubbard Hamiltonians.

To this end, we represent the one- and two-body components of the Hubbard Hamiltonian by
\begin{align}
T(\Lambda) &= \sum_{jk}\sum_{\sigma=\uparrow,\downarrow}\Lambda_{jk}c^\dagger_{j\sigma}c_{k\sigma},\\
U(v) &= \sum_k v_k Z_{k\uparrow}Z_{k\downarrow},
\end{align}
where $Z_{k\sigma}=1-2c^\dagger_{k\sigma}c_{k\sigma}$, $\Lambda$ is the hopping matrix, and $v$ specifies the onsite interaction strengths.

Consider a second-order Suzuki product formula composed of Hamiltonians $H_1,H_2,\ldots,H_L$ each of which is represented by a pair $(\Lambda_i,v_i)$.
The corresponding product formula is
\begin{equation}
S_2(s) = \mathrm{e}^{\mathrm{i}H_1 s/2}\mathrm{e}^{\mathrm{i}H_2 s/2} \cdots \mathrm{e}^{\mathrm{i}H_{L-1} s/2} \mathrm{e}^{\mathrm{i}H_{L} s} \mathrm{e}^{\mathrm{i}H_{L-1} s/2} \cdots \mathrm{e}^{\mathrm{i}H_2 s/2}\mathrm{e}^{\mathrm{i}H_1 s/2}
\end{equation}
whose error is bounded by the telescopic commutator expansion
\begin{equation}
\label{eq:telescope0}
\left\| \vphantom{\sum}S_2(s) - \mathrm{e}^{\mathrm{i}\sum_k H_k s} \right\| \leq \abs{s}^3\sum_{k=1}^L \left( \frac{1}{12} \left\|\left[ \left[\vphantom{\sum} \mathcal{L}_k, H_k\right], \mathcal{L}_k\vphantom{\sum_k} \right]\right\| + \frac{1}{24} \left\|\left[\left[\vphantom{\sum}\mathcal{L}_k, H_k \right], H_k\vphantom{\sum_k} \right] \right\|\right)\, , 
\end{equation}
where
\begin{align}
\mathcal L_k=\sum_{m>k}H_m
\end{align}
is the accumulated Hamiltonian following $H_k$ in the Suzuki ordering.

The problem therefore reduces to estimating the two classes of nested commutators appearing in~\cref{eq:telescope0} using only the matrices $\Lambda_i$ and vectors $v_i$.
To accomplish this, we introduce a generic double commutator that encompasses every contribution to the telescopic bound and derive matrix-based expressions and norm bounds for each of its constituent terms.

The two nested commutators appearing in~\cref{eq:telescope0} differ only in the choice of their three Hamiltonian arguments. Since each Hamiltonian is specified by a matrix-vector pair $(\Lambda,v)$, it is sufficient to derive a bound for the generic operator
\begin{equation}\label{eqn:defn_g}
G(A,B,C, a, b, c) \coloneqq \left[ \vphantom{\sum_k}\left[\vphantom{\sum} T(A) + U(a),\;  T(B) + U(b) \right], \; T(C) + U(c) \right] 
\end{equation}
with the matrices $A,B,C$ and vectors $a,b,c$.

\begin{lemma}\label{lem:G_bound}
Let $A$, $B$ and $C$ be hopping matrices and let $a$, $b$ and $c$ be interaction vectors. Then $G(A,B,C, a, b, c)$ as defined in~\cref{eqn:defn_g} is norm bounded by
\begin{equation*}
\begin{multlined}
\|G(A,B,C,a,b,c)\| \leq
2\left\|
T_\sigma\!\left(
[[A,B],C]
+4\{A,D_bD_c\}
-4\{B,D_aD_c\}
\right)
\right\|
+\\
8\left\|
T_\sigma\!\left(
D_bAD_c+D_cAD_b
-D_aBD_c-D_cBD_a
\right)
\right\|
+
4\sum_j
\left\|
\left[
b_jQ_{j\sigma}(A)-a_jQ_{j\sigma}(B),
T_\sigma(C)
\right]
+c_jQ_{j\sigma}([A,B])
\right\|\\
+
8\sum_j
\|Q_{j\sigma}(C)\|
\,
\|b_jQ_{j\sigma}(A)-a_jQ_{j\sigma}(B)\|.
\end{multlined}
\end{equation*}
Here $D_x$ denotes the diagonal matrix with entries $(D_x)_{jk}=x_j\delta_{jk}$, $\{\cdot,\cdot\}$ denotes the matrix anticommutator, and
\begin{equation*}
Q_{j\sigma}(A) \coloneqq \sum_k
\left(
A_{jk}c^\dagger_{j\sigma}c_{k\sigma}
-
A_{kj}c^\dagger_{k\sigma}c_{j\sigma}
\right)
\end{equation*}
is the star operator associated with the $j$th site.
\end{lemma}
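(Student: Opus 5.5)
The plan is to expand $G$ by bilinearity into its eight pieces, $[[X_1,X_2],X_3]$ with $X_1\in\{T(A),U(a)\}$, $X_2\in\{T(B),U(b)\}$, $X_3\in\{T(C),U(c)\}$. Each piece is then reduced to a quadratic fermionic operator, possibly multiplied by $Z$'s on the two spin species. Three algebraic identities do most of the work:
\[
[T(A),T(B)]=T([A,B]),
\]
which holds per spin sector because quadratic forms close under commutation;
\[
[U(a),U(b)]=0;
\]
and the mixed commutator $[T(A),U(b)]$. For the last one, use $[c^\dagger_{j\sigma}c_{k\sigma},Z_{l\sigma}]=2(\delta_{lk}-\delta_{lj})c^\dagger_{j\sigma}c_{k\sigma}$ (up to sign conventions). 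This gives
\[
[T(A),U(b)]=\pm 2\sum_{j,\sigma} b_j\, Q_{j\sigma}(A)\, Z_{j\bar\sigma},
\]
where $\bar\sigma$ is the opposite spin. This is exactly where the star operators $Q_{j\sigma}$ enter.

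Expanding, the term $[[T(A),T(B)],T(C)]$ gives $T([[A,B],C])$. The term $[[T(A),T(B)],U(c)]$ gives $c_jQ_{j\sigma}([A,B])Z_{j\bar\sigma}$. The terms $[[T(A),U(b)]-[T(B),U(a)],T(C)]$ give $[b_jQ_{j\sigma}(A)-a_jQ_{j\sigma}(B),T_\sigma(C)]Z_{j\bar\sigma}$, plus a term in which $T_{\bar\sigma}(C)$ hits $Z_{j\bar\sigma}$, producing $(b_jQ_{j\sigma}(A)-a_jQ_{j\sigma}(B))\,Q_{j\bar\sigma}(C)$. The term with $U(c)$ in the third slot gives $[Q_{j\sigma}(A)Z_{j\bar\sigma},U(c)]$. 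In that last computation the $Z_{j\bar\sigma}$ commutes with everything except the same-spin $Z$'s, and $[Q_{j\sigma}(A),\sum_l c_l Z_{l\sigma}Z_{l\bar\sigma}]$ again produces quadratic terms weighted by differences of $c$. After re-expanding the products $Z_{j\bar\sigma}Z_{l\bar\sigma}$, these split into two parts. One part is a genuinely one-body part, which I would rewrite as $T_\sigma$ of a matrix and show equals the $4\{A,D_bD_c\}-4\{B,D_aD_c\}$ correction and the $D_bAD_c+D_cAD_b-\cdots$ term. The other part has $Z$'s multiplying and is handled separately.

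For the norm bounds, the two spin sectors enter symmetrically, so $T(M)=T_\uparrow(M)+T_\downarrow(M)$ and triangle inequality gives the factor 2 in front of the $T_\sigma$ terms. Terms of the form $X_{j\sigma}Z_{j\bar\sigma}$, where $X_{j\sigma}$ acts only on spin $\sigma$, are bounded by $\|X_{j\sigma}\|$ because $Z$ is a unitary on the other sector commuting with $X$. Terms $(\cdot)_{j\sigma}\,Q_{j\bar\sigma}(C)$ factor into the product of norms because they act on different spin sectors and the operators are even. This produces the $4\sum_j\|\cdots\|$ and $8\sum_j\|Q_{j\sigma}(C)\|\,\|\cdots\|$ terms, with the constants coming from the factor 2 in each $[\,\cdot\,,Z]$ and the spin sum.

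The main obstacle I expect is the $U(c)$-in-the-third-slot computation and the bookkeeping that regroups it into the one-body matrices $\{A,D_bD_c\}$ and $D_bAD_c$. It is hard to get all the signs and factors of 2 and 4 exactly right, and to decide which pieces can legitimately be combined inside a single norm rather than separated by a triangle inequality. I would first verify the regrouping numerically on a two-site example with random $A,B,C,a,b,c$, comparing the exact operator identity for $G$ against the claimed decomposition before the norm step. Only after that check would I apply the triangle inequalities.
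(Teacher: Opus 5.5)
Your expansion into the six surviving terms matches the paper's proof. So do the identity $[T(A),T(B)]=T([A,B])$, the star-operator form of $[T(A),U(b)]$, and your treatment of the terms with $T(C)$ and of $[[T(A),T(B)],U(c)]$. The gap is the $[[T(A),U(b)],U(c)]$ term, which is where the lemma's second norm comes from, and you attribute its pieces incorrectly. Carrying your own computation through, using $Z_{j\bar\sigma}^2=1$, gives
\[
[[T(A),U(b)],U(c)]=4T\!\left(\{A,D_bD_c\}\right)-4\sum_{\sigma}\sum_{j,k}\left(D_bAD_c+D_cAD_b\right)_{jk}\,Z_{j\bar\sigma}Z_{k\bar\sigma}\,c^\dagger_{j\sigma}c_{k\sigma}.
\]
The genuinely one-body part is therefore only the anticommutator term. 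The matrix $D_bAD_c+D_cAD_b$ is exactly the coefficient of the $Z$-dressed part that you set aside to be ``handled separately'', and you give no mechanism for bounding it. The stated bound has no spare slot for a separately bounded remainder. Expanding $Z_{j\bar\sigma}Z_{k\bar\sigma}$ via $Z=1-2n$ to force that matrix into a one-body $T$ does not help either: it just moves the problem into an $n_{\bar\sigma}$-dependent remainder that still needs an argument.

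The missing idea is that the dressed operator is unitarily equivalent to a one-body operator. Take the diagonal unitary $V=\prod_l(-1)^{n_{l\uparrow}n_{l\downarrow}}$. A hop $c^\dagger_{j\sigma}c_{k\sigma}$ with $j\neq k$ moves one $\sigma$-fermion from $k$ to $j$ and leaves the $\bar\sigma$ occupations unchanged. So the diagonal entries of $V$ before and after the hop differ by the factor $(-1)^{n_{j\bar\sigma}+n_{k\bar\sigma}}$, which gives
\[
V^\dagger c^\dagger_{j\sigma}c_{k\sigma}V=Z_{j\bar\sigma}Z_{k\bar\sigma}\,c^\dagger_{j\sigma}c_{k\sigma},
\]
and this holds trivially for $j=k$ as well. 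Hence the dressed part equals $-4V^\dagger T(D_bAD_c+D_cAD_b)V$, exactly as in the paper. After subtracting the $(B,a)$ counterpart, the anticommutators are one-body and go into the first norm alongside $T([[A,B],C])$. With $M=D_bAD_c+D_cAD_b-D_aBD_c-D_cBD_a$, unitary invariance gives $4\|V^\dagger T(M)V\|=4\|T(M)\|\le 8\|T_\sigma(M)\|$. Without this step you cannot reduce that contribution to a mode-sized matrix norm, which is the whole purpose of the lemma. Your planned two-site numerical check would expose the misattribution, but it would not supply this argument.
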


The significance of~\cref{lem:G_bound} is that every quantity on the right-hand side can be evaluated using only matrices of dimension equal to the number of fermionic modes.
In particular, the norm of $T_\sigma(A)$ is determined by the spectrum of $A$, while the norm of $Q_{j\sigma}(A)$ is determined by the spectrum of the matrix $D_jA-AD_j$.
Consequently, the many-body commutator bound is reduced to a sequence of polynomial-cost matrix operations.

For the $k$'th contribution to the telescopic bound, the arguments $(B,b)$ are identified with the Hamiltonian parameters $(\Lambda_k,v_k)$, while $(A,a)$ correspond to the accumulated Hamiltonian parameters
\begin{equation}
(A,a)=\left(\sum_{m>k}\Lambda_m,\sum_{m>k}v_m\right).
\end{equation}
The two nested commutators in~\cref{eq:telescope0} are then obtained by choosing $(C,c)=(A,a)$ for $[[\mathcal{L}_k,H_k],\mathcal{L}_k]$ and $(C,c)=(B,b)$ for $[[\mathcal{L}_k,H_k],H_k]$.
Applying~\cref{lem:G_bound} therefore yields every contribution required to evaluate the second-order Suzuki error bound.
The remainder of this appendix is a proof of the technical result.

\begin{proof}
We begin by decomposing $G$ into simpler nested commutators according to the one- and two-body components of the Hamiltonians.
Expanding the nested commutator using linearity isolates the distinct combinations of one- and two-body Hamiltonians that must be bounded. Since $[U(a),U(b)]=0$, only six terms remain,
\begin{equation}
\begin{multlined}
G(A,B,C,a,b,c)
=\left[\left[T(A),T(B)\right],T(C)\right] 
+\left[\left[T(A),U(b)\right],T(C)\right] 
-\left[\left[T(B),U(a)\right],T(C)\right] \\
+\left[\left[T(A),T(B)\right],U(c)\right] 
+\left[\left[T(A),U(b)\right],U(c)\right] 
-\left[\left[T(B),U(a)\right],U(c)\right]. 
\end{multlined}
\end{equation}

The first term is simply
\begin{equation}
\left[\left[T(A),T(B)\right],T(C)\right]
=
T([[A,B],C]),
\end{equation}
using the identity $[T(A),T(B)]=T([A,B])$.

Using the canonical anticommutation relations, the number operator
$n_{k\sigma}\coloneqq c^\dagger_{k\sigma}c_{k\sigma}$ satisfies
\begin{equation}
[n_{k\sigma},c^\dagger_{j\sigma}c_{\ell\sigma}]
=
(\delta_{kj}-\delta_{k\ell})
c^\dagger_{j\sigma}c_{\ell\sigma}.
\label{eq:number_hopping_commutator}
\end{equation}
Equivalently, $[n_{k\sigma},c^\dagger_{k\sigma}c_{j\sigma}]= c^\dagger_{k\sigma}c_{j\sigma}$ and $[n_{k\sigma},c^\dagger_{j\sigma}c_{k\sigma}]=-c^\dagger_{j\sigma}c_{k\sigma}$.
Since $Z_{k\sigma}=1-2n_{k\sigma}$, \cref{eq:number_hopping_commutator} gives
\begin{subequations}
\begin{align}
[T(A),U(b)]
&=
2\sum_{jk,\sigma}
A_{jk}
\left(
b_jZ_{j\bar{\sigma}}
-
b_kZ_{k\bar{\sigma}}
\right)
c^\dagger_{j\sigma}c_{k\sigma}\\
&=
2\sum_{j,\sigma}
b_jZ_{j\bar{\sigma}}Q_{j\sigma}(A).
\end{align}
\end{subequations}
Hence
\begin{align}
[[T(A),T(B)],U(c)]
&=
2\sum_{j,\sigma}
c_jZ_{j\bar{\sigma}}Q_{j\sigma}([A,B]),\\
[[T(A),U(b)],T(C)]
&=
2\sum_{j,\sigma}b_j
\left(
Z_{j\bar{\sigma}}
\left[
Q_{j\sigma}(A),T_\sigma(C)
\right]
-
2Q_{j\sigma}(A)Q_{j\bar{\sigma}}(C)
\right).
\end{align}
Subtracting the corresponding expression with $(A,b)\leftrightarrow(B,a)$ gives
\begin{equation}
\begin{multlined}
[[T(A),U(b)],T(C)] - [[T(B),U(a)],T(C)]=
2\sum_{j,\sigma}
Z_{j\bar{\sigma}}
\left[
b_jQ_{j\sigma}(A)-a_jQ_{j\sigma}(B),
T_\sigma(C)
\right] \\
-4\sum_{j,\sigma}
\left(
b_jQ_{j\sigma}(A)-a_jQ_{j\sigma}(B)
\right)
Q_{j\bar{\sigma}}(C).
\end{multlined}
\end{equation}

Finally, letting $D_x$ denote the diagonal matrix with entries
$(D_x)_{jk}=x_j\delta_{jk}$, there exists a unitary $V$ such that
\begin{equation}
\begin{multlined}
[[T(A),U(b)],U(c)]
= 4T\!\left(\{A,D_bD_c\}\right) -
4V^\dagger
T\!\left(D_bAD_c+D_cAD_b\right)
V,
\end{multlined}
\end{equation}
and hence
\begin{align}
&[[T(A),U(b)],U(c)]
-
[[T(B),U(a)],U(c)]
\nonumber\\
&=
4T\!\left(
\{A,D_bD_c\}
-
\{B,D_aD_c\}
\right)
\nonumber\\
&\quad-
4V^\dagger
T\!\left(
D_bAD_c+D_cAD_b
-
D_aBD_c-D_cBD_a
\right)
V.
\end{align}

Collecting these expressions yields
\begin{equation}
\begin{multlined}
G(A,B,C,a,b,c)
=
\sum_\sigma
T_\sigma\!\left(
[[A,B],C]
+
4\{A,D_bD_c\}
-
4\{B,D_aD_c\}
\right)\\
-4\sum_\sigma
V^\dagger
T_\sigma\!\left(
D_bAD_c+D_cAD_b
-
D_aBD_c-D_cBD_a
\right)
V \\+
2\sum_{j,\sigma}
Z_{j\bar{\sigma}}
\left(
\left[
b_jQ_{j\sigma}(A)-a_jQ_{j\sigma}(B),
T_\sigma(C)
\right]
+
c_jQ_{j\sigma}([A,B])
\right)\\
-
4\sum_{j,\sigma}
\left(
b_jQ_{j\sigma}(A)-a_jQ_{j\sigma}(B)
\right)
Q_{j\bar{\sigma}}(C).
\end{multlined}
\end{equation}

Applying the triangle inequality, unitary invariance of the spectral norm, $\|Z_{j\sigma}\|=1$, and submultiplicity gives the claimed bound.
\end{proof}

\section{Data for quantum resource estimates}\label{appen:qre_data}

This section includes additional data on the reported quantum resource estimates for reproducibility.

Details of the baseline circuit recorded for comparison in the results are described in~\cref{tab:baseline_vs_improved_description}.
For both implementations, we first consider a single batch of rotations for Hamming-weight phasing; the impact of varying the number of batches is investigated later in this section.
\begin{table}[H]
\centering
\begin{tabularx}{\linewidth}{|X|X|X|}
\hline
\multicolumn{1}{|c|}{\textbf{Component}} &
\multicolumn{1}{|c|}{\textbf{Baseline}} &
\multicolumn{1}{|c|}{\textbf{This work}} \\
\hline
Phase estimation &
Entanglement-free QPE~\cite{higgins2007entanglement} &
Sine-windowed QPE~\cite{babbush2018encoding} \\
\hline
Trotter term ordering &
IPG ordering \cite{campbell2022early, kan2024resource-optimized} &
PIG ordering \\
\hline
Site Jordan-Wigner ordering &
Conventional site enumeration (no plaquette-locality optimization)\cite{kan2024resource-optimized}  &
Plaquette-locality-optimized enumeration requiring fSWAP networks only for gold plaquette terms \\
\hline
Trotter circuit &
Trotter merging within a given query; no Trotter term merging between queries \cite{campbell2022early, kan2024resource-optimized} &
Trotter term merging within a given query and between queries \\
\hline
fSWAP compilation &
Naive fSWAP compilation &
Active-volume-optimized fSWAP compilation \\
\hline
Gate decompositions &
Standard gate decompositions &
OZX-compressed circuit representation, e.g., using PPRs \\
\hline
\end{tabularx}
\caption{Comparison between the baseline circuit and the circuit presented in this work.}
\label{tab:baseline_vs_improved_description}
\end{table}

We verified the baseline circuit construction by reproducing the Toffoli count derived in Ref.~\cite{kan2024resource-optimized}.
We implemented the equations from Kan et al. assuming a single batch of rotations for Hamming weight phasing and multiplying the cost per query by the number of queries from the error budget optimization. The two sets of counts, as shown in~\cref{fig:baseline_toff_comparison}, agree perfectly across all lattice sizes.
We then constructed the improved circuit by replacing baseline subroutines with their active volume optimized counterparts, updating circuit parameters such as rotation synthesis errors, and implementing appropriate controlled query structures based on the choice of QPE.

\begin{figure}[H]
\centering
\includegraphics[width=0.5\textwidth]{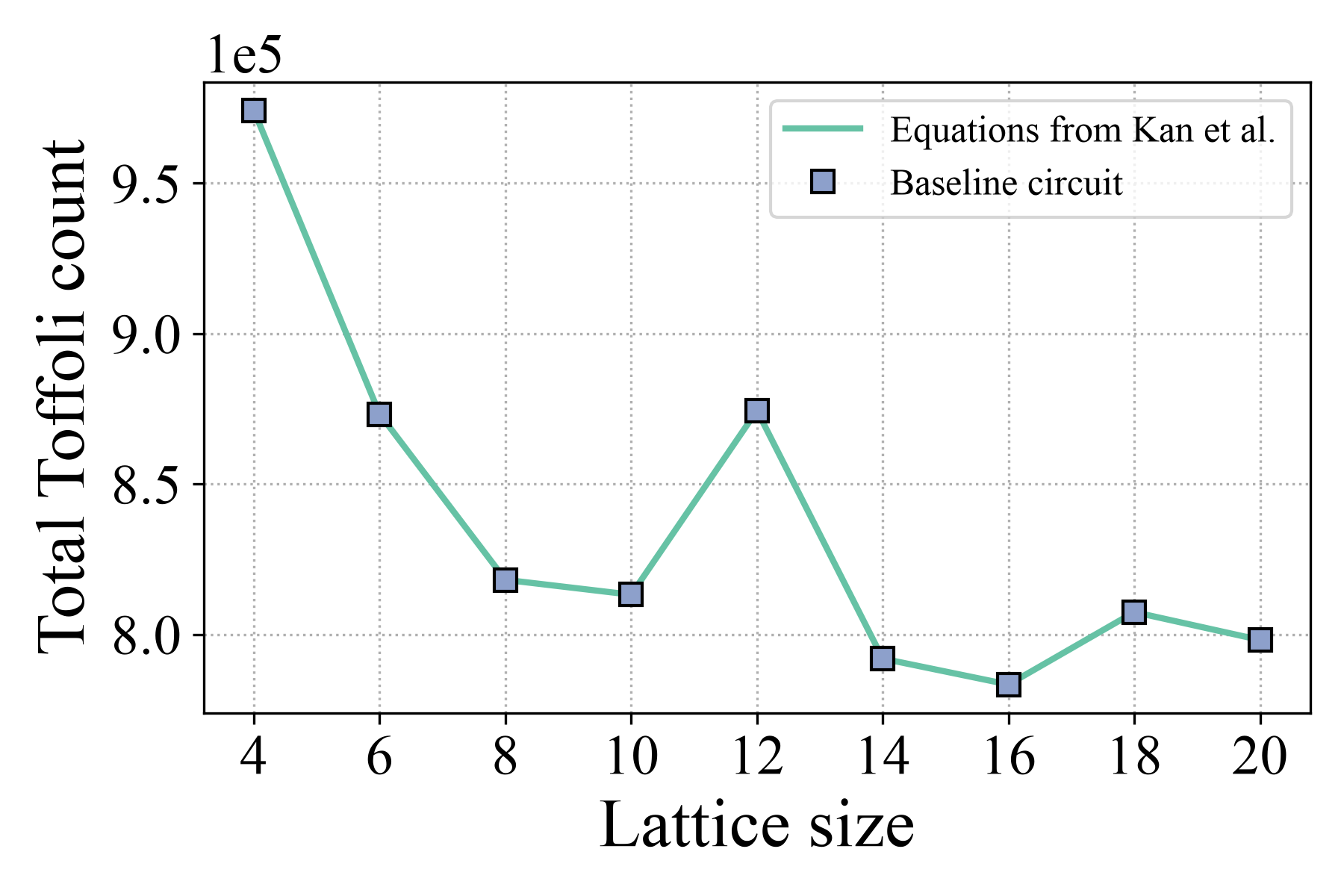}
\caption{Comparison of the expression for the total number of Toffoli gates in Ref.~\cite{kan2024resource-optimized}
    and the Toffoli gates counted from the corresponding baseline circuit constructed using Workbench \cite{psiquantum_qdk}. 
    The increase in cost at lattice size 12 is due to the error optimizer converging to a suboptimal solution. This is exacerbated by the discretized optimization landscape of entanglement-free QPE, since the number of QPE queries is restricted to multiples of 6 to achieve the Holevo variance constant factor. Compilation improvements in this work are built upon this circuit implementation.
}
\label{fig:baseline_toff_comparison}
\end{figure}

To produce our resource estimates we optimize the error parameters within the error budget constraint, the data for batch sizes $\beta = 1,2$ are reported in~\cref{tab:error_param_all_lattice_sizes}.

\begin{table}[H]
\centering
\begin{subtable}{\linewidth}
\centering
\caption{$\beta=1$ batch}
\setlength{\tabcolsep}{3.2pt}
\resizebox{\linewidth}{!}{%
\begin{tabular}{
    |c |
    c c c c |
    c c c c c |
    c c |c c|
}
\hline
$\boldsymbol{L}$
&
\multicolumn{4}{c|}{\textbf{Optimization parameters}}
&
\multicolumn{5}{c|}{\textbf{Energy error contributions}}
&
\multicolumn{2}{c|}{\textbf{Circuit params.}}
&
\multicolumn{2}{c|}{\textbf{Resources}}
\\

&
$\tau$
&
$\delta_{\mathrm{BD}}$
&
$\delta_{\mathrm{ST}}$
&
$\delta_{\mathrm{RS}}$
&
$\epsilon$
&
$\epsilon_{\mathrm{QPE}}$
&
$\epsilon_{\mathrm{TS}}$
&
$\epsilon_{\mathrm{RS}}$
&
$\epsilon_{\mathrm{cat}}$
&
$N$
&
$r$
&
$n_{\mathrm{Toff}} + n_T/2$
&
$n_{\mathrm{qubits}}$
\\

\hline

$4$
& 0.1210
& 0.9330
& 0.2990
& 0.1210
& 0.0816
& 0.0514
& 0.0282
& 0.0006
& 0.0014
& 256
& 12
& $5.94 \times 10^{5}$
& 73
\\
\hline

$6$
& 0.1141
& 0.5916
& 0.9469
& 0.5741
& 0.1836
& 0.1086
& 0.0710
& 0.0022
& 0.0017
& 128
& 11
& $5.01 \times 10^{5}$
& 134
\\
\hline

$8$
& 0.6912
& 0.9596
& 0.5603
& 0.1092
& 0.3264
& 0.2256
& 0.0967
& 0.0023
& 0.0018
& 64
& 12
& $4.55 \times 10^{5}$
& 221
\\
\hline

$10$
& 0.6527
& 0.9512
& 0.6294
& 0.0742
& 0.5100
& 0.3329
& 0.1685
& 0.0054
& 0.0032
& 64
& 8
& $4.47 \times 10^{5}$
& 327
\\
\hline

$12$
& 0.8003
& 0.9098
& 0.7574
& 0.0838
& 0.7344
& 0.5877
& 0.1279
& 0.0174
& 0.0032
& 32
& 12
& $4.76 \times 10^{5}$
& 462
\\
\hline

$14$
& 0.6643
& 0.9514
& 0.3511
& 0.0756
& 0.9996
& 0.6641
& 0.3192
& 0.0057
& 0.0106
& 32
& 8
& $4.25 \times 10^{5}$
& 617
\\
\hline

$16$
& 0.9700
& 0.5595
& 0.2165
& 0.0599
& 1.3056
& 0.8207
& 0.4638
& 0.0076
& 0.0135
& 32
& 6
& $4.16 \times 10^{5}$
& 802
\\
\hline

$18$
& 0.6118
& 0.9171
& 0.2464
& 0.0497
& 1.6524
& 1.0110
& 0.5882
& 0.0131
& 0.0400
& 32
& 5
& $4.34 \times 10^{5}$
& 1004
\\
\hline

$20$
& 0.6438
& 0.9078
& 0.1782
& 0.0376
& 2.0400
& 1.3134
& 0.6597
& 0.0119
& 0.0550
& 32
& 4
& $4.28 \times 10^{5}$
& 1232
\\

\hline
\end{tabular}}
\end{subtable}

\vspace{1em}

\begin{subtable}{\linewidth}
\centering
\caption{$\beta=2$ batches}
\setlength{\tabcolsep}{3.2pt}
\resizebox{\linewidth}{!}{%
\begin{tabular}{
    |c |
    c c c c |
    c c c c c |
    c c |c c|
}
\hline
$\boldsymbol{L}$
&
\multicolumn{4}{c|}{\textbf{Optimization parameters}}
&
\multicolumn{5}{c|}{\textbf{Energy error contributions}}
&
\multicolumn{2}{c|}{\textbf{Circuit params.}}
&
\multicolumn{2}{c|}{\textbf{Resources}}
\\

&
$\tau$
&
$\delta_{\mathrm{BD}}$
&
$\delta_{\mathrm{ST}}$
&
$\delta_{\mathrm{RS}}$
&
$\epsilon$
&
$\epsilon_{\mathrm{QPE}}$
&
$\epsilon_{\mathrm{TS}}$
&
$\epsilon_{\mathrm{RS}}$
&
$\epsilon_{\mathrm{cat}}$
&
$N$
&
$r$
&
$n_{\mathrm{Toff}} + n_T/2$
&
$n_{\mathrm{qubits}}$
\\

\hline

$4$
& 0.1132
& 0.6699
& 0.9447
& 0.9193
& 0.0816
& 0.0547
& 0.0255
& 0.0013
& 0.0001
& 256
& 12
& $7.38 \times 10^{5}$
& 62
\\
\hline

$6$
& 0.1052
& 0.6462
& 0.9398
& 0.4199
& 0.1836
& 0.1186
& 0.0610
& 0.0016
& 0.0023
& 128
& 11
& $5.75 \times 10^{5}$
& 113
\\
\hline

$8$
& 0.1100
& 0.6842
& 0.9636
& 0.3837
& 0.3264
& 0.2233
& 0.0993
& 0.0014
& 0.0023
& 64
& 12
& $5.05 \times 10^{5}$
& 186
\\
\hline

$10$
& 0.0776
& 0.6206
& 0.9232
& 0.7391
& 0.5100
& 0.3165
& 0.1787
& 0.0110
& 0.0039
& 64
& 8
& $4.74 \times 10^{5}$
& 274
\\
\hline

$12$
& 0.0838
& 0.7999
& 0.8733
& 0.8667
& 0.7344
& 0.5874
& 0.1283
& 0.0161
& 0.0025
& 32
& 12
& $5.01 \times 10^{5}$
& 387
\\
\hline

$14$
& 0.0756
& 0.6592
& 0.9379
& 0.8524
& 0.9996
& 0.6589
& 0.3195
& 0.0180
& 0.0031
& 32
& 8
& $4.40 \times 10^{5}$
& 516
\\
\hline

$16$
& 0.0580
& 0.6552
& 0.9792
& 0.5135
& 1.3056
& 0.8555
& 0.4408
& 0.0048
& 0.0045
& 32
& 6
& $4.31 \times 10^{5}$
& 671
\\
\hline

$18$
& 0.0195
& 0.6378
& 0.8937
& 0.6940
& 1.6524
& 1.0540
& 0.5348
& 0.0441
& 0.0195
& 32
& 5
& $4.45 \times 10^{5}$
& 839
\\
\hline

$20$
& 0.0382
& 0.6370
& 0.9224
& 0.5471
& 2.0400
& 1.2994
& 0.6831
& 0.0314
& 0.0260
& 32
& 4
& $4.36 \times 10^{5}$
& 1029
\\
\hline
\end{tabular}}
\end{subtable}

\caption{
Optimized error parameters, energy-error contributions, QPE queries $N$, Trotter-step counts $r$, the non-Clifford gate count and logical qubit count, $n_\text{qubits}$, for the two batching strategies considered for the 2D square $L\times L$ Fermi-Hubbard ground-state energy algorithm. 
}
\label{tab:error_param_all_lattice_sizes}
\end{table}

\section{Hardware model for runtime estimates}
\label{sec:runtime_data}

The active volume architecture is well-suited for photonic quantum computing. It makes use of a logarithmic number of non-local swap operations per qubit to substantially reduce idle times. In a photonic quantum computer, these non-local operations can be facilitated by optical switching and fiber-delay lines, although they may also be implemented in other architectures. For this work, we adopt the photonic hardware parameters summarized in~\cref{tab:hardware_parameters}, following Refs.~\cite{bombin2021interleaving,litinski2022activevolume,caesura2025faster,heavey2025improved}.

\begin{table}[H]
    \renewcommand{\arraystretch}{1.2}
    \centering
    \begin{tabular}{|c|c|}
        \hline
        \textbf{Hardware parameter} & \textbf{Value} \\
        \hline
        Delay length & $2.5\,\mathrm{km}$ \\
        \hline
        Speed of light in fiber & $\frac{2}{3}c$ \\
        \hline
        Target hardware-induced failure probability ($p_f$)
            & $10\%$ \\
        \hline
        Logical-error suppression parameter ($\alpha$) & $0.5$ \\
        \hline
        Reaction time ($t_{\mathrm{r}}$) & $1.5\times10^{-5}\,\mathrm{s}$ \\
        \hline
        Code-cycle time ($t_{\mathrm{cc}}$) & $10^{-5}\,\mathrm{s}$ \\
        \hline
    \end{tabular}
    \caption{
        Hardware parameters used for the illustrative runtime estimates.
        These parameters were selected to represent a feasible early
        fault-tolerant photonic quantum computer based on
        Ref.~\cite{litinski2022activevolume}.
    }
    \label{tab:hardware_parameters}
\end{table}

Our resource estimates use the block scheduler~\cite{heavey2026scheduler} to assign each lattice-surgery operation to a logical cycle, rather than estimating runtime from operation counts alone.
The resulting schedule requires $\ell$ logical cycles. Given a machine containing $n_t$ logical qubits, including both memory and workspace, we model the logical failure probability per qubit per logical cycle as
\begin{equation}
    p_{\mathrm{L}}(d)=10^{-\alpha d/2}.
\end{equation}
Assuming independent fault locations, we choose the smallest permitted code distance $d$ satisfying
\begin{equation}
    1-
    \left(1-p_{\mathrm{L}}(d)\right)^{n_t\ell}
    \leq p_f,
\end{equation}
where $p_f$ is the allocated probability of hardware-induced failure over the complete computation.

A logical cycle comprises $d$ code cycles, so its duration and the resulting runtime are
\begin{equation}
    t_{\mathrm{lc}}=d\,t_{\mathrm{cc}},
    \qquad
    t_{\mathrm{run}}=\ell d\,t_{\mathrm{cc}}.
\end{equation}
The reaction time $t_{\mathrm{r}}$ is distinct from the code-cycle time $t_{\mathrm{cc}}$.
It characterizes the latency of measurement-dependent classical processing and feed-forward, and therefore constrains the number of sequential reaction layers that can be completed within a logical cycle.

To further reduce runtime, we used an improved version of the scheduler introduced in \cite{heavey2025improved}, which will be described in more detail in an upcoming manuscript \cite{UpcomingBlockScheduler}.
Unlike the original approach, which assembles the schedule from separate circuit segments, this scheduler is fast enough to process entire circuits.
This is achieved by limiting the size of the directed acyclic graph (DAG) used to store the execution order of operations.
While previous work has applied simpler DAG-based approaches to this problem \cite{ruh2025quantum, silva_et_al:LIPIcs.TQC.2024.1}, we use a sliding-window DAG \cite{herzog2025movable}, yielding a substantial reduction in compilation time compared with the implementation described in \cite{heavey2026scheduler}.
Once the DAG exceeds a specified size ($500$ operations in this case) the oldest operations are removed and scheduled.

\end{document}